\documentclass{article}
\usepackage[final]{neurips_2023}

\PassOptionsToPackage{numbers, sort}{natbib}

\usepackage{times}
\usepackage{soul}
\usepackage{url}
\usepackage[hidelinks]{hyperref}
\usepackage[small]{caption}
\usepackage{graphicx}
\usepackage{amsmath,amsfonts,bm}
\usepackage{amssymb}
\usepackage{amsthm}
\usepackage{booktabs}
\usepackage{algorithm}
\usepackage{algorithmic}
\usepackage{cases}
\usepackage{diagbox}
\usepackage{subcaption}
\usepackage{float}
\usepackage{array}
\usepackage{wrapfig}
\usepackage{tablefootnote}

\urlstyle{same}
\usepackage[english]{babel}

\usepackage[utf8]{inputenc} 
\usepackage[T1]{fontenc}    
\usepackage{hyperref}       
\usepackage{url}            
\usepackage{booktabs}       
\usepackage{amsfonts}       
\usepackage{nicefrac}       
\usepackage{microtype}      
\usepackage{xcolor}         

\newtheorem{theorem}{Theorem}

\newcommand{\Qtot}{Q_{\rm{jt}}}








\newcommand{\set}[1]{\ensuremath{\mathcal #1}}

\newcommand{\separator}{
  \begin{center}
    \rule{\columnwidth}{0.3mm}
  \end{center}
}


 \def\11{{\textbf{1}}}






\newcommand{\beq}{\begin{eqnarray*}}
\newcommand{\eeq}{\end{eqnarray*}}
\newcommand{\beqn}{\begin{eqnarray}}
\newcommand{\eeqn}{\end{eqnarray}}
\newcommand{\bemn}{\begin{multiline}}
\newcommand{\eemn}{\end{multiline}}




\newtheorem{defi}{Definition}




\def\N{\mathbb{N}}

\def\R{\mathbb{R}}

\usepackage{amsmath,amssymb,amsfonts,bm}

\def\R{\mathrm{I\kern-0.4ex R}}
\def\N{\mathrm{I\kern-0.4ex N}}
\def\E{\mathrm{I\kern-0.4ex E}}
\newcommand{\Set}[1]{\mathcal{#1}}

\newcommand{\ve}[1]{\bm #1}


\def\obsfun{\sigma}

\title{RiskQ: Risk-sensitive Multi-Agent Reinforcement Learning Value Factorization}

\author{
	Siqi Shen$^{ab}$, Chennan Ma$^{ab}$, Chao Li$^{ab}$, Weiquan Liu$^{ab}$, \\ \textbf{Yongquan Fu$^c$}\thanks{Corresponding author}, \textbf{Songzhu Mei$^c$}, \textbf{Xinwang Liu$^c$}, \textbf{Cheng Wang$^{ab}$}\\
	$^a$Fujian Key Laboratory of Sensing and Computing for Smart Cities, \\
    School of Informatics, Xiamen University (XMU), China\\
    $^b$Key Laboratory of Multimedia Trusted Perception and Efficient Computing, XMU, China\\
    $^c$School of Computer, National University of Defense Technology, China\\ 
	\texttt{\{siqishen,cwang\}@xmu.edu.cn, \{yongquanf,xinwangliu\}@nudt.edu.cn}\\
	\texttt{\{chennanma,chaoli\}@stu.xmu.edu.cn}
}

\begin{document}

\maketitle
\begin{abstract}

Multi-agent systems are characterized by environmental uncertainty, varying policies of agents, and partial observability, which result in significant risks. In the context of Multi-Agent Reinforcement Learning (MARL), learning coordinated and decentralized policies that are sensitive to risk is challenging. To formulate the coordination requirements in risk-sensitive MARL, we introduce the Risk-sensitive Individual-Global-Max (RIGM) principle as a generalization of the Individual-Global-Max (IGM) and Distributional IGM (DIGM) principles. This principle requires that the collection of risk-sensitive action selections of each agent should be equivalent to the risk-sensitive action selection of the central policy. Current MARL value factorization methods do not satisfy the RIGM principle for common risk metrics such as the Value at Risk (VaR) metric or distorted risk measurements. Therefore, we propose RiskQ to address this limitation, which models the joint return distribution by modeling quantiles of it as weighted quantile mixtures of per-agent return distribution utilities. RiskQ satisfies the RIGM principle for the VaR and distorted risk metrics. We show that RiskQ can obtain promising performance through extensive experiments. The source code of RiskQ is available in \url{https://github.com/xmu-rl-3dv/RiskQ}.
  
\end{abstract}
\section{Introduction}

In cooperative multi-agent reinforcement learning (MARL)~\cite{MARLSurvey}, it is important to learn coordinated agent policies to achieve a common goal. However, achieving this goal is challenging due to random rewards, environmental uncertainty, and varying policies among agents. Especially, for scenarios with partial-observability, high-stochastic rewards and state-transitions~\cite{partial}. In order to efficiently learn MARL policies, many researchers have adopted the centralized training with decentralized execution (CTDE)~\cite{CTDE1} paradigm, which offers advantages in terms of learning speed and performance. A popular subset of the CTDE methods is the value factorization category~\cite{VDN,QMIX,QTRAN,qplex,ResQ}.

To learn coordinated policies, value factorization algorithms must ensure that the global argmax operator performed on the global state-action value function yields the same outcome as a set of individual argmax operations performed on per-agent utilities. This requirement for coordination is known as the individual-global-max (IGM) principle~\cite{QTRAN}. The IGM principle takes only the expected return into account, but not the entire distribution of returns that includes potential outcome events. A method that learns the expected return return may fail in high-stochastic environments with extremely high/low rewards but at low probabilities. For example, users may seek for a big win with low probability in finance or avoid suffering from a huge loss on rare occasions in autonomous driving~\cite{distortion,RiskProperties}. Risk refers to the uncertainty of future outcomes in multi-agent systems. By making decisions based on risk, agents can address uncertainty better. Most of the existing MARL value factorization methods do not extensively consider \emph{risk}, which could impact their performance negatively. 

Recently, risk-sensitive reinforcement learning (RL) has achieved significant progress in the single agent domain~\cite{chow14CVAR,chow15-robust}. Instead of optimizing the expectation of return, risk-sensitive RL optimizes a risk measure based on a return distribution. Risk-sensitive value factorization methods should be designed to learn risk-sensitive decentralized policies that are fully consistent with the risk-sensitive centralized counterpart. In order to learn coordinated risk-sensitive policies, the IGM principle needs to be adapted to address cases where expectations are not the only factor. Despite there are a few approaches~\cite{rmix,drima} combing risk-sensitive RL with MARL, how to effectively combine risk-sensitive reinforcement learning with MARL value factorization is still an open question.

In this work, we formulate the coordination requirement in risk-sensitive MARL as \emph{the Risk-sensitive Individual-Global-Max (RIGM) principle}. The principle requires that the optimal joint risk-sensitive action should be equivalent to the collection of each agent’s greedy risk-sensitive actions. The RIGM principle is a generalization of the IGM and the distributional IGM (DIGM) principle. Albeit multiple value factorization methods~\cite{ResQ,qplex,wqmix} have been proposed to learn policies satisfying the IGM or the DIGM principles, they cannot guarantee the RIGM principle. DRIMA~\cite{drima} combines risk-sensitive RL with a value factorization method. However, it does not guarantee the RIGM principle for distorted risk measures~\cite{distortion}. RMIX~\cite{rmix} learns risk-sensitive policies which satisfy the RIGM principle only if the risk metric is the expectation operator. It is unclear how to learn coordinated risk-sensitive decentralized policies which satisfy the RIGM principle for general risk measures.

We build, RiskQ, a risk-sensitive MARL value factorization algorithm that satisfies the RIGM principle for risk metrics $\psi_{\alpha}$, such as VaR (i.e., percentile) and distorted risk measurements~\cite{distortion}, where $\alpha$ is a risk parameter. In RiskQ, each agent acts greedily according to a risk value defined as $\psi_{\alpha}[Z_i]$, where $Z_i$ is a per-agent return distribution utility. RiskQ models the joint return distribution $Z_{jt}$ by combining per-agent return distribution utilities $[Z_i]_{i=1}^n$ with an attention-based mechanism. Specifically, the quantiles $\theta(\omega)$ of the return distribution $Z_{jt}$ is modeled as the weighted sum of the quantiles $\theta_i(\omega)$ of return distribution utilities, where $\omega$ is a quantile sample.

For evaluation, we conduct extensive experiments on risk-sensitive games and the StarCraft II MARL tasks~\cite{SMAC}. The experimental results show that RiskQ can obtain promising results in both risk-sensitive and risk-neutral scenarios. 
\section{Background}

\subsection{Dec-POMDPs}

We consider cooperative multi-agent reinforcement learning (MARL) scenarios which can be modeled as Decentralized Partially Observable Markov Decision Processes (Dec-POMDPs)~\cite{DEC-POMDP}, represented as tuple $G= \langle \Set S, \{\Set U_i\}_{i=1}^N, P, r, \{\Set O_i\}_{i=1}^N, \{\obsfun_i\}_{i=1}^N, N, \gamma \rangle$ for $N$ agents.

$S$ is a finite set of states, and $\Set{U}_i$ is the set of discrete actions available to agent $i$. At state $s^t$, a joint action of all agents is defined as $\ve u^t \in \Set{U}^N = \Set U_1 \times \ldots \times \Set U_N$, with $t$ representing the discrete time step. After performing $\ve u^t$, the environment transitions to a new state $s^{t+1} \in S$, following the transition function $s^{t+1} \sim P(\cdot|s^t, \ve u^t)$, and each agent receives a reward $r^t$ as a result of the state transition. Due to the partial observability of the environment, each agent $i$ receives a local observation $o_i^t \in O_i$, which is drawn from $o_i^t \sim \obsfun^i(\cdot|s^t)$. The discounting factor is denoted by $\gamma \in [0, 1)$. Each agent $i$ maintains a local action-observation history $\tau_i = (O_i \times U_i)^*$, where $*$ represents 0 to $T$ ( $T$ denotes the time step). The global action-observation history is denoted as $\tau \in \mathcal{T}^N:=\tau_1 \times \ldots \times \tau_N$. Each agent acts according to policy $\pi_i(u_i|\tau_i)$, and the joint policy can be represented as $\pi=<\pi_1, \ldots, \pi_N>$.

\subsection{Value Function Factorization}

 For Dec-POMDPs, value function factorization methods learn factorized utility which can be used for the execution of individual agent.  The Individual-Global-Max ({\bf IGM}) principle proposed in~\cite{QTRAN} is important for the realization of value function factorization for MARL. It is defined as follows.
\begin{defi}[IGM]
	\label{def:igm}
	For a joint state-action value function $\Qtot: \set{T}^N \times \set{U}^N \mapsto \mathbb{R}$, where $\mathbf{\tau} \in \set{T}^N$ is a joint action-observation history and $\ve u$ is the joint action, if there exists 
	individual state-action functions $[Q_i: \set{T}_i \times \set{U}_i \mapsto \mathbb{R}]_{i=1}^N,$ such that the following conditions are satisfied 
	\begin{align}
		\label{eq:max_cond}
		\arg\max_{\mathbf{u}} \Qtot( \ve \tau, \ve u) = (\arg\max_{u_1} Q_1(\tau_1, u_1), \;\hdots,\;  \arg\max_{u_n} Q_N(\tau_N, u_N)), 
		\end{align}
	then, $[Q_i]_{i=1}^N$ satisfy IGM for $\Qtot$ under $\mathbf{\tau}$. 
	We can state that $\Qtot( \ve \tau, \ve u)$ is factorized by $[Q_i(\tau_i,u_i)]_{i=1}^N$.
\end{defi}

\subsection{Distributional RL and Risk}

MARL is highly stochastic, and distributional RL could be used to deal with such stochasticity. Distributional RL~\cite{c51,QRDQN,iqn,DSAC2} models the return distribution of state-action pair through $Z(\ve \tau, \ve u)$ explicitly. They model full return distribution $Z(\ve \tau, \ve u)$ instead of return expectation $Q(\ve \tau, \ve u)$. The distribution of return can be approximated through a categorical distribution~\cite{c51} or a quantile function~\cite{QRDQN,iqn}.

The state-action return distribution $Z(\ve \tau, \ve u)$ can be modelled using quantile functions $\theta$ of a random variable Z, which is defined as follows. 
\begin{equation}
    \theta_{Z}(\ve \tau, \ve u, \omega) = \text{inf}\{z\in \mathcal{R}: \omega \leq C\!D\!F_{Z}(z)\}, \quad \forall{\omega} \in [0, 1]\label{def:theta}
\end{equation}
where $C\!D\!F_{Z}(z)$ is the cumulative distribution function of $Z(\ve \tau, \ve u)$. The quantile function $\theta_{Z}(\omega)$ may be referred as generalized inverse CDF in other literature~\cite{dmix}. For notation simplicity, we denote $\theta_Z(\omega)$ as $\theta(\omega)$.

QR-DQN~\cite{QRDQN} and IQN~\cite{iqn} model the return function $Z(\tau, u)$ as a mixture of $n$ Dirac functions. 
\begin{equation}
    Z(\tau, u) = \sum_{i=1}^n p_i(\tau, u, \omega_i) \delta_{\theta(\ve \tau, \ve u, \omega_i)}
\end{equation}
where $\delta_{\theta(\ve \tau, \ve u, \omega_i)}$ is a Dirac Delta function whose value is $\theta(\ve \tau, \ve u, \omega_i)$. $\omega_i$ is a quantile sample. $p_i(\tau, u, \omega_i)$ is the corresponding probability of $\theta(\ve \tau, \ve u, \omega_i)$. For QR-DQN~\cite{QRDQN},  $p_i(\tau, u, \omega_i)$ can be simplified as $1/n$. For execution, the action with the largest expected return $\arg \max_{u} \mathbb{E}[Z(\ve \tau, \ve u)]$ is chosen.

Analogous to the IGM principle for value function factorization, the Distributional Individual-Global-Max ({\bf DIGM}) principle for value distribution proposed in~\cite{dmix} is defined as follows.

\begin{defi}[DIGM]
	\label{def:digm}
	Given a set of individual state-action return distribution utilities $[Z_i(\tau_i, u_i)]_{i=1}^{N}$ and a joint state-action return distribution $Z_{jt}(\ve \tau, \ve u)$, if the following conditions are satisfied 
	\begin{align}
		\label{eq:dmax_cond}
		\arg\max_{\mathbf{u}} \mathbb{E}[Z_{jt}(\ve \tau,\ve u)] = (\arg\max_{u_1} \mathbb{E}[Z_1(\tau_1, u_1)], \;\hdots,\; \arg\max_{u_N} \mathbb{E}[Z_N(\tau_N, u_N)]) ,
	\end{align}
	then, $[Z_i(\tau_i,u_i)]_{i=1}^N$ satisfy DIGM for $Z_{jt}$ under $\mathbf{\ve \tau}$. 
	We can state that $Z_{jt}(\ve \tau, \ve u)$ is distributionally factorized by $[Z_i(\tau_i,u_i)]_{i=1}^N$.
\end{defi} 

In this work, we use $\psi_{\alpha}$ to measure the risk from a return distribution $Z$, where $\alpha$ is the risk level. For example, the Value at Risk metric, VaR$_{\alpha}$, estimate the $\alpha$-percentile from a distribution. For VaR$_{\alpha}$, a small value for $\alpha$ indicates risk-averse setting, whereas a high value for $\alpha$ means risk-seeking scenarios. Risk-sensitive policies act with a risk measure $\psi_{\alpha}$.

\begin{defi}[Value at Risk (VaR)]
Value at Risk (VaR)~\cite{var} is a popular risk metric which measures risk as the minimal reward might be occur given a confidence level $\alpha$. For a random variable $Z$ with cumulative distribution function (CDF), the quantile function $\theta$ and a quantile sample $\alpha \in [0,1]$, $VaR_{\alpha}(Z(\ve \tau, \ve u)) = \theta(\ve \tau, \ve u, \alpha)$. This metric is called percentile as well. 
\end{defi}

\begin{defi}[Distortion risk measures (DRM)]
Distorted risk measures are weighted expectation of a distribution under a distortion function~\cite{distortion,RiskProperties,DSAC}. The distorted expectation of a random variable $Z$ under $g$ is defined as 
\begin{equation}    
\psi(Z) = \int_0^1 g'(\omega) \theta(\omega) d\omega
\end{equation}
where $g(\omega) \in [0,1]$, $g'(\omega)$ is the derivative of $g(\omega)$. There are many distortion functions which can reflect different risk preferences, such as CVaR\cite{CVaR}, Wang\cite{Wang}, and CPW\cite{CPW}.
\end{defi}

\begin{defi}[Conditional Value at Risk(CVaR)]
	\begin{align}
		CVaR_{\alpha}(Z) &= \mathbb{E}_Z[z|z\leq \theta(\alpha)]
	\end{align}
	where $\alpha$ is the confidence level (risk level),  $\theta(\alpha)$ is the quantile function (inverse CDF) defined in~\eqref{def:theta}. CVaR is the expectation of values $z$ that are less equal than the $\alpha$-quantile value ($\theta(\alpha)$) of the value distribution. CVaR is a DRM whose $g(\omega) = \min(\omega/\alpha, 1)$.
\end{defi}
Please refer to the Appendix~\ref{sec:back:risk} for the detailed definitions of more distortion risk measures.
\section{Related Work}
\subsection{Value Factorization}

To enable efficient decentralized execution of MARL policies, value factorization approaches are widely adopted in MARL~\cite{pymarl2, QTRAN,wqmix}. Most methods focus on the satisfaction of the IGM principle which is important for value factorization. VDN~\cite{VDN} factorizes the value function as the sum of per-agents' utilities. QMIX~\cite{QMIX} models monotonic relationships between individual utilities and the value function. QAtten~\cite{Qatten} and REFIL~\cite{REFILL} use attention mechanisms for value function factorization. QTran~\cite{QTRAN} transforms the joint state-action value function $Q_{jt}$ into an easy-to-factorize form through linear constraints. QPlex~\cite{qplex} decomposes a value function into a value part and an advantage part. ResQ~\cite{ResQ} transforms a value function into the combination of a main and a residual function through masking.

For distributional MARL, the DFAC framework~\cite{dmix} and ResZ~\cite{ResQ} satisfy the DIGM principle which is the IGM principle for distributional RL. The DFAC framework factorizes a return distribution through mean-shape decomposition which models the mean and shape of the return distribution separately. ResZ transforms a return distribution into the combination of a main and a residual return distribution.  We will show that the DFAC framework and ResZ can not guarantee adherence to the RIGM principle for the VaR risk metric.

\subsection{Risk-sensitive RL}

Many researchers have dedicated themselves to studying risk-sensitive reinforcement learning in single-agent settings~\cite{chow14CVAR,chow15-robust,risk-sensitive-tradeoff-nips20,parametic-return10,robust-option-risk-19,DSAC,risk-icml2021}. O-RAAC~\cite{ORAAC} learns a full return distribution for its critic and optimizes the actor's policy according to a risk related metric (such as CVaR). ~\cite{RiskPolicyNIPS22} proposes a distributional reinforcement learning algorithm for learning CVaR-optimized policies. 

Recently, risk-sensitive reinforcement learning has been adopted in MARL, such as~\cite{NIPS22-rl-explore,rmix}. RMIX~\cite{rmix} combines QMIX and CVaR-optimized agent policies. It learns a value function (\emph{rather than} a return distribution) for each state-action pair, which is further decomposed into per-agent's return distribution utilities. Because the reward for each agent is unknown in cooperative MARL, RMIX learns agents' utilities by using CVaR value as pseudo rewards. RMIX satisfies the RIGM principle  only when the risk metric is CVaR and the risk level $\alpha$ is set to 1.

DRIMA~\cite{drima} separates the sources of risk into cooperation risk and environmental risk. It models joint return distribution as a monotonic mixing of per-agent return distribution utilities. We will show in Sec.~\ref{sec:riskq} that DRIMA does not satisfy the RIGM principle for the CVaR metric. 

RiskQ can be used with other MARL-based approaches: communication approaches (COMNet~\cite{COMMNet}, GraphComm~\cite{GraphComm}, DIAL~\cite{Foerster-comm}, COPA~\cite{COPA}), actor-critic methods (MADDPG~\cite{MADDPG}, MAAC~\cite{MAAC}), and other approaches such as MAPPO~\cite{MAPPO}, MASER~\cite{MASER}, QRelation~\cite{QRelation}, ATM~\cite{ATM}, and UPDet~\cite{updet}.
\section{Risk-sensitive Value Factorization}\label{sec:riskq}

In cooperative MARL, it is crucial to learn decentralized policies consistent with a centralized policy that is conditioned on joint state and joint action.  Especially in MARL scenarios with  high-stochastic rewards and state transitions, taking risk into consideration is of great importance. However, it is unclear about how to coordinate agents' policies with the consideration of risk. 

Key to our approach is the insight that to coordinate agents with risk consideration is important to learn risk-sensitive decentralised policies that are fully consistent with the risk-sensitive centralised counterpart.
To ensure this consistency, MARL algorithms only need to ensure that a joint risk-sensitive argmax operation, when performed on the joint state-action value function, yields the same outcome as a collection of individual risk-sensitive argmax operations conducted on per-agent utilities. This insight leads to the following definition. 

\subsection{Risk-sensitive Individual-Global-Max (RIGM) Principle}

\begin{defi}[RIGM]
	\label{def:rigm}
	Given a risk metric $\psi_{\alpha}$, a set of individual return distribution utilities $[Z_i(\tau_i, u_i)]_{i=1}^{N}$, and a joint state-action return distribution $Z_{jt}(\boldsymbol{\tau}, \ve u)$, if the following conditions are satisfied:
	\begin{align}
		\label{eq:dmax_cond}
		\arg\max_{\mathbf{u}} \psi_{\alpha}[Z_{jt}(\boldsymbol{\tau},\boldsymbol{u})] = (\arg\max_{u_1} [\psi_{\alpha}[Z_1(\tau_1, u_1)], \;\hdots,\; \arg\max_{u_N} [\psi_{\alpha}[Z_N(\tau_N, u_N)]) ,
	\end{align}
where $\psi_{\alpha}: Z\times R \rightarrow R$ is a risk metric such as the VaR or a distorted risk measure, $\alpha$ is its risk level. Then, $[Z_i(\tau_i,u_i)]_{i=1}^N$ satisfy the RIGM principle with risk metric $\psi_{\alpha}$ for $Z_{jt}$ under under $\mathbf{\tau}$.  We can state that $Z_{jt}(\boldsymbol{\tau}, \boldsymbol{u})$ can be distributionally factorized by $[Z_i(\tau_i,u_i)]_{i=1}^N$ with risk metric $\psi_{\alpha}$.
\end{defi} 

The RIGM principle is a generalization of the DIGM and the IGM principle. The DIGM principle is a special case of the RIGM theorem for $\psi=$ CVaR and $\alpha=1$ (the expectation operator $\mathbb{E}$ is equal to CVaR$_1$). If $Z_i$ is a single Dirac Delta Distribution, then the return distribution $Z_i$ becomes a single value(i.e., $Q_i$), and in this case, the IGM principle is equivalent to the RIGM principle when $\psi=$ CVaR and $\alpha = 1$.

In the following, we discuss whether existing risk-neutral value factorization methods can be simply modified to satisfy the RIGM principle for $\psi_{\alpha}$, and then discuss limitations of existing risk-sensitive value factorization methods. There are many value factorization methods satisfying the IGM principle. We show in Theorem~\ref{thm:igm_fail} that simply replacing $Q_i$ with $Z_i$ is insufficient to guarantee that $[Z_i]_{i=1}^N$ satisfy RIGM with risk metric $\psi_{\alpha}$.

\begin{theorem}
\label{thm:igm_fail}
Given a deterministic joint state-action value function $Q_{jt}$, a joint state-action return distribution $Z_{jt}$, and a factorization function $\Phi$ for deterministic utilities:
\begin{equation}
Q_{jt}(\tau ,u)=\Phi(Q_1(\tau_1,u_1), ..., Q_N(\tau_N,u_N))
\end{equation}
such that $[Q_i]_{i=1}^N$ satisfy IGM for $Q_{jt}$ under $\tau$, the following risk-sensitive distributional factorization:
\begin{equation}
Z_{jt}(\tau, u)=\Phi(Z_1(\tau_1,u_1), ..., Z_N(\tau_N, u_N))
\end{equation}
is insufficient to guarantee that $[Z_i]_{i=1}^N$ satisfy RIGM for $Z_{jt}(\tau, u)$ with risk metric $\psi_{\alpha}$.
\end{theorem}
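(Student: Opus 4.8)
The statement is an \emph{insufficiency} (impossibility) claim, so the natural strategy is a counterexample: I must exhibit one factorization function $\Phi$ together with deterministic utilities $[Q_i]_{i=1}^N$ that satisfy IGM, and per-agent return distributions $[Z_i]_{i=1}^N$ for which the induced $Z_{jt}=\Phi(Z_1,\dots,Z_N)$ fails RIGM. The plan is to take the simplest IGM-respecting $\Phi$, namely the VDN summation $\Phi(x_1,\dots,x_N)=\sum_{i} x_i$, which satisfies IGM for any deterministic utilities because $\arg\max$ of a separable sum decomposes coordinate-wise. For the risk metric I would choose $\psi_\alpha=\mathrm{VaR}_\alpha$, i.e.\ the quantile $\theta(\alpha)$ from the VaR definition, and reduce to the minimal interesting case of two agents each with two actions. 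Note that $N=1$ cannot yield a counterexample for monotone $\Phi$, since VaR is equivariant under increasing maps ($\mathrm{VaR}_\alpha[\Phi(Z)]=\Phi(\mathrm{VaR}_\alpha[Z])$); the failure must come from the \emph{combination} across agents, which is exactly what $N\ge 2$ supplies.

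The conceptual crux is that $\mathrm{VaR}_\alpha$ is \emph{not additive} across random variables: the quantile of a sum is in general not the sum of the quantiles. This is precisely the property that distinguishes VaR from the expectation $\mathbb{E}$, whose linearity is what lets the sum factorization preserve DIGM. I would exploit this by giving each agent a \emph{safe} action with a deterministic, moderate return and a \emph{risky} action that pays a large reward with high probability but incurs a large loss with small probability $p$. Choosing the risk level so that $p^2<\alpha\le p$ makes the single-agent loss event (probability $p$) fall at or below the $\alpha$-quantile, so each agent's individual $\mathrm{VaR}_\alpha$ penalizes the risky action and selects the safe one; yet in the convolution $Z_1+Z_2$ the simultaneous-loss event has probability only $p^2<\alpha$, so it is pushed \emph{below} the $\alpha$-quantile and the joint $\mathrm{VaR}_\alpha$ instead rewards the risky--risky combination. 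Computing the four joint quantiles then shows $\arg\max_{\mathbf u}\mathrm{VaR}_\alpha[Z_{jt}]$ is the risky--risky pair while the per-agent maximizers are both safe, contradicting the RIGM equality, even though the corresponding deterministic $Q_{jt}=\sum_i Q_i$ satisfies IGM.

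The main obstacle, and the only place requiring care, is calibrating the support points, the loss probability $p$, and the level $\alpha$ so that the quantile crossing is \emph{strict} at every joint action, i.e.\ the joint $\mathrm{VaR}_\alpha$ of the risky--risky pair strictly exceeds that of the safe--safe and the two mixed pairs, so no tie muddies the $\arg\max$. I would also fix the coupling of the $Z_i$ by taking them independent, which makes the summation $\Phi(Z_1,Z_2)$ a well-defined convolution and is the standard reading of per-agent return utilities. Finally, I would remark that the same obstruction persists for monotone mixing maps $\Phi$ (such as QMIX's), since monotone pooling of random variables likewise fails to commute with quantiles, so the counterexample is not an artifact of the additive choice; this closes the gap between the clean VDN instance and the broader claim about IGM-satisfying factorizations.
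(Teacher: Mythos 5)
Your proposal is correct and takes essentially the same route as the paper: the paper's own proof is exactly a VDN sum $\Phi=\sum_i Z_i$ with two independent two-outcome utilities (a safe-ish action vs.\ a risky $0/100$ action) under $\mathrm{VaR}_{0.5}$, where the simultaneous-loss probability $0.25$ falls below the level $\alpha=0.5$ and flips the joint $\arg\max$ to the risky--risky pair while each individual $\mathrm{VaR}_{0.5}$ prefers the safe action --- your $p^2<\alpha\le p$ calibration is the same mechanism in parametrized form, and your concrete numbers check out with strict inequalities at all four joint actions. The only divergence is that the paper additionally works out a second counterexample (a monotone cube mixer $\Phi=Q_1^3+Q_2^3$ failing under $\mathrm{CVaR}_1$, i.e.\ the expectation) to cover distorted risk measures explicitly, which you only gesture at in your closing remark; since a single counterexample suffices for the insufficiency claim as stated, your VaR instance alone carries the theorem.
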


We show that factorization methods satisfying the DIGM principle cannot guarantee the satisfaction of the RIGM theorem for the VaR metric. 
\begin{theorem}
\label{thm:igm_fail}
Given a joint state-action return distribution $Z_{jt}$, and a distributional factorization function $\Phi$ for the return distribution utilities $[Z_i]_{i=1}^N$ which satisfy the DIGM theorem, the following risk-sensitive distributional factorization:
\begin{equation}
Z_{jt}(\tau, u)=\Phi(Z_1(\tau_1,u_1), ..., Z_N(\tau_N, u_N))
\end{equation}
is insufficient to guarantee that $[Z_i]_{i=1}^N$ satisfy the RIGM principle for $Z_{jt}(\tau, u)$ with risk metric $\psi_{\alpha}$.
\end{theorem}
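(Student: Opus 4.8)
The plan is to prove the statement by explicit counterexample: I will exhibit a two-agent instance, a distributional factorization $\Phi$ that provably satisfies DIGM, and a risk level $\alpha$ at which the joint $VaR_{\alpha}$-greedy action disagrees with the tuple of per-agent $VaR_{\alpha}$-greedy actions. Since the claim is that DIGM-satisfaction is \emph{insufficient} to \emph{guarantee} RIGM, a single such instance settles it. The conceptual engine is that the expectation operator is additive, $\mathbb{E}[Z_1+Z_2]=\mathbb{E}[Z_1]+\mathbb{E}[Z_2]$, so it commutes with a separable factorization and its argmax decomposes coordinatewise; by contrast $VaR_{\alpha}$ is a quantile and is \emph{not} additive, so the quantile of a combined return generally differs from the combination of per-agent quantiles, and the two greedy selections can point in opposite directions.

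Concretely, I would take $\Phi$ to be the VDN-style sum $\Phi(Z_1,Z_2)=Z_1+Z_2$ with $Z_1,Z_2$ independent, which is a legitimate DIGM-satisfying factorization because maximizing $\mathbb{E}[Z_1(u_1)]+\mathbb{E}[Z_2(u_2)]$ decomposes across the two agents. Each agent has two actions: action $1$ yields the deterministic return $0$, and action $2$ yields $+M$ or $-m$ each with probability $\tfrac12$, for constants $M>m>0$ (for instance $M=10$, $m=1$). Fixing the median level $\alpha=\tfrac12$ and using the paper's convention $\theta(\alpha)=\inf\{z:\alpha\le CDF(z)\}$, the per-action medians of each agent are $0$ (action $1$) and $-m$ (action $2$), so every agent's $VaR_{1/2}$-greedy action is action $1$; meanwhile the joint median of the $(2,2)$ return $Z_1(2)+Z_2(2)$, whose law places mass $\tfrac14,\tfrac12,\tfrac14$ on $2M,\,M-m,\,-2m$, equals $M-m>0$, which strictly exceeds the joint median $0$ of $(1,1)$ and the joint medians $-m$ of the two mixed actions. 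Hence the joint $VaR_{1/2}$-greedy action is $(2,2)$ while the collection of per-agent $VaR_{1/2}$-greedy actions is $(1,1)$, violating the RIGM identity of Definition~\ref{def:rigm}.

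To confirm that the DIGM hypothesis is genuinely met on the very same instance, I would verify that the per-action means are $0$ and $(M-m)/2>0$, so both the joint expectation argmax and the tuple of per-agent expectation argmaxes equal $(2,2)$. Thus DIGM holds exactly while RIGM fails for $VaR_{1/2}$, which is precisely the separation the theorem asserts.

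The main obstacle I anticipate is not the algebra but the bookkeeping around the quantile convention and degenerate cases: I must apply the infimum definition of $\theta$ consistently so that the median of the two-point law $Z_i(2)$ is the \emph{lower} atom $-m$, check that all relevant argmaxes are strict so the greedy selections are well defined and genuinely distinct, and make explicit the independence coupling under which the joint law of $Z_1(2)+Z_2(2)$ is computed, since $VaR$ is coupling-dependent. Once these conventions are pinned down, the chain of inequalities $M-m>0>-m$ closes the argument immediately, and the identical template — replacing $VaR_{1/2}$ by any non-additive distorted measure — extends the counterexample to the distorted-risk setting.
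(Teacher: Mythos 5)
Your proposal is correct and takes essentially the same route as the paper: the paper's proof likewise exhibits additive DIGM-satisfying factorizations (ResZ with $Z_r=0$, i.e.\ $Z_{jt}=2Z_1+2Z_2$, alongside the DFAC mean--shape case) together with two-point per-agent distributions whose $VaR_{0.5}$ argmax inverts under the independent sum --- precisely your median-inversion mechanism. If anything, your bookkeeping is tidier than the paper's (which writes $Z_2=Z_1$ but computes the joint quantiles as if the utilities were independent), and your single explicit counterexample with strict argmaxes logically suffices for the stated insufficiency claim.
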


Recently, DRIMA~\cite{drima} and RMIX~\cite{rmix} combine risk-sensitive RL with MARL. Albeit they have demonstrated promising results, they do not explicitly consider the risk-sensitive coordination requirement. 

\begin{theorem}
    DRIMA~\cite{drima} does not guarantee adherence to the RIGM principle for CVaR metric.
\end{theorem}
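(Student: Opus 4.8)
The plan is to refute this universally quantified claim by exhibiting an explicit counterexample: a small single-stage two-agent, two-action instance together with per-agent return distribution utilities $[Z_i]_{i=1}^2$ and DRIMA's monotonic mixing, for which the collection of per-agent CVaR-greedy actions differs from the joint CVaR-greedy action. Since the RIGM principle (Definition~\ref{def:rigm}) must hold for \emph{all} admissible factorizations in the hypothesis class, a single instance realizable by DRIMA is enough to show the principle is not guaranteed.

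First I would recall that DRIMA combines the per-agent return distribution utilities through a monotonic mixing, and observe that the additive (VDN-like) combination $Z_{jt} = Z_1 + Z_2$ with independent per-agent return noise is the simplest admissible instance of such a monotonic mixing, so it lies in DRIMA's class and satisfies DIGM. I would then choose, for each agent $i$, a ``safe'' action $a$ whose utility is a deterministic constant $Z_i^a \equiv c$ and a ``risky'' action $b$ whose utility is symmetric, $Z_i^b \in \{+v,\,-v\}$ each with probability $1/2$, with the parameters fixed so that $c > -v > 2c$ (for instance $c=-2.9$, $v=3$).

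Then two short computations finish the argument. For the marginals, CVaR$_\alpha(Z_i^a)=c$ for every $\alpha$ while CVaR$_{1/2}(Z_i^b)=-v$; since $c>-v$, each agent's individual CVaR-greedy action is the safe action, so the right-hand side of the RIGM condition is $(a,a)$. For the joint side, the all-safe action yields the deterministic return $2c$ with CVaR$_{1/2}=2c$, whereas the all-risky action yields $Z_1^b+Z_2^b \in \{2v,0,-2v\}$ with probabilities $\{1/4,1/2,1/4\}$, and a direct evaluation of $\mathbb{E}[\,\cdot \mid \cdot \le \theta(1/2)]$ gives CVaR$_{1/2}=-v$. Because $-v > 2c$, the all-risky action strictly beats $(a,a)$, so $(a,a)$ is not the joint CVaR-greedy action and the two sides of the RIGM condition disagree.

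The conceptual point the counterexample exploits is that CVaR is superadditive (coherent) rather than additive across independently coupled returns: the left tail of the summed distribution diversifies, so a monotonic mixing that preserves the ordering of expectations (hence DIGM) need not preserve the ordering of CVaR values. The main obstacle I anticipate is fidelity to DRIMA's exact mixing semantics and, crucially, its coupling assumption: if DRIMA mixed quantiles comonotonically with purely additive weights, CVaR would be comonotone-additive and this particular violation would disappear. I would therefore either confirm that DRIMA's mixing introduces the non-comonotone (independence) coupling that drives the diversification effect above, or, should the mixing be comonotonic, replace the additive combination with a genuinely nonlinear monotonic mixing and exploit that CVaR is not preserved under nonlinear monotone transformations of the quantile function — in either case producing a disagreement between the individual and joint CVaR argmax.
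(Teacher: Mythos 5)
Your main counterexample does not go through as a refutation of DRIMA, because the instance you actually compute with is not realizable inside DRIMA's mixing class. DRIMA mixes per-agent utilities \emph{quantile-by-quantile}: $\theta_{tran}(\ve \tau, \ve u, \alpha) = Q_{mix}(\theta_1(\tau_1,u_1,\alpha), \dots, \theta_n(\tau_n,u_n,\alpha))$, and a quantile-wise mixing imposes the \emph{comonotonic} coupling on the joint return, not independence. Your two decisive computations --- the trinomial law $\{2v, 0, -2v\}$ with probabilities $\{1/4, 1/2, 1/4\}$ for $Z_1^b + Z_2^b$, and the resulting diversification of the left tail --- are exactly the independent-coupling facts, and under DRIMA's additive comonotone mixing they are false: there the joint quantile function for $(b,b)$ is $\theta(\ve \tau, \ve u, \omega) = 2\theta_1(\omega)$, so $\mathrm{CVaR}_{1/2} = -2v = -6 < 2c = -5.8$, and the joint argmax is $(a,a)$, agreeing with the individual greedy actions. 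As you yourself observe in your final paragraph, CVaR is comonotone-additive, so with purely additive weights the violation disappears; but this is not a side worry to be checked --- it is precisely the case your executed argument occupies, so the computation as delivered proves nothing about DRIMA.

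Your fallback --- keep the comonotone quantile mixing but make $Q_{mix}$ genuinely nonlinear monotone --- is in fact the paper's proof, which you sketch but do not carry out. The paper takes $Q_{mix}(\theta_1, \theta_2) = \theta_1^3 + \theta_2^3$, identical agents with a deterministic action ($Z_i(\tau_i, a) = 3$) and a risky one ($Z_i(\tau_i, b) \in \{5, 0\}$ equiprobably), and the metric $\psi_1 = \mathrm{CVaR}_1$, i.e.\ the expectation, which DRIMA's risk-neutral action selection realizes in the limit of many uniform quantile samples. Individually $\psi_1[Z_i(\tau_i,a)] = 3 > 2.5 = \psi_1[Z_i(\tau_i,b)]$, yet jointly $\psi_1[Z_{tran}] = 54,\ 89.5,\ 125$ for $(a,a), (a,b), (b,b)$, so the joint argmax is $(b,b)$: a Jensen-type gap, exploiting that the expectation of a convex function of comonotone quantiles is not monotone in the per-agent risk values. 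Because you only gesture at this repair --- no explicit distributions, no nonlinear $Q_{mix}$, no computed argmax mismatch --- the proposal as written has a genuine gap: the sole executed argument rests on a coupling DRIMA does not produce, and the branch that would succeed is left unexecuted.
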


The value function $Q_{jt}(\ve \tau, \ve u)$, learned by RMIX~\cite{rmix},  can be written as $Q_{jt}(\ve \tau, \ve u) =  Q_{mix}(\psi_{\alpha}[Z_1(\tau_1, u_1)],...,\psi_{\alpha}[Z_N(\tau_N, u_N)])$, where $Z_i(\tau_i, u_i)$ represents per-agent return distribution utilities, and $Q_{mix}$ is a monotonically increasing function with respect to $\psi_{\alpha}[Z_i(\tau_i, u_i)]$. Although it seems that it satisfies the RIGM principle for any risk metric $\psi_{\alpha}$, its learning algorithm seeks to find the optimal actions that $\arg\max_{\mathbf{u}} [Q_{jt}(\ve \tau,\ve u)]$ rather than $\arg\max_{\mathbf{u}} \psi_{\alpha}[Z_{jt}(\ve \tau,\ve u)]$. In essence, RMIX seeks to make sure that $\arg\max_{\mathbf{u}} [Q_{jt}(\ve \tau,\ve u)]$ equal to $(\arg\max_{u_1} [\psi_{\alpha}[Z_1(\tau_1, u_1)], \;\hdots,\; \arg\max_{u_N} [\psi_{\alpha}[Z_N(\tau_N, u_N)])$. By this way, RMIX satisfies the RIGM principle only if $\psi_{\alpha}=$ CVaR$_1$. Moreover, $Q_{jt}(\ve \tau, \ve u)$ is only a value function but not a return distribution.

Please refer to Appendix~\ref{appendix:proof:rigm} for detailed proofs of Theorem 1, 2 and 3. We have discussed the limitations of existing value factorization methods. It becomes apparent that the development of a novel factorization method is needed, specifically one capable of effectively coordinating agents in risk-sensitive scenarios.

\subsection{RiskQ}

In MARL, it is typical to model the factorization function $f$ as either the sum of per-agent utilities or a monotonic increasing function with respect to per-agent utilities. However, risk metrics are generally non-additive, except for variables which are highly dependent on each other (the comonotonicity property)~\cite{RiskSereda2010,RiskProperties}. For instance, let's consider the VaR$_{0.5}$ metric, the median of a distribution, and $f=\sum Z_i$. It generally holds that  VaR$_{0.5}[Z_1+Z_2]\neq $ VaR$_{0.5}[Z_1] + $VaR$_{0.5}[Z_2]$. This is due to the fact that the median of the sum of two random variables does not equate to the sum of their medians. The non-additive property of risk metrics makes the explicit modeling of $f$ challenging, as we cannot model $f$ as the sum or the monotonic mixing of per-agents' utilities.

The key to our insights is that instead of modeling the return distribution $Z_{jt}$ using the form of $f(Z_1, ... Z_n)$ explicitly which is a common practice in MARL~\cite{VDN,QMIX,ResQ}, we can model $Z_{jt}$ implicitly through modeling it using its quantiles. We model the relationships among quantiles of the joint return distribution and the quantiles of per-agent return distribution utilities.

RiskQ utilizes a common distributional RL technique~\cite{QRDQN}, where the return distribution $Z_{jt}(\ve \tau, \ve u)$ is represented by a combination of Dirac delta functions $\delta_{\theta(\omega)}$ and the positions $\theta(\omega)$ of the Diracs that are determined through quantile regression. Figure~\ref{fig:riskq} depicts the mixing process of the quantiles of per-agent utilities. For a quantile sample (cumulative probability) $\omega$, the quantile value $\theta(\boldsymbol{\tau}, \boldsymbol{u}, \omega)$ of $Z_{jt}(\ve \tau, \ve u)$ is represented as the weighted sum of the quantile value $\sum_{i=1}^Nk_i\theta_i(\tau, u_i, \omega)$ of return distribution utilities.

We demonstrate through the following theorem that $Z_{jt}$ and $[Z_i]_{i=1}^N$ satisfy the RIGM principle for both the VaR risk metric and distorted risk measures (DRM).

 \begin{theorem}\label{theorem:riskq}
 	A joint state-action return distribution 
 	\begin{align}\label{def:main}
 		Z_{jt}(\ve \tau, \ve u) &= \sum_{j=1}^J p_j(\ve \tau, \ve u, \omega_j) \delta_{\theta(\ve \tau, \ve u, \omega_j)} \\
   \theta(\ve \tau, \ve u, \omega_j) &= \sum_{i=1}^N k_i \theta_i(\tau_i, u_i, \omega_j)
 	\end{align} is distributional factorized by $[Z_i(\tau_i,u_i)]_{i=1}^N$ with risk metric $\psi_{\alpha}$, where $J$ is the number of Dirac Delta functions, $\delta_{\theta(\ve \tau, \ve u, \omega_j)}$ is a Dirac Delta function at $\theta(\ve \tau, \ve u, \omega_j) \in \mathbb{R}$,  $\theta(\ve \tau, \ve u, \omega_j)$ is a quantile function with sample $\omega_j$, $p_j(\ve \tau, \ve u, \omega_j)$ is the corresponding probability for $\delta_{\theta(\ve \tau, \ve u, \omega_j)}$ of the estimated return distribution $Z_{jt}$, $\theta_i(\tau_i, u_i, \omega)$ is the quantile function (with quantile sample $\omega$) for the return distribution utility $Z_i(\tau_i, u_i)$ of agent $i$ and $k_i \geq 0$.
  \end{theorem}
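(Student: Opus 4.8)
The plan is to verify the RIGM condition of Definition~\ref{def:rigm} directly, i.e. to show that $\arg\max_{\ve u}\psi_\alpha[Z_{jt}(\ve\tau,\ve u)]$ decomposes into the per-agent greedy actions. The whole argument rests on one structural observation: because the quantile function of $Z_{jt}$ is \emph{defined} as the non-negatively weighted sum $\theta(\ve\tau,\ve u,\omega)=\sum_{i=1}^N k_i\theta_i(\tau_i,u_i,\omega)$, and because both the VaR and any distorted risk measure are \emph{linear functionals of the quantile function}, the risk value of $Z_{jt}$ will itself be a non-negatively weighted sum of the per-agent risk values. Once that additive decomposition is established, the conclusion follows from the same separability argument that underlies additive (VDN-style) IGM factorization.

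First I would dispatch the well-posedness point that $\theta(\ve\tau,\ve u,\cdot)$ is a legitimate quantile function. Each $\theta_i$ is non-decreasing in $\omega$ (being a generalized inverse CDF as in~\eqref{def:theta}), and since every $k_i\ge 0$, the sum $\sum_i k_i\theta_i$ is again non-decreasing; hence it determines a valid distribution $Z_{jt}$ and the risk measures below are well defined. This is the first of two places where the hypothesis $k_i\ge 0$ is indispensable.

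Next I would treat the two risk classes through the common ``linear-in-$\theta$'' viewpoint. For VaR, the definition gives $\psi_\alpha[Z_{jt}]=\mathrm{VaR}_\alpha[Z_{jt}]=\theta(\ve\tau,\ve u,\alpha)=\sum_{i=1}^N k_i\theta_i(\tau_i,u_i,\alpha)=\sum_{i=1}^N k_i\,\mathrm{VaR}_\alpha[Z_i(\tau_i,u_i)]$, i.e. evaluation of the quantile at $\alpha$ commutes with the weighted sum. For a distorted risk measure, substituting the definition of $\theta$ into $\psi[Z_{jt}]=\int_0^1 g'(\omega)\theta(\ve\tau,\ve u,\omega)\,d\omega$ and using linearity of the integral gives $\psi[Z_{jt}]=\sum_{i=1}^N k_i\int_0^1 g'(\omega)\theta_i(\tau_i,u_i,\omega)\,d\omega=\sum_{i=1}^N k_i\,\psi[Z_i(\tau_i,u_i)]$. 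In both cases we arrive at the key identity $\psi_\alpha[Z_{jt}(\ve\tau,\ve u)]=\sum_{i=1}^N k_i\,\psi_\alpha[Z_i(\tau_i,u_i)]$.

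Finally I would close with the separability argument: the objective $\sum_{i=1}^N k_i\,\psi_\alpha[Z_i(\tau_i,u_i)]$ is a sum of terms, each depending on a single action $u_i$ through a non-negative coefficient $k_i$, so maximizing the sum over the joint action $\ve u$ is equivalent to maximizing each summand over its own $u_i$, which is exactly the RIGM equality of Definition~\ref{def:rigm}. I do not expect the algebra to be the hard part; the real content—and the conceptual obstacle worth flagging—is recognizing that building $Z_{jt}$ through \emph{quantile addition} is precisely a comonotonic coupling of the $[Z_i]$, and it is exactly in the comonotonic regime that VaR and distorted measures regain additivity despite being non-additive in general (the point raised just before the theorem). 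The only loose end is the treatment of ties in the $\arg\max$, which is handled by the usual convention and does not affect the argument.
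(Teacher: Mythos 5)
Your proposal is correct and follows essentially the same route as the paper's proof: both establish the key identity $\psi_{\alpha}[Z_{jt}(\ve\tau,\ve u)]=\sum_{i=1}^N k_i\,\psi_{\alpha}[Z_i(\tau_i,u_i)]$ (by evaluating the quantile at $\alpha$ for VaR, and by linearity of the distortion integral $\int_0^1 g'(\omega)\theta(\omega)\,d\omega$ for DRMs) and then conclude via the separability of a non-negatively weighted sum, which is exactly the paper's chain of inequalities showing $\bar{\ve u}$ maximizes $\psi_{\alpha}[Z_{jt}]$. Your added well-posedness check (monotonicity of $\sum_i k_i\theta_i$ in $\omega$) and the explicit comonotonicity framing are nice touches the paper leaves implicit, but they do not change the argument.
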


We have shown that by modeling the quantiles of $Z_{jt}$ as a weighted sum of quantiles of $[Z_i]_{i=1}^N$. $[Z_i]_{i=1}^N$ satisfied the RIGM theorem for VaR and DRMs such as CVaR, Wang, and CPW. Detailed proofs supporting this theorem can be found in Appendix~\ref{appendix:riskq}.

\begin{figure}[!t]
	\centering
	\includegraphics[width=\columnwidth]{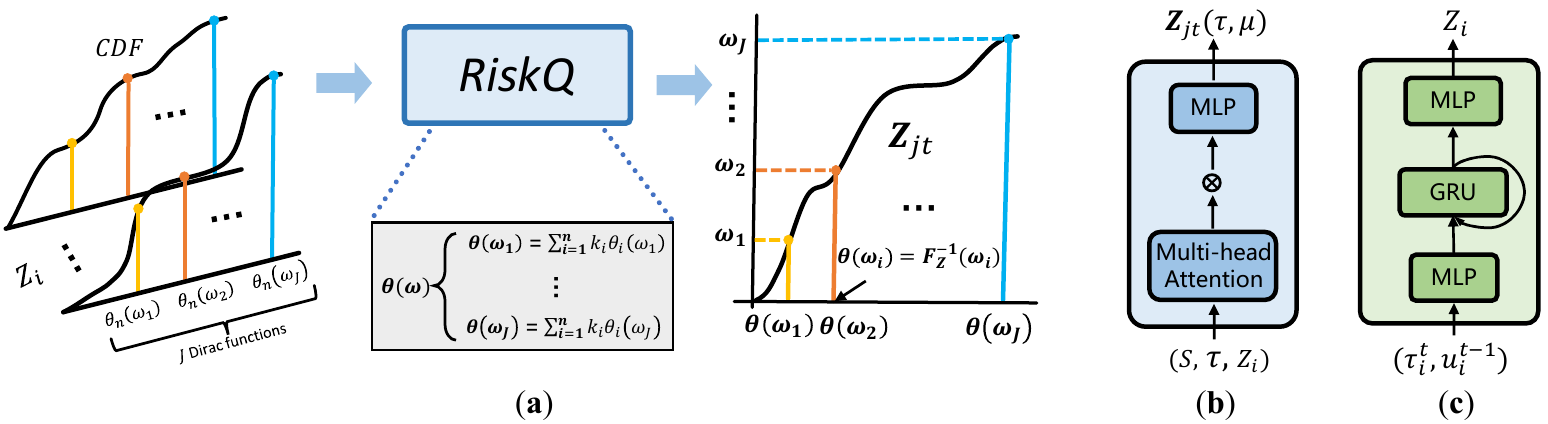} 
	\caption{RiskQ overview: (a) quantiles mixing for $Z_{jt}$, (b) mixer function, (c) agent return distribution utility}
	\label{fig:riskq}
	\vspace*{-0.2cm}
\end{figure}

\subsection{Neural Networks and Loss}

We model the return distribution utility for each agent by a simple neural network. It takes the observation history $\tau_i$ and action $u_i$ as input, then passes them through a MLP, a GRU, and a MLP, and  outputs $Z_i(\tau_i, u_i, \omega)$ for each quantile sample $\omega$. For execution, the action which maximizes $\psi_{\alpha}[Z_i(\tau_i, u_i)]$ is chosen. 

The mixer function mixes all the quantile $\theta$ of return distribution utility $[Z_i]_{i=1}^N$ into $\theta(\ve \tau, \ve u, \omega) = \sum_{i=1}^N k_i \theta_i(\tau_i, u_i, \omega)$. It takes $[\theta_i(\tau_i, u_i, \omega)]_{i=1}^N$, the state $s$, the observation history $\ve \tau$ as input, and  outputs $\theta(\ve \tau, \ve u, \omega)$ for each quantile sample $\omega$. $k_i$ is modelled using a multi-head attention mechanism. 

We adopt the Quantile Regression (QR) loss~\cite{QRDQN} to update the value distribution $Z_{jt}(\boldsymbol{\tau}^k, \boldsymbol{u}^k, \sigma)$. More concretely, QR aims to estimate the quantiles of the return distribution by minimizing the distance between $Z_{jt}(\boldsymbol{\tau}^k, \boldsymbol{u}^k, \sigma)$ and its target distribution $y^k(\boldsymbol{\tau}^k, \boldsymbol{u}^k, \sigma) \triangleq r + \gamma Z_{jt}(\boldsymbol{\tau}^{k+1},  \boldsymbol{\widetilde{u}}, \sigma^-)$. $\boldsymbol{\widetilde{u}} = [\widetilde{u}_i]_{i=1}^N$, $\widetilde{u}_i=\arg\max_{u_i} \psi_{\alpha}[Z_i(\tau_i^{k+1}, u_i)]$. $\sigma$ are the parameters of the network, and $\sigma^-$ are the parameters of the target network.
\section{Evaluation}

We study the performance of RiskQ on risk-sensitive games (Multi-agent cliff and Car following games), the StarCraft II Multi-Agent Challenge benchmark (SMAC)~\cite{SMAC}. RiskQ can obtain promising performance for risk-sensitive and risk-neutral scenarios. Ablation studies reveal the importance of adhering to the RIGM principle for achieving good performance. Additionally, we have examined the impact of functional representations, risk metrics and risk levels.

\subsection{Experimental Setup}

We select three categories of MARL value factorization methods for comparison: (i) Expected value factorization methods: QMIX~\cite{QMIX}, QTran~\cite{QTRAN}, QPlex~\cite{qplex}, CW QMIX~\cite{wqmix}, ResQ~\cite{ResQ}; (ii) Risk-neutral return distribution (stochastic value) factorization methods: DMIX~\cite{dmix} and ResZ~\cite{ResQ}; (iii) Risk-sensitive return distribution factorization methods: RMIX~\cite{rmix} and DRIMA~\cite{drima}. For robustness, each experiment is conducted at least 5 times with different random seeds. In general, the configuration of RiskQ follows the setup of Weighted QMIX and ResQ. By default, the risk metric used in RiskQ is Wang$_{0.75}$, indicating a risk-averse preference. Please refer to Appendix~\ref{appendix:exp} for detailed experimental setup and more experimental results.

\subsection{Multi-Agent Cliff Navigation}

In Multi-Agent Cliff Navigation (MACN), introduced in~\cite{rmix}, two agents must navigate in a grid-like world to reach the goal without falling into cliff. The agent receives a -1 reward at each time step. If any agent reaches the goal individually, a -0.5 reward will be given to the agents. They will receive $0$ reward if they reach the goal together. An episode is finished once the goal is reached by two agents together or any agent falls into cliff (reward -100). We depict the test return of each learning algorithm in two MACN scenarios: $4 \times 11$ grid map and $4 \times 15$ grid map. As illustrated in Figure~\ref{fig:result1:car} (a) and (b), RiskQ achieves the optimal performance in both scenarios, outperforming the two risk-sensitive methods: RMIX and DRIMA. This indicates that learning policies which satisfy the RIGM principle could lead to promising results.

\begin{figure}[!t]
	
	\centering
	\begin{minipage}[b]{\columnwidth} 
		\centering
		\includegraphics[width=0.6\columnwidth]{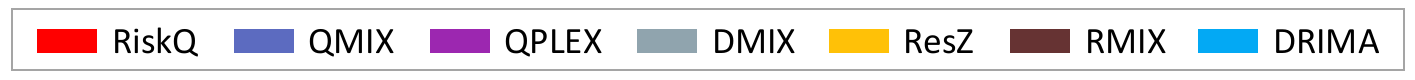}
	\end{minipage}
	
	\begin{minipage}[b]{\columnwidth} 
		\centering
		\includegraphics[width=0.32\columnwidth]{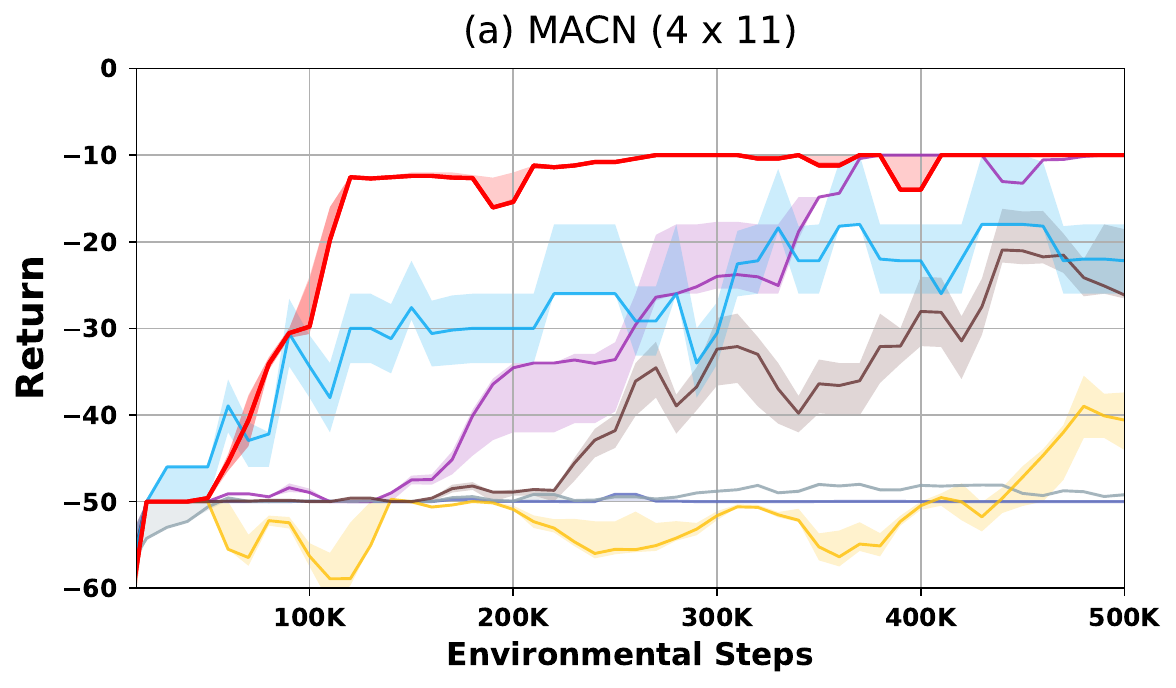} 		
		\includegraphics[width=0.32\columnwidth]{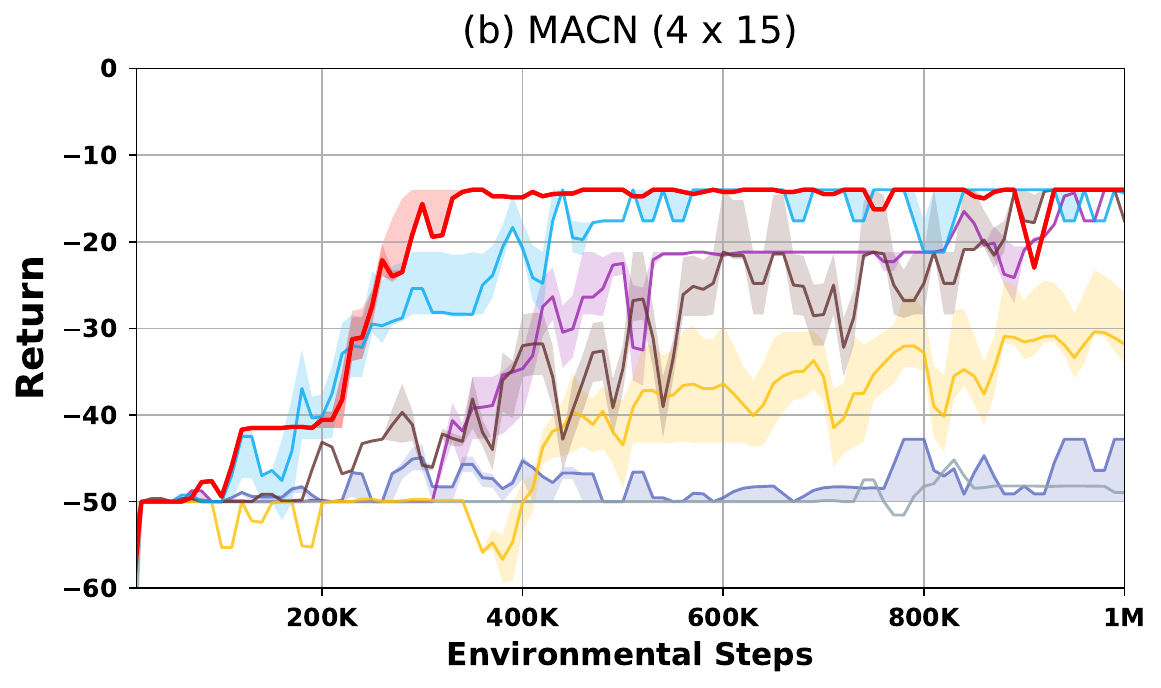} 	
		\includegraphics[width=0.33\columnwidth]{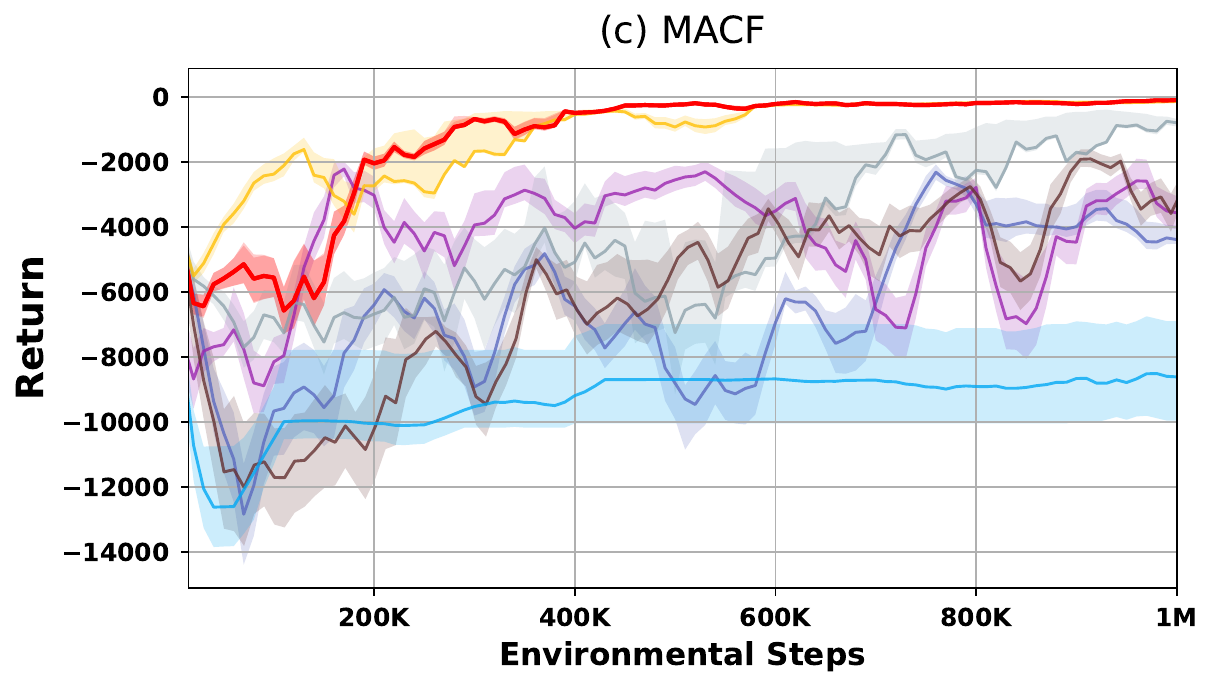}
	\end{minipage}
	\begin{minipage}[b]{\columnwidth} 
		\centering
        \includegraphics[width=0.32\columnwidth]{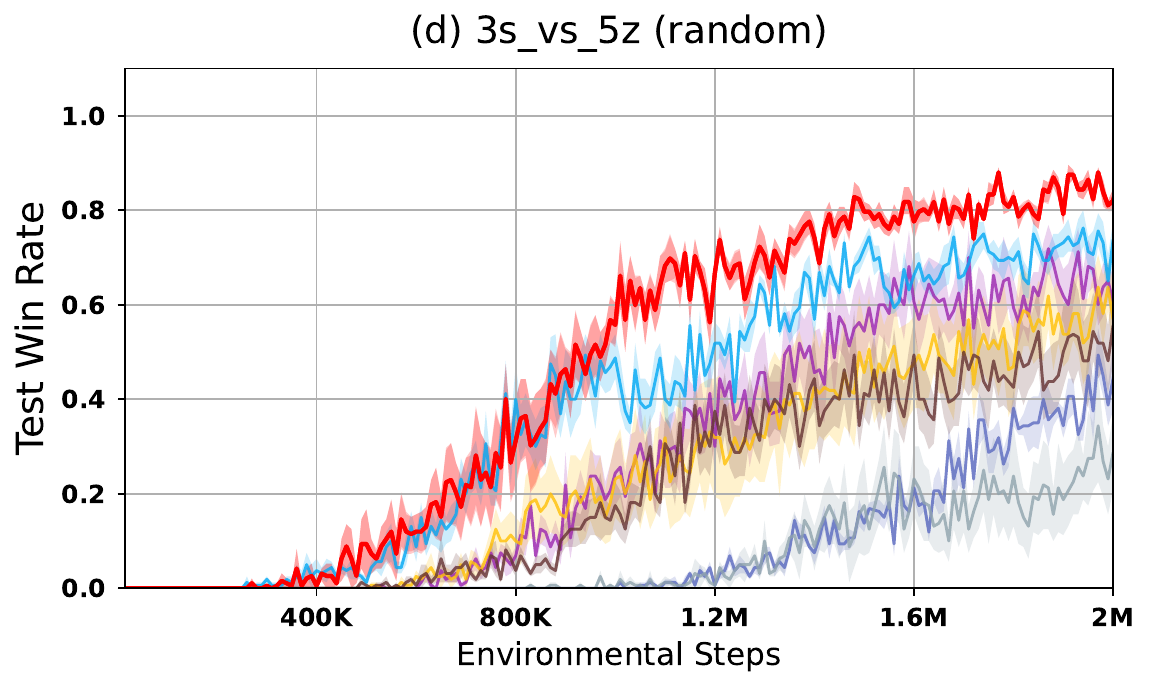}
		\includegraphics[width=0.32\columnwidth]{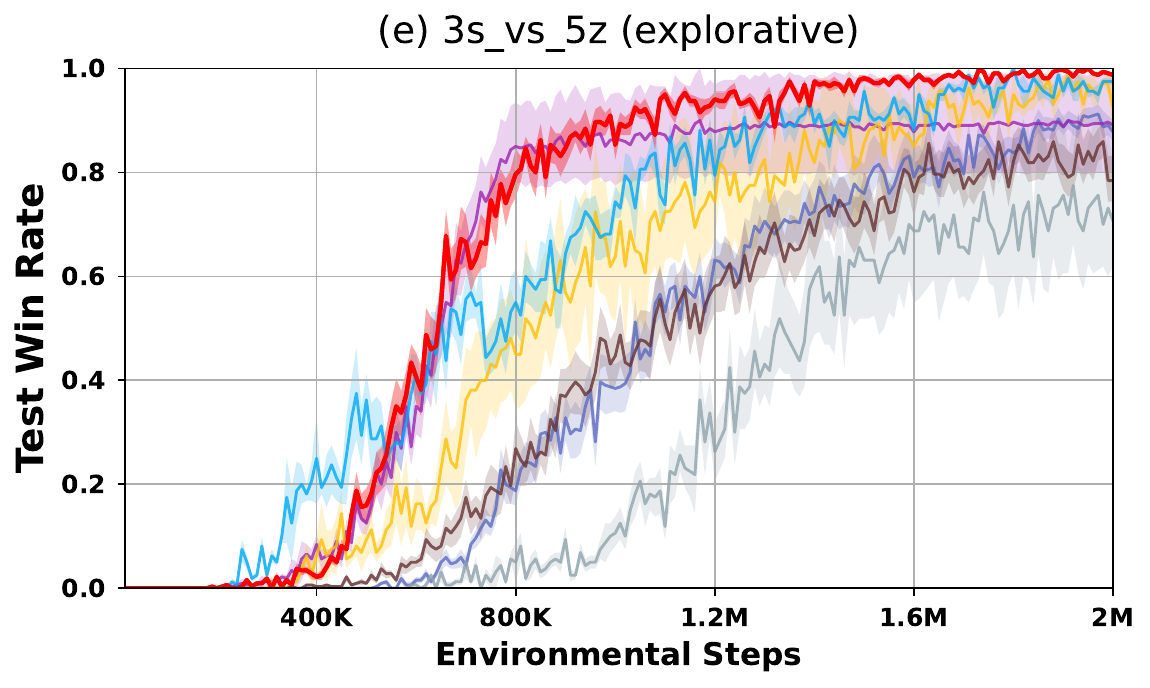} 
		\includegraphics[width=0.33\columnwidth]{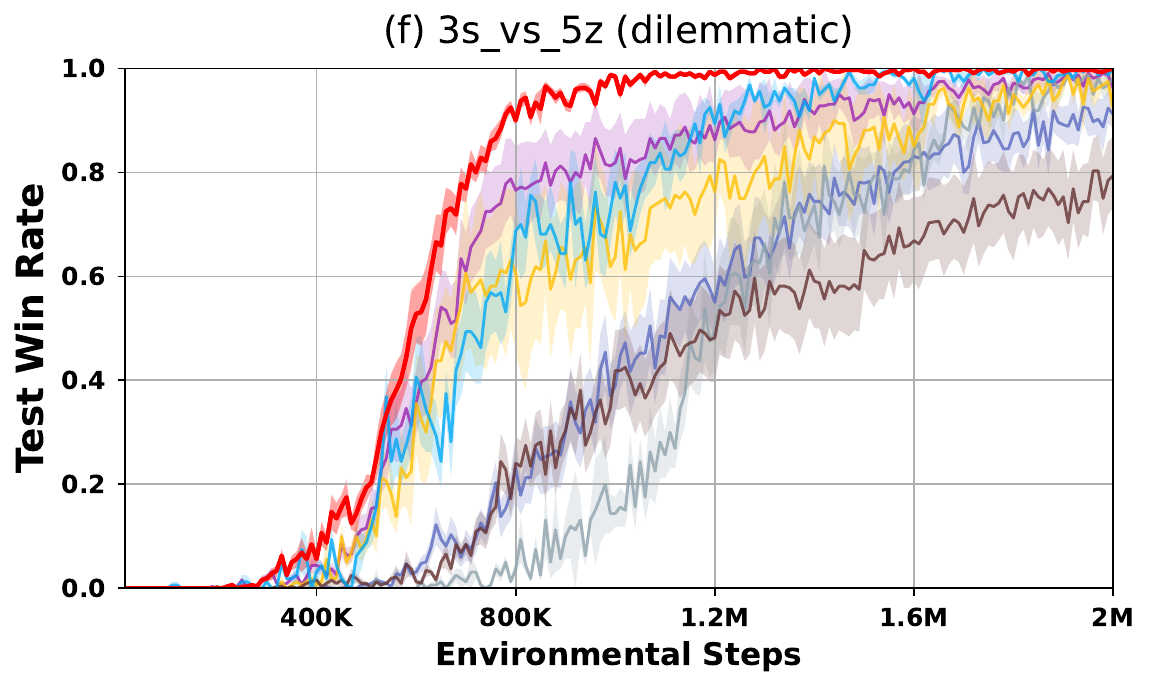}
	\end{minipage}
        \caption{The return of the MACN (a-b) and the MACF (c) environment; the test win rate of random (d), explorative (e) and dilemmatic (f) 3s\_vs\_5z scenario of SMAC.}
	\label{fig:result1:car}
\end{figure}

\subsection{Multi-Agent Car Following game}

We design Multi-Agent Car Following (MACF) Game, which is adapted from the single agent risk-sensitive environment~\cite{ORAAC}. In MACF, there are two agents, each controlling one car, with the task of one car following the other to reach a goal. Each car can observe the current position and speed of other cars within its observation range. The agent has a fixed action space which determines its acceleration. At each time step, agent will receive a negative reward. And the cars will \emph{crash} with some probability if their speed exceed a speed-threshold, and a negative reward is given to the agents. To adapt the game to cooperative MARL, agents move within each other's observation will receive a positive reward. Once the agents reach the goal together, a big reward is given to them and the episode is terminated. As shown in Figure~\ref{fig:result1:car} (c), RiskQ exhibits superior performance to both risk-neutral and risk-sensitive algorithms. Moreover, in order to verify that this performance improvement is due to risk considerations, we also study the number of crashes. For this value, as it is shown in the appendix (Figure~\ref{fig:result:car}), albeit RiskQ does not optimize for it, RiskQ achieves zero crash with the fastest learning rate.

\subsection{StarCraft II}

\vspace*{-0.1cm}
\begin{figure}[!t]
	\centering
	
	\begin{minipage}[b]{\columnwidth} 
		\centering
		\includegraphics[width=0.8\columnwidth]{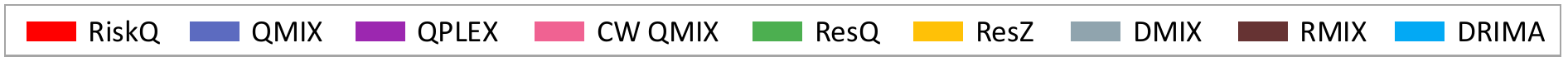}
	\end{minipage}
	
	\begin{minipage}[b]{\columnwidth} 
		\centering
		\includegraphics[width=0.32\columnwidth]{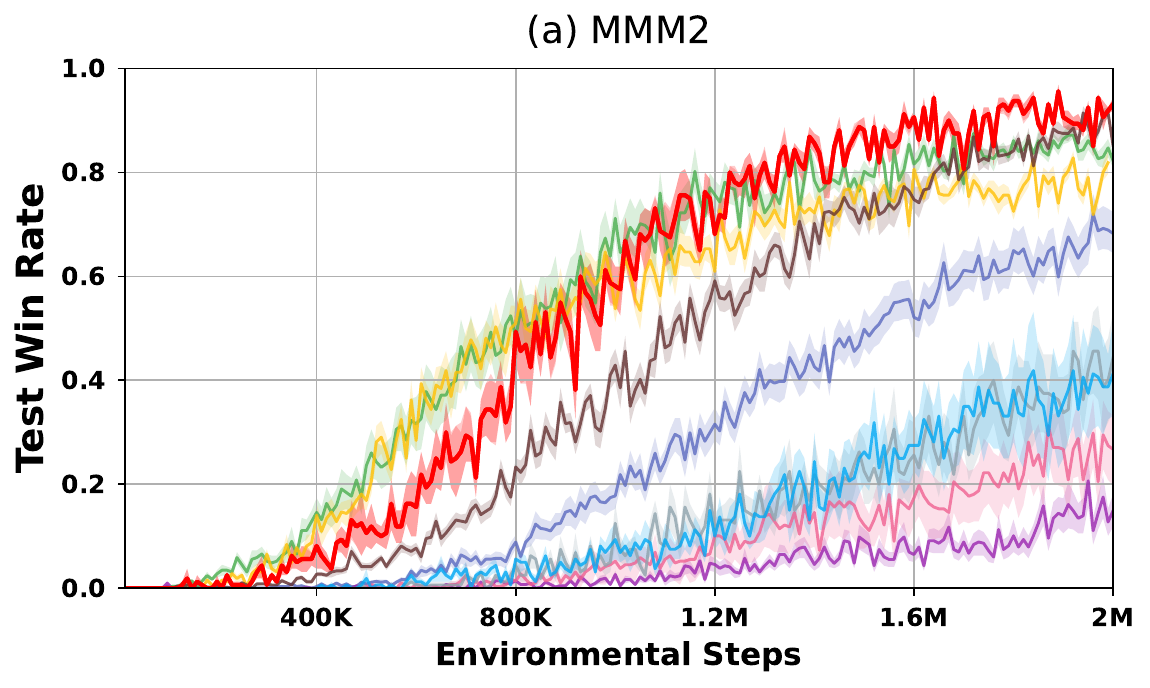} 
		\includegraphics[width=0.32\columnwidth]{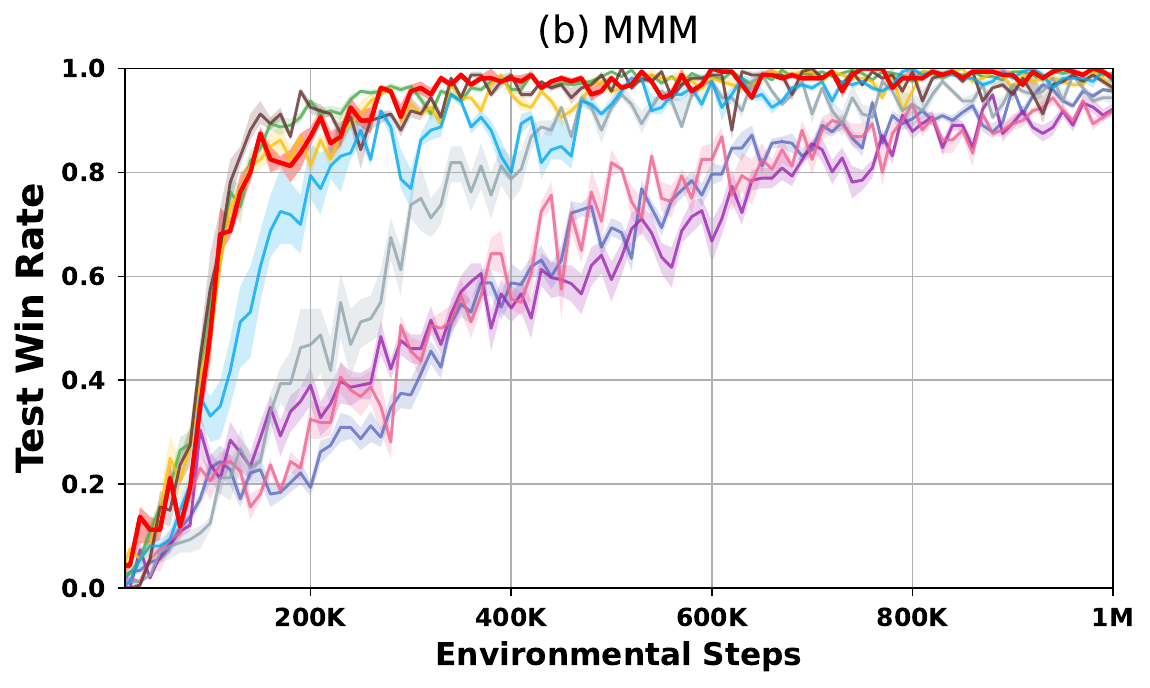}
		\includegraphics[width=0.32\columnwidth]{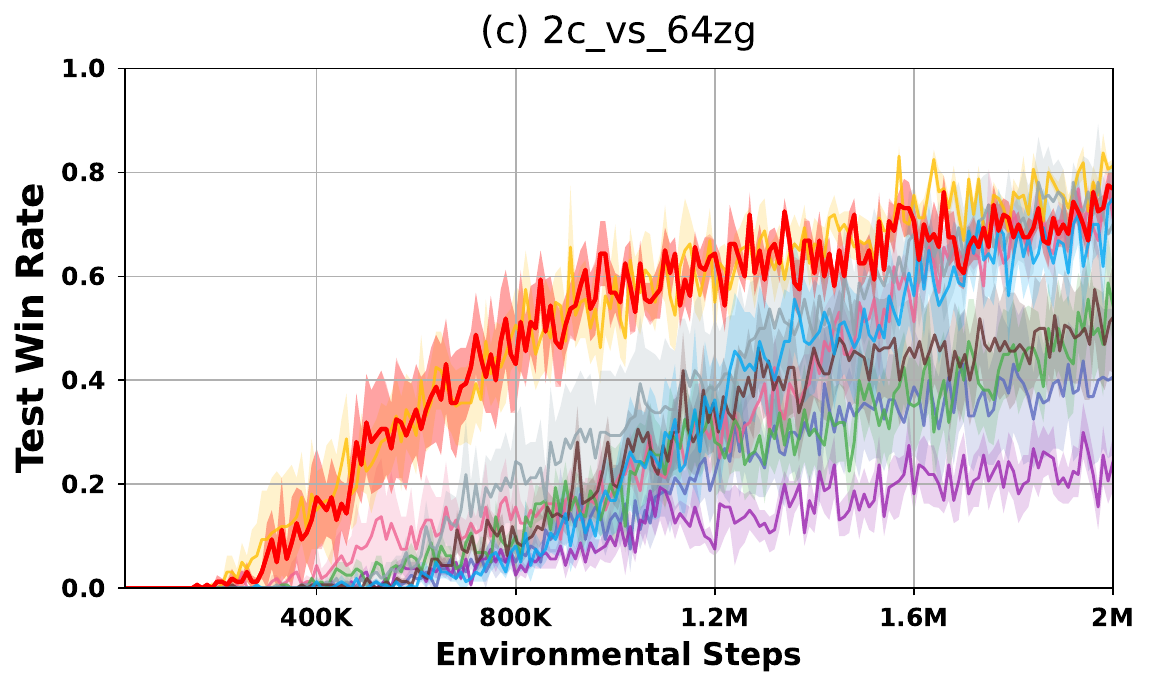}
	\end{minipage}
	
	\begin{minipage}[b]{\columnwidth} 
		\centering
		\includegraphics[width=0.32\columnwidth]{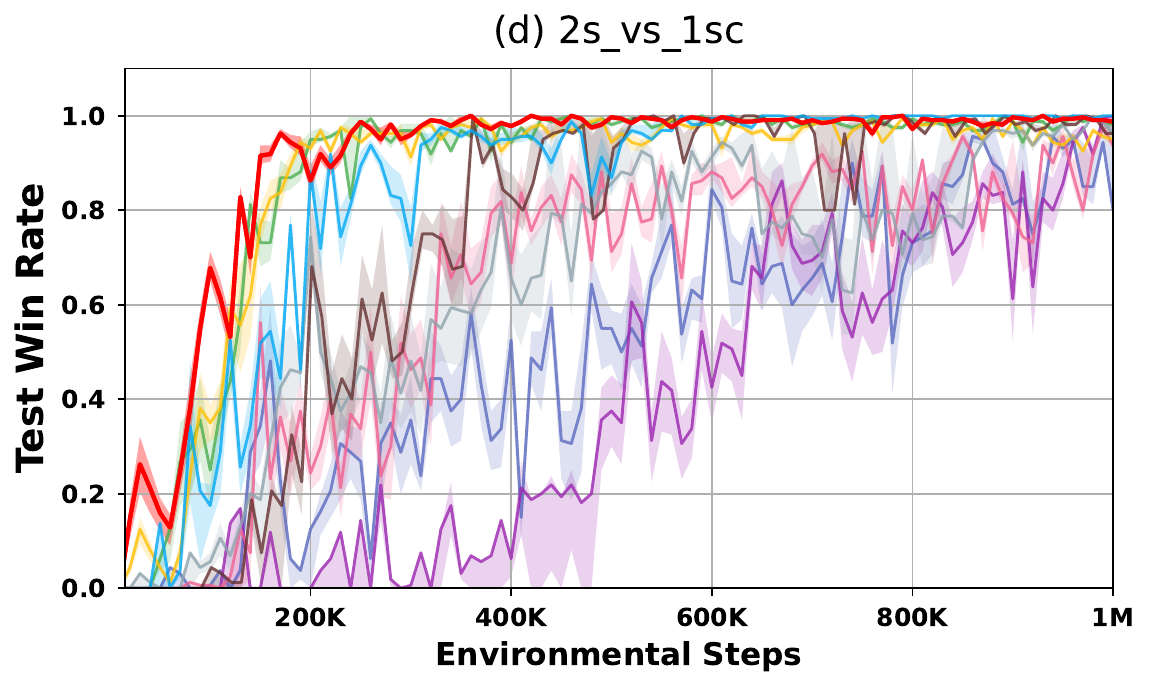} 
		\includegraphics[width=0.32\columnwidth]{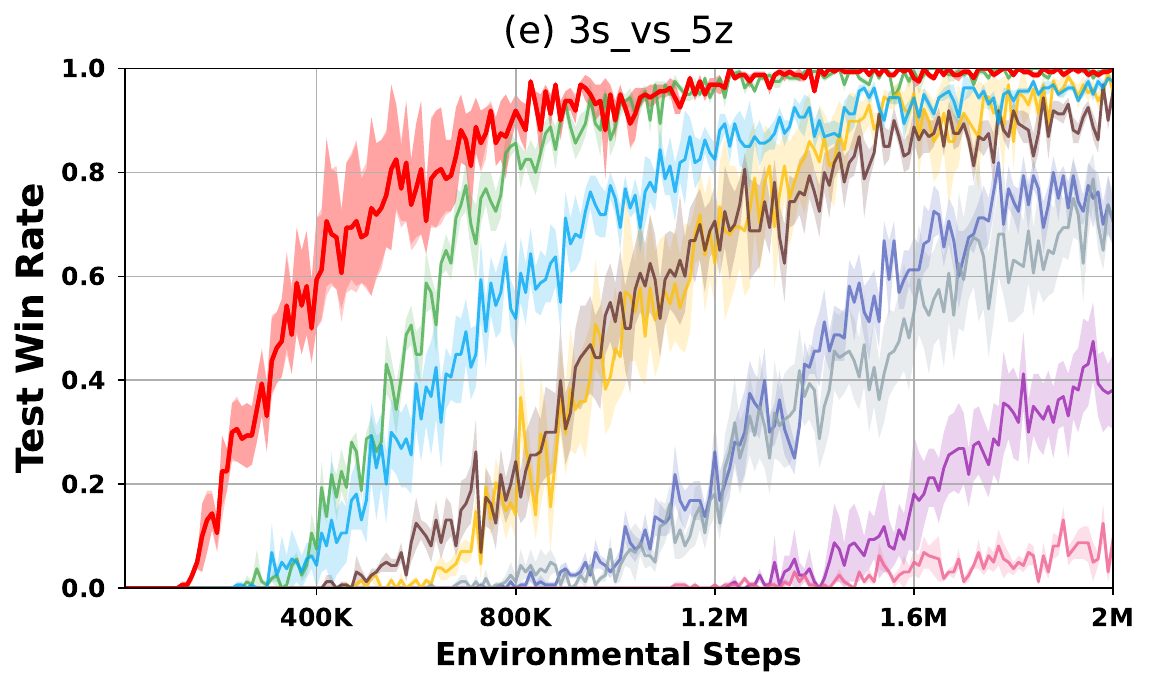}
		\includegraphics[width=0.32\columnwidth]{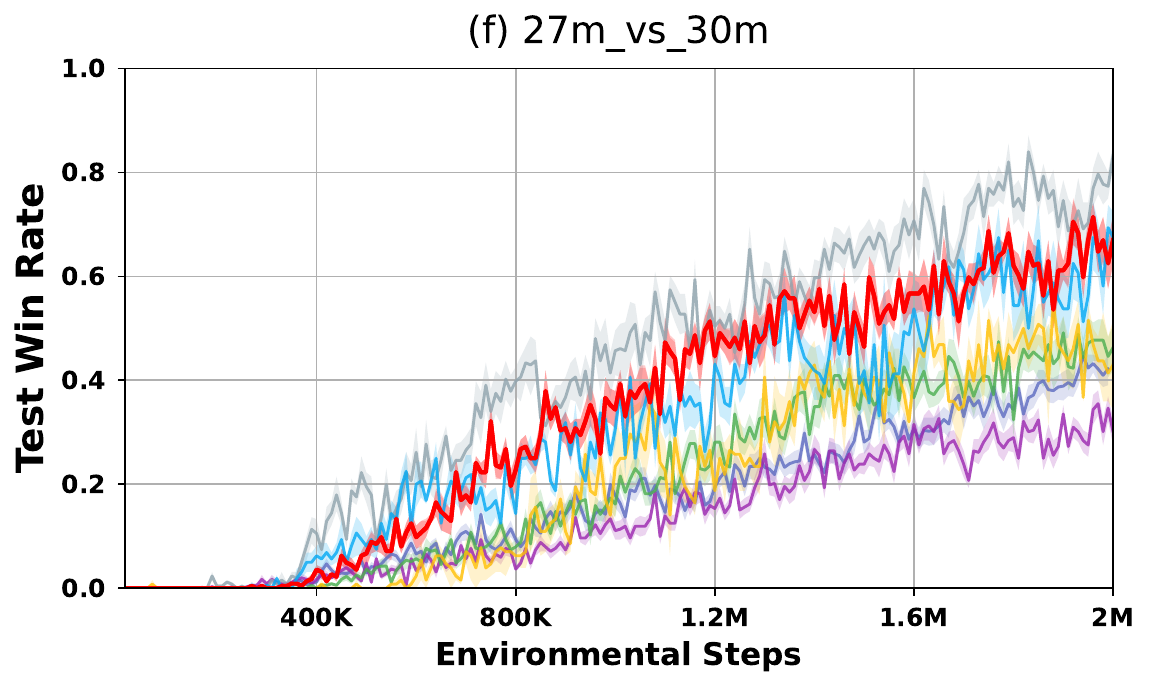} 
	\end{minipage}
	
	\caption{Win Rate of the StarCraft Multi-Agent Scenarios.}
	\label{fig:result:smac:1}\vspace*{-0.4cm}
\end{figure}

SMAC is a well-known benchmark which comprises two teams of agents engaging in combat scenarios. Following the evaluation protocol of DRIMA~\cite{drima} for risk-averse SMAC, we first study the performance of RiskQ in explorative, dilemmatic and random settings for the 3s\_vs\_5z scenario. In the explorative setting, agents behave heavily exploratory during training, thus they must consider the risk brought on by heavily exploration of other agents. In the dilemmatic setting, agents have increased exploratory behaviors and they are punished by their decreased heath. In this setting, learning algorithms should consider risk to prevent the learning of locally optimal policies. In the random setting, one agent performs random actions 50\% of the time during testing. As depicted in Figure~\ref{fig:result1:car} (d-f), RiskQ obtains the best performance. Please refer to Figure~\ref{fig:result:smac3:appendix} in Appendix~\ref{smac} for more results in these risk-sensitive SMAC scenarios. Combining previous results from the MACN and the MACF environments, we can conclude that RiskQ can yield promising results in environments that require risk-sensitive cooperation.

Then we evaluate the performance of RiskQ in the standard SMAC setting. As demonstrated in Figure~\ref{fig:result:smac:1} (a) and (b), RiskQ achieves the best performance in the MMM2 scenarios. In MMM2, RMIX achieves the second best results in the end. This demonstrates that it is important to consider risk in highly stochastic environments. RiskQ are among the best performing algorithms in the MMM scenario. Notably, RiskQ satisfies the RIGM principle for DRMs, further suggesting the necessity of coordinated risk-sensitive cooperation. 

The test win rates for the 2c\_vs\_64zg and 2s\_vs\_1sc scenarios are shown in Figure~\ref{fig:result:smac:1} (c) and (d). For the 2c\_vs\_64zg scenario, RiskQ and ResZ are the best performing algorithms, while RMIX performs poorly in this scenario. Albeit DRIMA matches the performance of RiskQ in the end, its learning speed is much slower than that of RiskQ. For the 2s\_vs\_1sc scenario, RiskQ, ResQ and ResZ achieve optimal performance, with RiskQ achieving near-optimal performance merely after 0.3 million steps. Furthermore, DRIMA learns slower than RiskQ, and the performance of RMIX is unstable. 

For the 3s\_vs\_5z scenario, as illustrated in Figure~\ref{fig:result:smac:1} (e), RiskQ achieves the optimal performance after only 1.2 million training steps. The risk-sensitive algorithms DRIMA and RMIX do not match up to the performance of RiskQ. As for the 27m\_vs\_30m scenario, RiskQ is the second-best algorithm.

\subsection{Ablation Study and Discussion}

\begin{figure}[!t]
	\centering
	\begin{minipage}[b]{\columnwidth} 
		\centering
		\includegraphics[width=0.32\columnwidth]{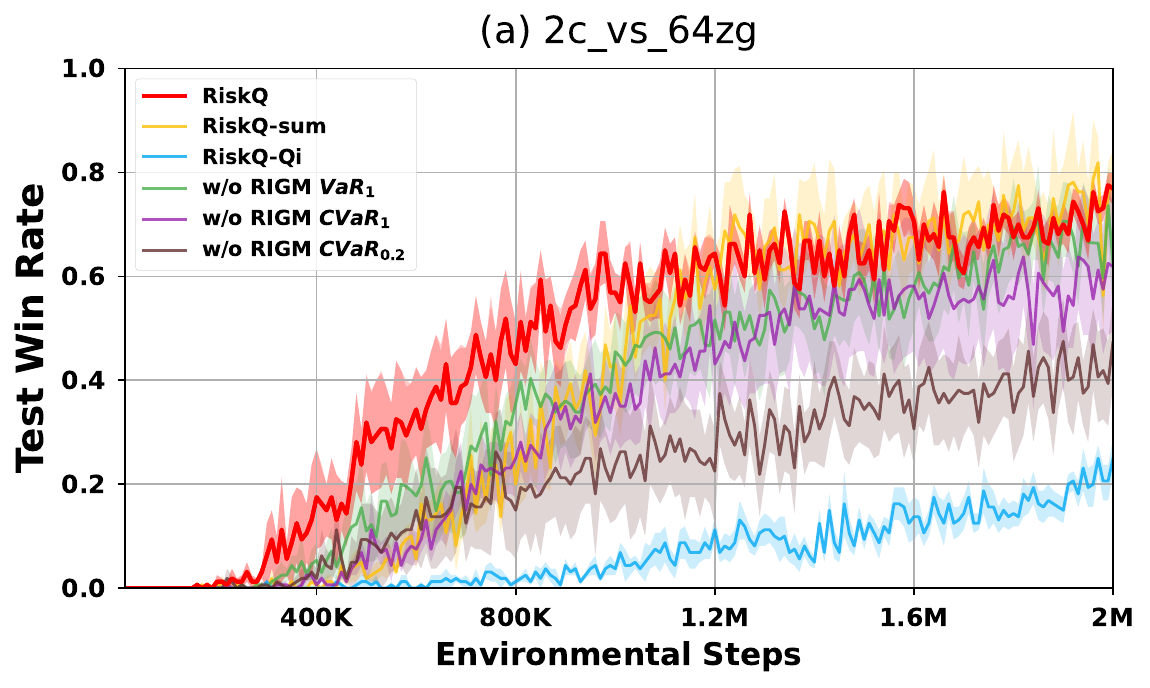} 
		\includegraphics[width=0.32\columnwidth]{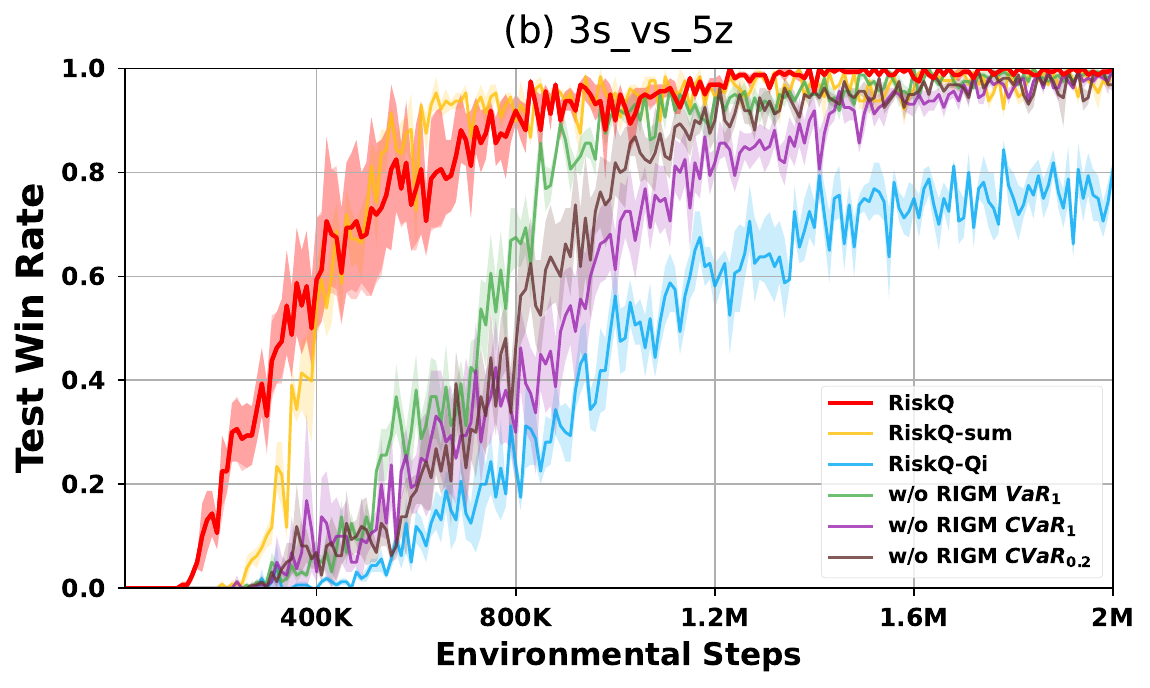} 
		\includegraphics[width=0.32\columnwidth]{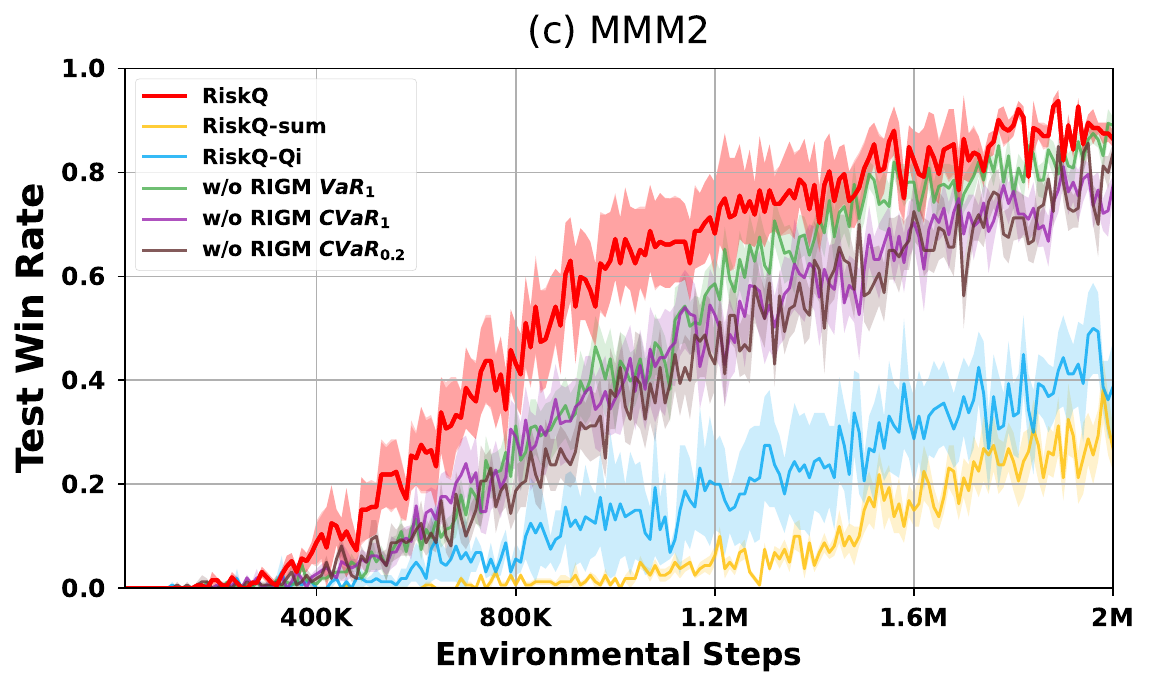} 
	\end{minipage}
	
	\begin{minipage}[b]{\columnwidth} 
		\centering
		\includegraphics[width=0.32\columnwidth]{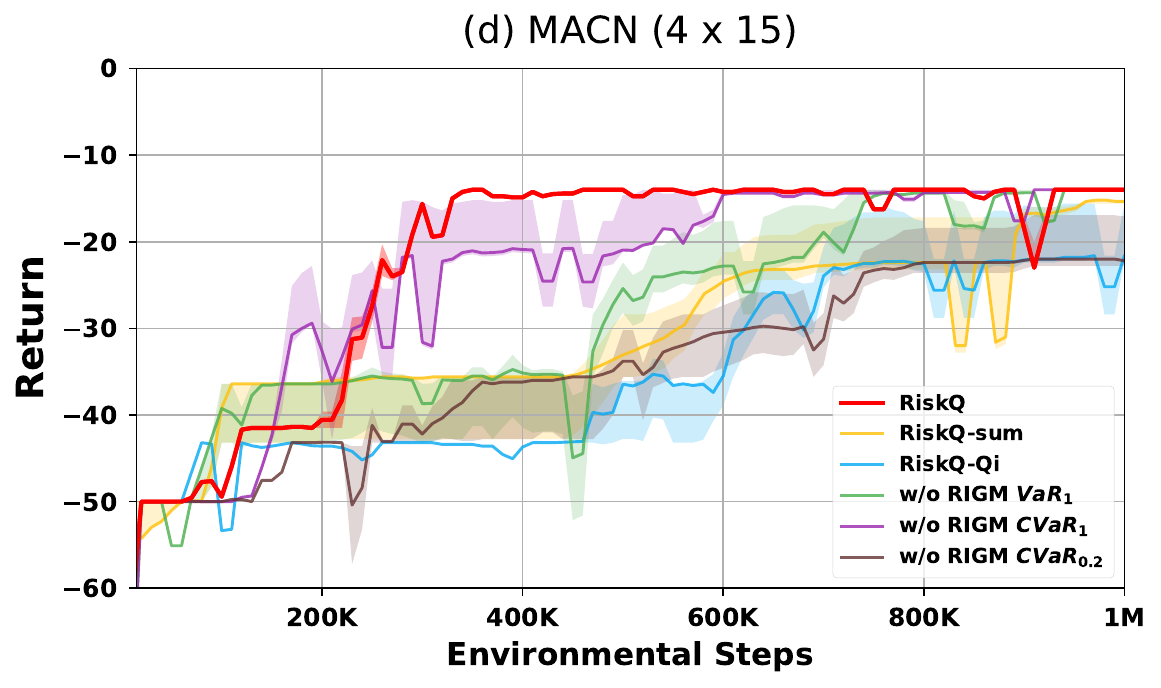} 
		\includegraphics[width=0.32\columnwidth]{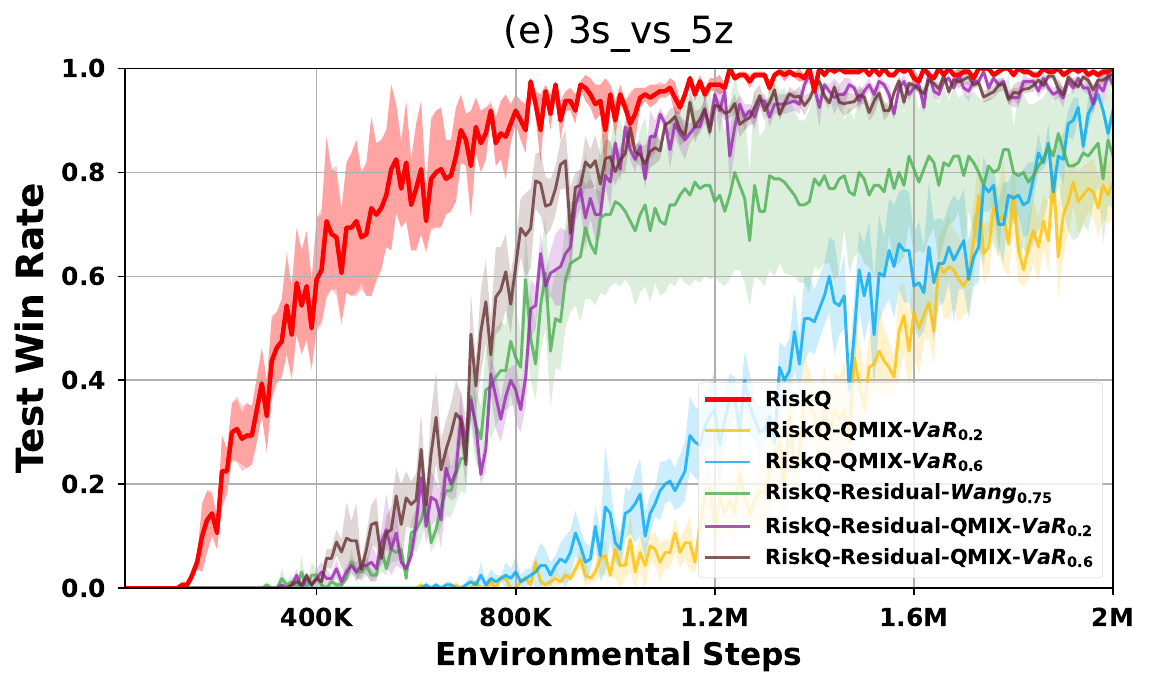} 
		\includegraphics[width=0.32\columnwidth]{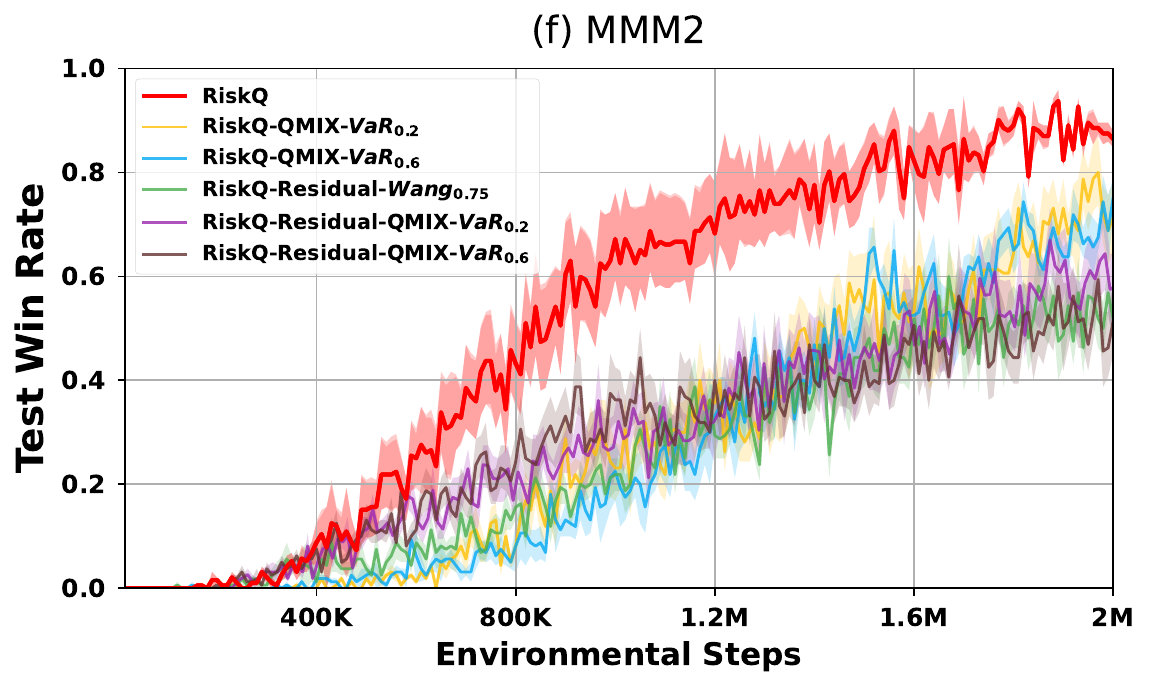} 
	\end{minipage}
	
	\begin{minipage}[b]{\columnwidth} 
		\centering
		\includegraphics[width=0.32\columnwidth]{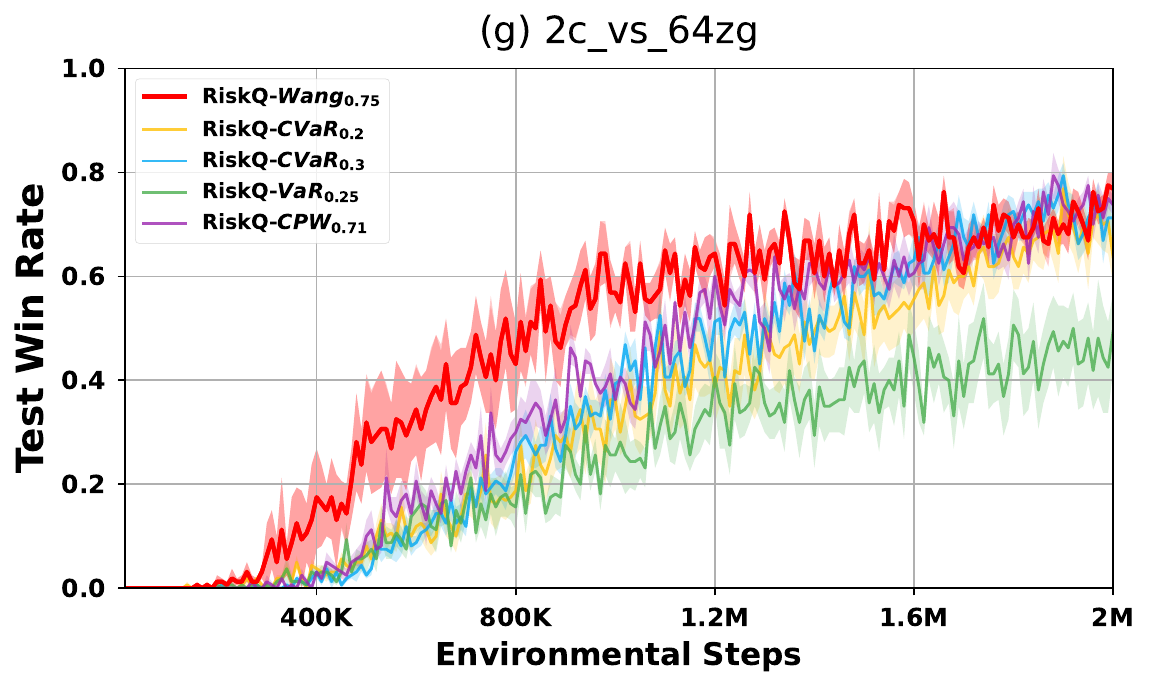} 
		\includegraphics[width=0.32\columnwidth]{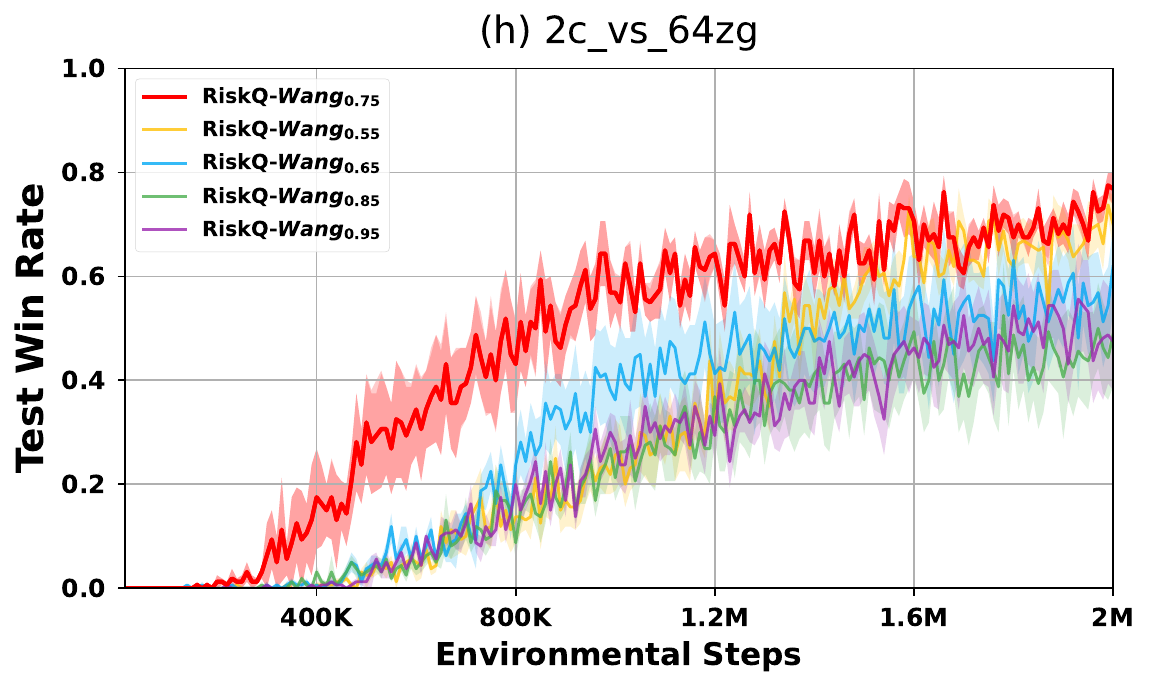} 		
		\includegraphics[width=0.32\columnwidth]{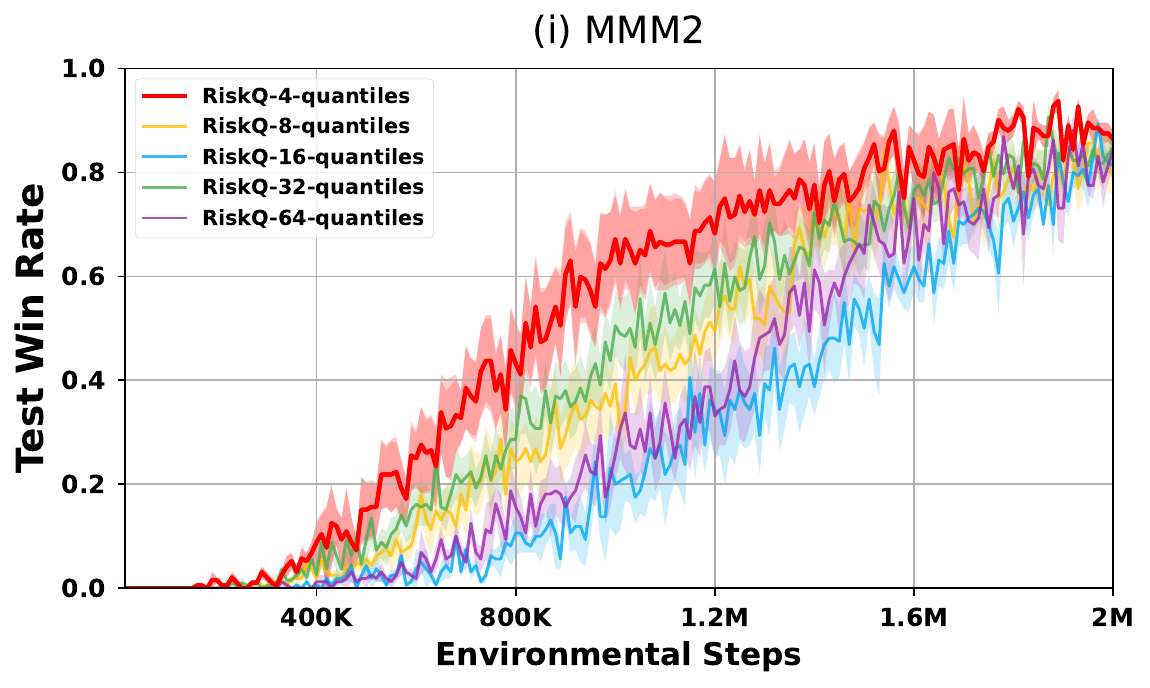}
	\end{minipage}
	\caption{Ablation Study: (a-d) impact of not satisfying RIGM and different design of RiskQ mixer; (e-f) impact of functional representation limitations; (g-i) impact of different risk metrics, risk levels, and number of quantiles}\label{fig:result:smac}
\end{figure}

To investigate the reasons behind RiskQ’s promising results, we analyze different designs of RiskQ on the SMAC and the MACN scenarios.  First, we study the necessity of satisfying the RIGM principle by making about 50\% of RiskQ agents follow different risk measures. In Figure~\ref{fig:result:smac} (a-d), w/o RIGM VaR$_1$, w/o RIGM CVaR$_{1}$ and w/o RIGM CVaR$_{0.2}$ indicate that about 50\% of agents act according to the VaR$_1$ (risk-seeking), CVaR$_1$ (risk-neutral) and the CVaR$_{0.2}$ (risk-averse) metrics, respectively. These risk measures are not the risk measure $\psi_{\alpha}$ = Wang$_{0.75}$ used by other agents. As depicted in Figure~\ref{fig:result:smac} (a-d), RiskQ performs poorly in all the three cases, highlighting the importance of satisfying the RIGM principle that agents act according to the same risk measure.

Moreover, we analyze different designs of the RiskQ mixer through two variants: RiskQ-Sum and RiskQ-Qi. Instead of using the attention mechanism, RiskQ-Sum models the percentile $\theta(\ve \tau, \ve u, \omega)$ as the sum of the percentiles of per-agent's utilities $\sum_{i=1}^N \theta_i(\tau_i, u_i, \omega)$. RiskQ-Qi represents each percentile $\theta(\ve \tau, \ve u, \omega)$ as the expectation of state-action function. As shown in Figure~\ref{fig:result:smac} (a-d), both two variants perform inferior to RiskQ in most cases.

By modeling the quantiles $\theta(\omega)$ of joint return distribution as the weighted sum of the quantiles $\theta_i(\omega)$ of each agent's return distribution, RiskQ suffers from representation limitations. To study whether the representation limitations impact the performance of RiskQ, we design three variants of RiskQ: RiskQ-QMIX, RiskQ-Residual, and RiskQ-Residual-QMIX. RiskQ-QMIX models the monotonic relations among $\theta_i(\omega)$ and $\theta(\omega)$ in the manner of QMIX~\cite{QMIX}. RiskQ-Residual models the joint value distribution without representation limitations by using residual functions~\cite{ResQ}. RiskQ-Residual-QMIX combines RiskQ-Residual and QMIX. RiskQ-QMIX and RiskQ-Residual-QMIX satisfy the RIGM principle for the VaR metric, and RiskQ-Residual satisfies the RIGM for the VaR and DRM metrics. Please refer to the appendix~\ref{appendix:riskq:qmix},~\ref{appendix:riskq:residual} and~\ref{appendix:riskq:residual:qmix} for further details and their proofs.

We evaluate the performance of the three RiskQ variants using three different risk metrics (Wang$_{0.75}$, VaR$_{0.2}$, and VaR$_{0.6}$). In Figure~\ref{fig:result:smac} (e) and (f), a method with its risk metric is denoted as method-metric. For example, RiskQ-QMIX-VaR$_{0.2}$ represents RiskQ-QMIX with the risk metric VaR$_{0.2}$. As can be observed from Figure~\ref{fig:result:smac} (e) and (f), although these three variants can model more complex functional relationships among quantiles, their performance is unsatisfactory for the 3s\_vs\_5z and MMM2 scenarios. This suggests that the representation limitation of RiskQ does not significantly impact its performance. Finding a better network architecture to make the algorithm free from representation limitations and have better performance is a prospective future work. 

To systematically evaluate RiskQ, we evaluate the impact of various risk metrics, risk levels and number of percentiles. The respective results are depicted in in Figure~\ref{fig:result:smac} (g-i). The variations in these factors could impact the performance of RiskQ. 

RiskQ uses QR-DQN~\cite{QRDQN} and IQN~\cite{iqn} to learn value distributions. It shares QR-DQN and IQN's converging property. As stated in~\cite{dis_book} and~\cite{statistics}, the greedy distributional Bellman update operator of IQN is not a contraction mapping, which is an inherent drawback of the distributional RL. Recently, ~\cite{LimM22} modified IQN with a new distributional Bellman operator, indicating the optimal CVaR policy corresponds to a fixed point. However, but its general convergence is unclear. As shown in Appendix (Figure~\ref{fig:result:ablation:converge}), the new method that combines RiskQ with \cite{LimM22} performs poorly. This suggests that remedying the non-contraction mapping issues may not be important enough for performance improvement as existing risk-sensitive RL methods (e.g., IQN) are already working well. 

For risk-sensitive exploration, we have combined RiskQ with LQN~\cite{lqn}, the results are depicted in the Appendix (Figure~\ref{fig:result:ablation:exploration}). It shows that risk-sensitive exploration is a new direction of the future work.

\section{Conclusion}
It is important to coordinate behaviors of multi-agents in risk-sensitive environments. We have formulated the coordination requirement as the risk-sensitive individual-global-maximization (RIGM) principle which is a generalization of the individual-globla-maximization (IGM) principle and the Distributional IGM principle. Existing multi-agent value factorization does not satisfy the RIGM principle for common risk metrics such as the Value at Risk (VaR) or distorted risk measures. We propose, RiskQ, a risk-sensitive value factorization approach for Multi-Agent Reinforcement Learning (MARL). RiskQ satisfies the RIGM theorem for the VaR and distorted risk measures via modeling the quantile of joint state-action return distribution as weighted sum of the quantiles of per-agent return distribution utilities. We show that RiskQ can obtain promising results through extensive experiments. 

\paragraph{Acknowledgement}
This work was partially supported by the National Natural Science Foundation
of China (No. 61972409), by the FuXiaQuan National Independent Innovation Demonstration Zone Collaborative Innovation Platform (No.3502ZCQXT2021003), the Fundamental Research Funds for the Central Universities (No. 0720230033), by PDL (2022-PDL-12), by the China Postdoctoral Science Foundation (No.2021M690094). We would like thank the anonymous reviewers for their valuable suggestions.

\bibliographystyle{unsrt} 
\bibliography{QGraph}

\newpage
\begin{center}
 \huge \textbf{Appendix}
\end{center}

\appendix
\setcounter{table}{0}  
\setcounter{figure}{0}
\setcounter{equation}{0}
\setcounter{theorem}{0}
\renewcommand{\theequation}{\thesection.\arabic{equation}}

\section{Background}
\subsection{Distributional RL and Risk}\label{sec:back:risk}

In this work, we \textbf{interchangeably} use the terms: stochastic value function and return distribution. The state-action return distribution $Z(\ve \tau, \ve u)$ can be modelled using quantile functions $\theta$ of a random variable Z, which is defined as follows. 
\begin{equation}
    \theta_{Z}(\ve \tau, \ve u, \omega) = \text{inf}\{z\in \mathcal{R}: \omega \leq C\!D\!F_{Z}(z)\}, \quad \forall{\omega} \in [0, 1]\label{def:theta}
\end{equation}
where $C\!D\!F_{Z}(z)$ is the cumulative distribution function of $Z(\ve \tau, \ve u)$. The quantile function $\theta_{Z}(\omega)$ may be referred as generalized inverse CDF in other literature~\cite{dmix}. For notation simplicity, we denote $\theta_Z(\omega)$ as $\theta(\omega)$.

QR-DQN~\cite{QRDQN} and IQN~\cite{iqn} model the stochastic value function $Z(\tau, u)$ as a mixture of $n$ Dirac functions. 
\begin{equation}
    Z(\tau, u) = \sum_{i=1}^n p_i(\tau, u, \omega_i) \delta_{\theta(\ve \tau, \ve u, \omega_i)}
\end{equation}
where $\delta_{\theta(\ve \tau, \ve u, \omega_i)}$ is a Dirac Delta function whose value is $\theta(\ve \tau, \ve u, \omega_i)$. $\omega_i$ is a quantile sample. $p_i(\tau, u, \omega_i)$ is the corresponding probability of $\theta(\ve \tau, \ve u, \omega_i)$. For QR-DQN~\cite{QRDQN},  $p_i(\tau, u, \omega_i)$ can be simplified as $1/n$. For execution, the action with the largest expected return $\arg \max_{u} \mathbb{E}[Z(\ve \tau, \ve u)]$ is chosen.

In this work, we use $\psi_{\alpha}$ to measure the risk from a return distribution $Z$, where $\alpha$ is the risk level. For example, the Value at Risk metric, VaR$_{\alpha}$, estimate the $\alpha$-percentile from a distribution. For VaR$_{\alpha}$, a small value for $\alpha$ indicates risk-averse setting, whereas a high value for $\alpha$ means risk-seeking scenarios. Risk-sensitive policies act with a risk measure $\psi_{\alpha}$. In this work, we interchangeably use the terms: stochastic value function and return distribution.

\begin{defi}[Value at Risk (VaR)]
Value at Risk (VaR)~\cite{var} is a popular risk metric which measures risk as the minimal reward might be occur given a confidence level $\alpha$. For a random variable $Z$ with cumulative distribution function (CDF), the quantile function $\theta$ and a quantile sample $\alpha \in [0,1]$, $VaR_{\alpha}(Z(\ve \tau, \ve u)) = \theta(\ve \tau, \ve u, \alpha)$. This metric is called percentile as well.     
\end{defi}

\begin{defi}[Distortion risk measure (DRM)]
	Distorted expectation metrics~\cite{distortion,RiskProperties,DSAC}, such as CVaR~\cite{CVaR}, CPW~\cite{CPW}, and Wang~\cite{Wang}, are weighted expectation of return distribution under a distortion function~\cite{distortion,RiskProperties,DSAC}. The distorted expectation of a random variable $Z$ under $g$ is defined as 
	\begin{equation}    
		\psi(Z) = \int_0^1 g'(\omega) \theta(\omega) d\omega
	\end{equation}
	where $g(\omega) \in [0,1]$, $g'(\omega)$ is the derivative of $g(\omega)$.
\end{defi}

\begin{defi}[Conditional Value at Risk(CVaR)]
	\begin{align}
		CVaR_{\alpha}(Z) &= \mathbb{E}_Z[z|z\leq \theta(\alpha)]
	\end{align}
	where $\alpha$ is the confidence level (risk level),  $\theta(\alpha)$ is the quantile function (inverse CDF) defined in~\eqref{def:theta}. CVaR is the expectation of values $z$ that are less equal than the $\alpha$-quantile value ($\theta(\alpha)$) of the value distribution. CVaR is a DRM whose $g(\omega) = \min(\omega/\alpha, 1)$. 
\end{defi}

\begin{defi}[Wang] Wang is a DRM proposed in \cite{Wang}. Its $g(\omega) = \Phi(\Phi^{-1}(\omega) + \alpha)$, where $\Phi$ is the CDF of the Gaussian distribution. The Wang measure is risk-averse if $\alpha > 0$ or risk-seeking for $\alpha < 0$.
\end{defi} 

\begin{defi}[CPW] CPW is a DRM proposed in \cite{CPW}. Its $g(\omega) = \omega^{\alpha}/(\omega^{\alpha} + (1-\omega)^{\alpha})^{\frac{1}{\alpha}}$. Researchers found that $\alpha=0.71$ matches human decision preference. 
\end{defi}

\begin{defi}{Risk-sensitive greedy policy}\cite{iqn} for a value distribution $Z(s,u)$ with a risk measure $\psi_{\alpha}$ is defined as 
	\begin{equation}
		\pi_{\psi_{\alpha}}(s) = \arg \max_u \psi_{\alpha}[Z(s, u)]
	\end{equation}
\end{defi}
In this work, we assume the argmax operator is unique, the action with smallest index is selected to break ties if a tie exists.
\section{RiskQ Theorems and Proofs}

In this section, we show that methods satisfying the IGM principle are insufficient to guarantee the RIGM principle for risk metrics such as the VaR and DRM metrics in Theorem~\ref{thm:igm_fail}. And we show that methods satisfying the DIGM principle are insufficient to guarantee the RIGM principle for the VaR or DRM metrics in Theorem~\ref{thm:digm_fail}. Further, we show that the risk-sensitive algorithm DRIMA does not satisfy the RIGM principle for the VaR metric in Theorem~\ref{thm:drima_fail}.

\subsection{Methods satisfying IGM or DIGM principle}\label{appendix:proof:rigm}

Simply replacing $Q_i$ with $\psi_{\alpha}(Z_i)$ is insufficient to guarantee $[Z_i]_{i=1}^n$ satisfy RIGM for VaR and distorted expectation metrics. 

\begin{theorem}
\label{thm:igm_fail}
Given a deterministic joint action-value function $Q_{jt}$, a stochastic joint action-value function $Z_{jt}$, and a factorization function $\Phi$ for deterministic utilities:
\begin{equation}
Q_{jt}(\tau ,u)=\Phi(Q_1(\tau_1,u_1), ..., Q_n(\tau_n,u_n))
\end{equation}
such that $[Q_i]_{i=1}^n$ satisfy IGM for $Q_{jt}$ under $\tau$, the following risk-sensitive distributional factorization:
\begin{equation}
Z_{jt}(\tau, u)=\Phi(Z_1(\tau_1,u_1), ..., Z_n(\tau_n, u_n))
\end{equation}
is insufficient to guarantee that $[Z_i]_{i=1}^n$ satisfy RIGM for $Z_{jt}(\tau, u)$ with risk metric $\psi_{\alpha}$ such as the VaR metric and the distorted risk measures.
\end{theorem}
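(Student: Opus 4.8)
The plan is to prove this \emph{insufficiency} claim by producing a single explicit counterexample: a concrete factorization function $\Phi$, per-agent return distributions $[Z_i]_{i=1}^n$ whose expectations $Q_i=\mathbb{E}[Z_i]$ satisfy IGM through $\Phi$, and a risk metric $\psi_\alpha$ for which the RIGM correspondence of Definition~\ref{def:rigm} demonstrably fails. First I would take $\Phi$ to be the additive (VDN-style) combiner $\Phi(x_1,\dots,x_n)=\sum_i x_i$, which trivially satisfies IGM for any collection of deterministic utilities, and I would fix $\psi_\alpha=\mathrm{VaR}_{1/2}$, the median, read off from the $\inf$-based quantile function of~\eqref{def:theta}. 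Setting $Q_i=\mathbb{E}[Z_i]$ makes both the IGM hypothesis on $\Phi$ and the distributional replacement $Z_{jt}=\Phi(Z_1,\dots,Z_n)=\sum_i Z_i$ genuine, so it suffices to display distributions for which the individual and joint $\mathrm{VaR}_{1/2}$-greedy actions disagree.

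Next I would instantiate two agents, each with action set $\{1,2\}$ and \emph{independent} returns: let $Z_1(\tau_1,1)$ equal $-1$ with probability $2/5$ and $+1$ with probability $3/5$ while $Z_1(\tau_1,2)\equiv 0$, and symmetrically let $Z_2(\tau_2,1)\equiv 0$ while $Z_2(\tau_2,2)$ equals $-1$ with probability $2/5$ and $+1$ with probability $3/5$. Since $\mathbb{E}[Z_1(\tau_1,1)]=1/5>0=\mathbb{E}[Z_1(\tau_1,2)]$ and symmetrically for agent $2$, the mean-greedy per-agent actions are $1$ and $2$, and with $\Phi=\sum$ the joint mean-greedy action is exactly $(1,2)$; hence $[Q_i]$ satisfy IGM for $Q_{jt}=\Phi(Q_1,Q_2)$, verifying the hypothesis. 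For the risk metric, however, $\mathrm{VaR}_{1/2}[Z_1(\tau_1,1)]=1>0=\mathrm{VaR}_{1/2}[Z_1(\tau_1,2)]$ and symmetrically, so the \emph{collection} of per-agent risk-greedy actions is again $(1,2)$. Computing the joint under independence gives $\mathrm{VaR}_{1/2}[\sum_i Z_i]=1$ at both $(1,1)$ and $(2,2)$, whereas at $(1,2)$ the two risky utilities convolve to a distribution with mass $0.16,0.48,0.36$ on $\{-2,0,2\}$, whose median is $0$; the joint risk-greedy action (after the smallest-index tie-break) is therefore $(1,1)\neq(1,2)$, contradicting the RIGM equality.

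The main obstacle, and the conceptual heart of the argument, is arranging the two orderings to diverge \emph{simultaneously}: the expectation ranking is forced to satisfy IGM while the $\mathrm{VaR}$ ranking must point elsewhere, which I engineer by exploiting that $\mathrm{VaR}$ is \emph{not additive} away from comonotonicity — combining two independent risky utilities at $(1,2)$ produces cancellation that pushes the median below that of pairing a risky utility with a deterministic one. The delicate technical points are the handling of the $\inf$ at the half-mass jump and the tie-breaking convention, both of which I would state explicitly when verifying the quantile values. Finally, to cover the full statement (which asserts failure for distorted risk measures as well), I would note that $\mathrm{CVaR}_\beta$ is likewise only comonotonic-additive, so the same construction — or a minor reweighting of the atoms to separate the conditional tail expectations — reproduces the disagreement for $\psi_\alpha=\mathrm{CVaR}_\beta$, thereby establishing insufficiency for the entire risk-metric class claimed.
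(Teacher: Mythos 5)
Your proposal is correct, and its core is the same move as the paper's: prove insufficiency by exhibiting an explicit counterexample with the additive (VDN-style) combiner $\Phi=\sum_i$ and the median $\mathrm{VaR}_{0.5}$, where the sum structure makes the IGM hypothesis hold automatically while non-additivity of quantiles breaks the risk-sensitive argmax correspondence. The paper's $\mathrm{VaR}$ example uses two identically distributed agents whose two actions are both risky (atoms $\{0.25,1\}$ vs.\ $\{0,100\}$), so the joint median-optimal action $(b,b)$ strictly dominates; your example instead pairs a risky action with a deterministic one per agent and exploits cancellation under independent convolution at $(1,2)$, which is equally valid --- I checked your arithmetic ($F(0)=0.64$ at $(1,2)$ giving median $0$, versus $1$ at $(1,1)$ and $(2,2)$), and note the failure is robust to tie-breaking since $(1,2)$ is not even in the joint argmax set. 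Where you genuinely diverge is the distorted-risk-measure half: the paper switches to a \emph{nonlinear monotonic} mixer $\Phi=Q_1^3+Q_2^3$ with $\psi=\mathrm{CVaR}_1$ (the expectation), i.e.\ a Jensen-type effect where $\mathbb{E}[Z^3]\neq(\mathbb{E}[Z])^3$ reorders actions, whereas you keep the additive $\Phi$ and invoke non-comonotonic-additivity of $\mathrm{CVaR}_\beta$. Your sketch in fact closes without any ``reweighting of atoms'': on your exact example with $\beta=0.5$, the individual risky action has $\mathrm{CVaR}_{0.5}=\mathbb{E}[z\mid z\le 1]=0.2>0$, so the individual collection is again $(1,2)$, while the joint values are $0.2$ at $(1,1)$ and $(2,2)$ but $(-2\cdot 0.16)/0.64=-0.5$ at $(1,2)$, so RIGM fails. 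One caveat worth stating explicitly in your write-up: your additive construction \emph{cannot} work for $\mathrm{CVaR}_1$ itself, since expectation is additive and the sum factorization then satisfies RIGM --- this is precisely why the paper resorts to a nonlinear $\Phi$ for its DRM instance; your route buys a single unified example covering $\mathrm{VaR}$ and $\mathrm{CVaR}_{\beta<1}$, while the paper's buys the complementary observation that even the risk-neutral metric fails once $\Phi$ is a non-additive monotone mixer.
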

\begin{proof}
	
We first show that VDN does not guarantee the RIGM  for the VaR metric, then we show that QMIX does not guarantee the RIGM principle for the CVaR metric (a distorted risk measures). We prove this theorem by contradiction. 

The VDN~\cite{VDN} algorithm is factorization methods that satisfy the IGM theorem but it cannot guarantee the RIGM principle. VDN model $Q_{jt}(\ve \tau, \ve u) = \sum_{i=1}^n Q_i(\tau_i, u_i)$. Simply replacing the utility $Q_i$ as $Z_i$. The function becomes $Z_{jt}(\ve \tau, \ve u) = \sum_{i=1}^n Z_i(\tau_i, u_i)$.

We consider a degenerated case where there are two agents and single full observable state $s$. Agents have two actions: $a$ and $b$. The probability distribution function for $Z_i(\tau_i, a)$ and $Z_i(\tau_i, b)$ is defined as follows.
\begin{equation} p(Z_i(\tau_i, a)) =
\begin{cases}
50\% &  Z_i=0.25  \\
50\% &  Z_i=1
\end{cases} \end{equation}

\begin{equation} p(Z_i(\tau_i, b)) =
\begin{cases}
50\% & Z_i = 0 \\
50\% & Z_i=100 
\end{cases} \end{equation}

We assume that  $Z_2(\tau_2, u_2) = Z_1(\tau_1, u_1)$. For VDN, $Z_{jt}(\ve \tau, \ve u) = Z_1(\tau_1, u_1) + Z_2(\tau_2, u_2)$. The risk metric we consider is the percentile metric. $\psi_{\alpha} = VaR_{0.5}$.  $VaR_{\alpha}(Z) = \min_{Z}\{z|F_{Z}(z)\ge \alpha\}$. $\arg\max\{VaR_{0.5}[Z_1(\tau_1, u_1)]\} = a$

\begin{equation}
VaR_{0.5}[Z_{jt}(\ve \tau, \ve u)]
= \begin{cases}
1.25 & \ve u = (a, a) \\
1 & \ve u = (a, b) \\
1 & \ve u = (b, a) \\
100 &  \ve u = (b, b)\label{ddn:notrigm}
\end{cases} 
\end{equation}

Assume, to the contrary, the VDN algorithm satisfy the RIGM theorem for the VaR metric. As $\arg\max_u\{VaR_{0.5}[Z_1(\tau_1, u_1)]\} = a$ and $\arg\max_u\{VaR_{0.5}[Z_2(\tau_2, u_2)]\} = a$, for the risk metric $\psi_{0.5}=VaR_{0.5}$, to satisfy the RIGM theorem, the optimal action that maximize the risk metric should be $(a, a)$.  However, as it is shown in \eqref{ddn:notrigm}, the action maximizing $VaR_{0.5}Z(\tau, \ve u)$ is $(b,b)$, rather than $(a,a)$, a contradiction. \emph{We have shown that VDN does not satisfy the RIGM principle for the VaR metric.}

In this following, we show that QMIX does not satisfies the RIGM theorem for the $CVaR$ metric. QMIX is a method that satisfies the IGM theorem. It learns $Q_{jt}(\tau, u) = \Phi(Q_1, ..., Q_n)$ to approximate the optimal policy of the state-action value function $Q(\tau, u)$, where $\Phi$ is a monotonic increasing function with respect to $Q_i$. $\Phi = Q_1^3(\tau_1, u_1) + Q_2^3(\tau_2, u_2)$. 
Let's consider a simple one-step matrix game with two agents each with actions a and b. Using $\Phi$ as the factorization function, $Z_{jt}(\tau, u) = Z_1^3(\tau_1, u_1) + Z_2^3(\tau_2, u_2)$, where $Z_1 = Z_2$. $Z_{jt}$ could lead to incorrect estimation of the optimal actions for the risk metric $\psi_1 = CVaR_1$. $Z_i(\tau_i, a)$ and $Z_i(\tau_i, b)$ is defined as follows. 
\begin{equation} Z_i(\tau_i, a) = 2 \end{equation}

\begin{equation} Z_i(\tau_i, b) = \begin{cases} 
3 &  50\% \text{ of the time} \\
0 & 50\% \text{ of the time} 
\end{cases} 
\end{equation}
Let's assume that, for action $a$, $Z_1(s, a) = 2$ for 100\% of the time; for action $b$, $Z_1(s, b) = 3$ for 50\% of the time and $Z_1(s, b) = 0$ for 50\% of the time. 
Clearly $\psi_1[Z_1(\tau_1, a)] = CVaR_1[Z_1(\tau_1, a)]$ = 2 and $\psi_1[Z_1(\tau_1, b)]= CVaR_1[Z_1(\tau_1, b)] = 1.5$.  If $Z_{jt}$ and $Z_i$ satisfy the RIGM theorem, then the optimal action for $Z_{jt}$ should be $(a, a) = (\arg\max_u\psi_1[Z_1(\tau_1, u_1)], \arg\max_u\psi_1[Z_2(\tau_2, u_2)])$. However, $\psi_1[Z_1^3(\tau, a)] = 8$  and $\psi_1[Z_1^3(\tau, b)] = 13.5$. $\arg\max_u \psi_1[Z_{jt}(\ve \tau, \ve u)] = (b, b)$ rather than $(a,a)$. A contradiction is found. \emph{We have shown that methods satisfy the IGM principle do not guarantee the RIGM principle with risk metric $\psi_{\alpha}$.}
\end{proof}

We show that DIGM factorization methods is insufficient to guarantee the satisfaction of the RIGM theorem. 
\begin{theorem}
\label{thm:digm_fail}
Given a stochastic joint action-value function $Z_{jt}$, and a distributional factorization function $\Phi$ for the stochastic utilities which satisfy the DIGM theorem: the following distributional factorization:
\begin{equation}
Z_{jt}(\tau, u)=\Phi(Z_1(\tau_1,u_1), ..., Z_n(\tau_n, u_n))
\end{equation}
is insufficient to guarantee that $[Z_i]_{i=1}^n$ satisfy RIGM for $Z_{jt}(\tau, u)$ with risk metric $\psi_{\alpha}$.
\end{theorem}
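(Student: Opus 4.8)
The plan is to prove the statement by a counterexample: I will fix a single factorization function $\Phi$ that provably satisfies DIGM and exhibit per-agent return distributions for which the expectation-based (DIGM) argmax is consistent across the joint and the individual levels, yet the $\psi_{\alpha}$-based (RIGM) argmax is not. The cleanest choice is the additive factorization $\Phi(Z_1,\dots,Z_n)=\sum_{i=1}^n Z_i$, where the per-agent returns are treated as independent when forming the joint. Additivity is exactly what forces DIGM to hold for free, while the non-additivity of $\psi_{\alpha}$ leaves room to reorder the joint actions.

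First I would verify that this $\Phi$ satisfies DIGM for arbitrary per-agent distributions. By linearity of expectation, $\mathbb{E}[Z_{jt}(\boldsymbol{\tau},\mathbf{u})]=\sum_{i=1}^n\mathbb{E}[Z_i(\tau_i,u_i)]$, so maximizing the joint expectation separates coordinate-wise and $\arg\max_{\mathbf{u}}\mathbb{E}[Z_{jt}]=(\arg\max_{u_1}\mathbb{E}[Z_1],\dots,\arg\max_{u_n}\mathbb{E}[Z_n])$. Hence $[Z_i]_{i=1}^n$ and $Z_{jt}$ satisfy DIGM by construction, and it remains only to make RIGM fail for some $\psi_{\alpha}$.

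Next I would instantiate two identical agents with actions $a,b$ and distributions engineered so that the expectation orders the actions oppositely to the risk metric. Concretely, take $Z_i(\tau_i,a)$ equal to $0.25$ or $1$ each with probability $1/2$, and $Z_i(\tau_i,b)$ equal to $0$ or $100$ each with probability $1/2$. Then $\mathbb{E}[Z_i(b)]=50>0.625=\mathbb{E}[Z_i(a)]$, so the DIGM-optimal joint action is $(b,b)$, matching the per-agent expected-optimal action $b$. A direct computation of the quantile of the summed (convolved) distribution gives $\mathrm{VaR}_{0.5}[Z_{jt}(b,b)]=100$, which exceeds $\mathrm{VaR}_{0.5}$ of every other joint action ($1.25$ for $(a,a)$ and $1$ for $(a,b),(b,a)$), so $\arg\max_{\mathbf{u}}\mathrm{VaR}_{0.5}[Z_{jt}]=(b,b)$; yet per-agent $\mathrm{VaR}_{0.5}[Z_i(a)]=0.25>0=\mathrm{VaR}_{0.5}[Z_i(b)]$ gives $\arg\max_{u_i}\mathrm{VaR}_{0.5}[Z_i]=a$. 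The collection of individual risk-greedy actions $(a,a)$ therefore differs from the joint risk-greedy action $(b,b)$, contradicting RIGM for $\psi_{\alpha}=\mathrm{VaR}_{0.5}$. I would then note that the \emph{same} construction also breaks a distorted measure: since $\mathrm{CVaR}_{0.5}[Z_i(a)]=0.25>0=\mathrm{CVaR}_{0.5}[Z_i(b)]$ while $\mathrm{CVaR}_{0.5}[Z_{jt}(b,b)]=66.\overline{6}$ dominates $\mathrm{CVaR}_{0.5}$ of the other joint actions, the individual argmax $(a,a)$ again disagrees with the joint argmax $(b,b)$.

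I expect the main obstacle to be conceptual rather than computational: one must confirm that the chosen $\Phi$ genuinely lies inside the DIGM-satisfying class while still violating RIGM, i.e., that the expectation ordering is forced to agree at the joint and individual levels (guaranteed here by linearity) precisely while the non-additivity of $\mathrm{VaR}$ and $\mathrm{CVaR}$ produces a true reordering of the joint distribution. Care is also needed with the infimum definition of the quantile in Definition~\ref{def:theta} and with the tie-breaking convention when reading off the $0.5$-quantile of the discrete summed distribution, so that every argmax in the argument is unambiguous.
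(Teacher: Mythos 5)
Your proposal is correct and takes essentially the same route as the paper: the paper's own proof of this theorem refutes RIGM via an additive DIGM-satisfying factorization (the ResZ instance $Z_{jt}=2Z_1+2Z_2$ with the residual $Z_r\equiv 0$), using exactly your two-point distributions $\{0.25,1\}$ and $\{0,100\}$ and the same $\mathrm{VaR}_{0.5}$ computation that yields joint argmax $(b,b)$ against individual argmaxes $(a,a)$. Your two additions --- the explicit linearity-of-expectation check that the additive $\Phi$ satisfies DIGM (the paper instead invokes ResZ's known DIGM property, and pairs it with a separate DFAC mean-shape counterexample), and the $\mathrm{CVaR}_{0.5}$ extension of the same instance --- are sound but not needed beyond what the paper records.
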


\begin{proof}

As far as we know, the DFAC framework and the ResZ method satisfies the DIGM principle. We first show that the DFAC framework does not guarantee the RIGM principle for the VaR metric and then show that ResZ does not guarantee the RIGM principle as well.

The DFAC framework learns factorize a joint return distribution $Z_{jt}$ using mean-shape decomposition which is defined as follows. 

\begin{align}
	Z_{jt}(\ve \tau, \ve u) &= \mathbb{E}[Z_{jt}(\ve \tau, \ve u)] + (Z_{jt}(\ve \tau, \ve u) - \mathbb{E}[Z_{jt}(\ve \tau, \ve u)]) \\
	&= Z_{mean}(\ve \tau, \ve u) + Z_{shape}(\ve \tau, \ve u)\\
	&= f(Q_1(\tau_1, u_1), ...  , Q_n(\tau_n, u_n)) + h(Z_1(\tau_1, u_1), ...  , Z_n(\tau_n, u_n)) \\	
\end{align}
where $f$ is factorization function for deterministic utility. $f$ satisfy the IGM principle. The function $h$ models the shape of $Z_{jt}$. $h(Z_1(\tau_1, u_1), ... , Z_n(\tau_n, u_n)) = \sum_{i=1}^n (Z_i(\tau_i, u_i) - Q_i(\tau_i, u_i))$.

We consider a dis-generated case where there is only one agent with two actions. The stochastic utility is defined as follows. 

\begin{equation} Z_1(\tau_1, a) = 5 \end{equation}

\begin{equation} Z_1(\tau_1, b) = \begin{cases} 
		6 &  50\% \text{ of the time} \\
		0 & 50\% \text{ of the time} 
	\end{cases} 
\end{equation}

$Q_1(\tau_1, a) = 5$ and $Q_1(\tau_1, b) = 3$. Let $f = 5^{Q_1(\tau_1, u_1)}$. In this case, for the $VaR_1$ risk measure $\psi_1 = VaR_1$, $\arg\max\psi_1[Z_{jt}(\ve \tau, \ve u)] = \arg\max [f(Q_1)] = a$. That is, the action $a$ maximize $\psi_1[Z_{jt}(\ve \tau, \ve u)]$. However, $\arg \max \psi_1[Z_1(\tau_1, u_1)] = b$ which is different from $a$. To satisfy the RIGM principle, these two actions should be equal. Thus, we have shown that \emph{the DFAC framework does not guarantee the RIGM principle for the $VaR$ metric}.

We prove that ResZ does not guarantee the RIGM principle for the VaR metric through an example. The ResZ~\cite{ResQ} algorithm is distributional factorization methods that satisfy the DIGM theorem but it cannot guarantee the RIGM principle. 

ResZ~\cite{ResQ} learns $Z_{jt}(\ve \tau, \ve u) =  Z_{tot}(\ve \tau, \ve u) + w_r(\ve \tau, \ve u)Z_r(\ve \tau, \ve u)$, where $Z_r(\ve \tau, \ve u) \leq 0$, $Z_{tot}(\ve \tau, \ve u) = \sum_{i=1}^{N}k_iZ_i(\tau_i,u_i)\; k_i\ge0$

We prove this theorem by providing an example. Let's assume a special case where there are two agents and $Z_r(\ve \tau, \ve u) = 0$. Then, $Z_{jt}(\ve \tau, \ve u) = Z_{tot}(\ve \tau, \ve u) = 2 Z_1(\tau_1, u_1) + 2 Z_2(\tau_2, u_2)$.
\begin{equation} p(Z_i(\tau_i, a)) =
\begin{cases}
50\% &  Z_i=0.25  \\
50\% &  Z_i=1
\end{cases} \end{equation}
\begin{equation} p(Z_i(\tau_i, b)) =
\begin{cases}
50\% & Z_i = 0 \\
50\% & Z_i=100 
\end{cases} \end{equation}
\begin{equation}
VaR_{0.5}[Z_{jt}(\ve \tau, \ve u)]
= \begin{cases}
2.5 & \ve u = (a, a) \\
2 & \ve u = (a, b) \\
2 & \ve u = (b, a) \\
200 &  \ve u = (b, b)\label{rmix:notrigm}
\end{cases} 
\end{equation}

The risk metric we consider is $\psi_{0.5}=VaR_{0.5}$. $\arg\max\{VaR_{0.5}[Z_1(\tau_1, u_1)]\} = a$ and $\arg\max\{VaR_{0.5}[Z_2(\tau_2, u_2)]\} = a$, for ResZ, to satisfy the RIGM theorem, the optimal action that maximize the risk metric should be $(a, a)$.  However, as it is shown in \eqref{rmix:notrigm}, the action maximizing $VaR_{0.5}Z_{jt}(\tau, \ve u)$ is $(b,b)$, rather than $(a,a)$, a contradiction.
\end{proof}

\begin{theorem}\label{thm:drima_fail}
    DRIMA~\cite{drima}, a risk-sensitive MARL algorithm, does not guarantee the RIGM principle.
\end{theorem}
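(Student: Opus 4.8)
The plan is to refute RIGM for DRIMA by an explicit counterexample, in the same contradiction style as Theorems~\ref{thm:igm_fail} and~\ref{thm:digm_fail}. First I would recall DRIMA's factorization: it builds the joint return distribution by a \emph{monotone} mixing of the per-agent return distribution utilities at the quantile level, i.e. $\theta(\ve\tau,\ve u,\omega)=\Phi\big(\theta_1(\tau_1,u_1,\omega),\dots,\theta_N(\tau_N,u_N,\omega)\big)$ with $\Phi$ increasing in each argument (a QMIX-style monotone mixer, generally nonlinear). The key structural observation is that if the mixing is per-quantile and monotone, it already \emph{preserves} RIGM for the VaR metric, because $VaR_\alpha$ reads off the single quantile $\theta(\cdot,\alpha)$ and coordinatewise monotonicity of $\Phi$ lets $\arg\max_{\ve u}\Phi(\theta_1(\cdot,\alpha),\dots)$ separate over agents. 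Hence the failure must be sought for a distorted risk measure such as CVaR, which is exactly the metric the paper targets for DRIMA.

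The mechanism is that $CVaR_\alpha$ \emph{integrates} the mixed quantile over the lower tail, $CVaR_\alpha[Z_{jt}]=\tfrac1\alpha\int_0^\alpha \Phi(\theta_1(\omega),\dots,\theta_N(\omega))\,d\omega$, and for a nonlinear $\Phi$ this is not $\Phi\big(CVaR_\alpha[Z_1],\dots,CVaR_\alpha[Z_N]\big)$ (a Jensen-type gap). Consequently the ranking of the joint CVaR over joint actions can disagree with the per-agent rankings, which is precisely what RIGM forbids. This is a genuinely distributional phenomenon: the linear per-quantile mixing $\theta=\sum_i k_i\theta_i$ used by RiskQ commutes with the tail integral and does satisfy RIGM for CVaR (Theorem~\ref{theorem:riskq}), so the failure is caused entirely by the nonlinearity of $\Phi$.

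To make this concrete I would take two agents sharing one fully observable state, each with actions $a,b$, set $Z_2=Z_1$, and choose $Z_i(\tau_i,a)=2$ deterministically while $Z_i(\tau_i,b)$ equals $0$ with probability $\tfrac14$ and $3$ with probability $\tfrac34$; the mixer is the nonlinear monotone $\Phi(z_1,z_2)=z_1^3+z_2^3$ and the risk metric is $\psi_\alpha=CVaR_{0.5}$. A direct computation gives $CVaR_{0.5}[Z_i(\cdot,a)]=2>1.5=CVaR_{0.5}[Z_i(\cdot,b)]$, so each agent's greedy action is $a$; but after mixing, $CVaR_{0.5}[Z_{jt}(a,a)]=16$ while $CVaR_{0.5}[Z_{jt}(b,b)]=27$, so the joint greedy action is $(b,b)\neq(a,a)$, contradicting RIGM. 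Choosing $\alpha<1$ keeps the witness valid even if DRIMA's expectation part is arranged to satisfy DIGM, since $CVaR_{0.5}\neq\mathbb{E}$.

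The main obstacle is confirming that this nonlinear monotone mixer actually lies in DRIMA's admissible class rather than being excluded by its architecture, since the entire failure hinges on $\Phi$ being nonlinear: a linear per-quantile mixer would inherit RiskQ's RIGM guarantee. I would close this by appealing to DRIMA's use of a QMIX-style monotone mixing network (a hypernetwork producing nonnegative weights feeding nonlinear monotone activations), which represents coordinatewise-increasing nonlinear maps such as $z_1^3+z_2^3$; once membership in DRIMA's class is established, the arithmetic above finishes the proof. A secondary check is to verify that DRIMA's disentangled cooperation and environment terms do not inadvertently cancel the nonlinearity on the constructed instance, which I would handle by instantiating those terms explicitly (e.g. fixing the environmental-risk component to a constant) so that only the monotone cooperative mixer acts.
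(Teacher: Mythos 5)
Your proposal is correct and takes essentially the same route as the paper's own proof, which likewise refutes RIGM by a two-agent counterexample with identical per-agent utilities and the nonlinear monotone quantile mixer $\theta_{tran}(\cdot,\omega)=\theta_1(\cdot,\omega)^3+\theta_2(\cdot,\omega)^3$, concluding that the joint risk-greedy action is $(b,b)$ while each agent's greedy action is $a$. The only material difference is the instantiation point: the paper exhibits the failure at $\mathrm{CVaR}_1$ (the expectation, matching DRIMA's risk-neutral uniform averaging of sampled quantiles as the sample count grows), whereas you use $\mathrm{CVaR}_{0.5}$ with a $\tfrac14$--$\tfrac34$ mixture; your arithmetic ($16$ vs.\ $27$, with $21.5$ for the mixed action) checks out, and your framing that monotone per-quantile mixing preserves RIGM for VaR, so the failure must occur for distorted measures, is consistent with the paper's RiskQ-QMIX result.
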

\begin{proof}
DRIMA~\cite{drima} learns transformed action-value estimator $Z_{tran}(\ve \tau, \ve u, \alpha)$ to approximate the true value distribution function $Z_{jt}$.  The quantile value $\theta_{tran}(\ve \tau, \ve u, \alpha)$ for $Z_{tran}(\ve \tau, \ve u, \alpha)$ is defined as $\theta_{tran}(\ve \tau, \ve u, \alpha) = Q_{mix}(\theta_1(\tau_1, u_1, \alpha),..., \theta_n(\tau_n, u_n, \alpha))$, where $\theta_i(\tau_i, u_i, \alpha)$ is the $\alpha-$percentile for the stochastic utility $Z_i(\tau_i, u_i)$ of agent $i$. 

We provide an example for DRIMA that does not satisfy the RIGM theorem for the risk-neutral case. For the risk-neutral case, DRIMA uniformly sample the quantile sample $\omega_j \in [0, 1]$, and the average value $\sum_{j=1}^K \theta(\ve \tau, \ve u, \omega_j)$ of the quantiles is used for action selection, where $K$ is the number of samples. When $K$ becomes infinite, DRIMA can be viewed as using the risk measure $CVaR_1$ (the expectation operation $\mathbb{E}$) to select actions. We will show that DRIMA does not satify the RIGM principle for the $CVaR_1$ metric.  

Assuming there are two agents, each with two actions. $\theta_{tran}(\ve \tau, \ve u, \alpha) = \theta_1^3(\tau_1, u_1, \alpha) + \theta_2^3(\tau_2, u_2, \alpha)$. $Z_1 = Z_2$. $Z_{tran}$ could estimate sub-optimal actions for the risk metric $\psi_1 = CVaR_1$. The stochastic utilities $Z_1(\tau_1, a)$ and $Z_1(\tau_1, b)$ is defined as follows. 
\begin{equation} Z_1(\tau_1, a) = Z_2(\tau_2, a) = 3 \end{equation}
\begin{equation} Z_1(\tau_1, b) = Z_2(\tau_2, b)= \begin{cases} 
5 &  50\% \text{ of the time} \\
0 & 50\% \text{ of the time} 
\end{cases} 
\end{equation}
Clearly $\psi_1[Z_1(s, a)]$ = 3 and $\psi_1[Z_1(s, b)]=2.5$.  If DRIMA satisfies the RIGM theorem, then the risk-sensitive optimal action for $Z_{tran}$ should be $(a, a) = (\arg\max\psi_1[Z_1(\tau_1, u_1)], \arg\max\psi_1[Z_2(\tau_2, u_2)])$. 

\begin{equation} \psi_1[Z_{tran}(\ve \tau, \ve u)] = \begin{cases} 
		54 &  \ve u = (a, a) \\
		89.5& \ve u = (a, b) \; \text{or}\; (b, a)\\ 
		125& \ve u = (b, b)
	\end{cases} 
\end{equation}

Clearly, $\arg \max \psi_1[Z_{tran}(\ve \tau, \ve u)] = (b, b)$ rather than $(a,a)$. Thus, we have shown that DRIMA does not guarantee the RIGM theorem for the $CVaR_1$ metric. 
\end{proof}

\subsection{RiskQ}\label{appendix:riskq}

RiskQ satisfies the RIGM principle for the VaR and distorted expection risk metrics (such as Wang, and CPW).

\begin{theorem}\label{theorem:riskq}
	A stochastic joint state-action return distribution 
	\begin{align}\label{def:main}
		Z_{jt}(\ve \tau, \ve u) &= \sum_{j=1}^J p_j(\ve \tau, \ve u, \omega_j) \delta_{\theta(\ve \tau, \ve u, \omega_j)} \\
  \theta(\ve \tau, \ve u, \omega_j) &= \sum_{i=1}^n k_i \theta_i(\tau_i, u_i, \omega_j)
	\end{align} is distributional factorized by $[Z_i(\tau_i,u_i)]_{i=1}^N$ with risk metric $\psi_{\alpha}$, where $J$ is the number of Dirac Delta functions, $\delta_{\theta(\ve \tau, \ve u, \omega_j)}$ is a Dirac Delta function at $\theta(\ve \tau, \ve u, \omega_j) \in \mathbb{R}$,  $\theta(\ve \tau, \ve u, \omega_j)$ is a quantile function with sample $\omega_j$, $p_j(\ve \tau, \ve u, \omega_j)$ is the corresponding probability for $\delta_{\theta(\ve \tau, \ve u, \omega_j)}$ of the estimated return, $\theta_i(\tau_i, u_i, \omega)$ is the $\omega-quantile$ function for $Z_i(\tau_i, u_i)$ and $k_i \geq 0$.
 \end{theorem}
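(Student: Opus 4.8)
The plan is to reduce the RIGM condition of Definition~\ref{def:rigm} to a coordinatewise separable maximization, exploiting the fact that the construction specifies the \emph{quantile function} of $Z_{jt}$ directly as $\theta(\ve\tau,\ve u,\omega)=\sum_{i=1}^n k_i\,\theta_i(\tau_i,u_i,\omega)$ with $k_i\ge 0$. The first thing I would check is that this is a legitimate quantile function: since each $\theta_i(\tau_i,u_i,\cdot)$ is nondecreasing in $\omega$ and the weights are nonnegative, the weighted sum is again nondecreasing, so $\theta(\ve\tau,\ve u,\cdot)$ is a valid generalized inverse CDF and $Z_{jt}$ is a well-defined return distribution. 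This is the first role played by the assumption $k_i\ge 0$.

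The heart of the proof is a single linearity identity: for the risk metrics in question $\psi_\alpha$ is a \emph{linear functional of the quantile function}, so it commutes with the weighted sum. I would establish $\psi_\alpha[Z_{jt}(\ve\tau,\ve u)]=\sum_{i=1}^n k_i\,\psi_\alpha[Z_i(\tau_i,u_i)]$ in two cases. For $\psi_\alpha=\mathrm{VaR}_\alpha$, by definition $\mathrm{VaR}_\alpha[Z]=\theta_Z(\alpha)$, so evaluating the quantile identity at $\omega=\alpha$ gives the claim immediately. For a distortion risk measure with distortion $g$, I would write $\psi_\alpha[Z_{jt}]=\int_0^1 g'(\omega)\,\theta(\ve\tau,\ve u,\omega)\,d\omega$, substitute $\theta=\sum_i k_i\theta_i$, and pull the finite sum together with the constants $k_i$ outside the integral to obtain $\sum_i k_i\int_0^1 g'(\omega)\,\theta_i(\tau_i,u_i,\omega)\,d\omega=\sum_i k_i\,\psi_\alpha[Z_i]$. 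This is exactly the step that fails for the naive factorizations of Theorems~\ref{thm:igm_fail} and~\ref{thm:digm_fail}: there one sums the random variables, whereas here one sums their quantiles, i.e. forms a comonotonic mixture, which is precisely the regime in which VaR and distortion measures are additive.

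With the identity in hand, the RIGM condition follows from an elementary separable-optimization argument. Writing $v_i(u_i):=\psi_\alpha[Z_i(\tau_i,u_i)]$, the joint objective $\psi_\alpha[Z_{jt}(\ve\tau,\ve u)]=\sum_i k_i v_i(u_i)$ is a nonnegatively weighted sum of terms, each depending on a single agent's action. Because $k_i\ge 0$, maximizing the sum decomposes coordinatewise, so $\arg\max_{\ve u}\sum_i k_i v_i(u_i)=(\arg\max_{u_1}v_1,\dots,\arg\max_{u_N}v_N)$, which is precisely the RIGM condition. Here $k_i\ge 0$ does its second job: a negative weight could flip the direction of an agent's maximization and break the decomposition.

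I expect the only real subtlety, rather than a genuine obstacle, to lie in degenerate weights and ties. If some $k_i=0$ the joint objective is insensitive to $u_i$, so the joint $\arg\max$ is a tie in that coordinate; I would invoke the paper's standing assumption of a unique argmax with smallest-index tie-breaking to align the two sides, or alternatively restrict to $k_i>0$, which is what the attention-based mixer produces in practice. Establishing the linearity identity for general distortion measures is conceptually the load-bearing step, but once the metric is expressed through the quantile function it is a one-line interchange of a finite sum and an integral; the remainder is bookkeeping.
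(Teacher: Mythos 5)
Your proposal is correct and follows essentially the same route as the paper's proof: both rest on the linearity of $\mathrm{VaR}_\alpha$ (evaluation of the quantile function at $\omega=\alpha$) and of distortion measures (integration of $\theta(\cdot)$ against $g'$) over the weighted quantile mixture, yielding $\psi_\alpha[Z_{jt}]=\sum_i k_i\,\psi_\alpha[Z_i]$, after which nonnegativity of the $k_i$ makes the maximization separable --- the paper writes this last step as an inequality chain showing $\bar{\ve u}$ attains the joint maximum, which is the same argument as your coordinatewise decomposition. Your additional observations (monotonicity of the mixed quantile function ensuring $Z_{jt}$ is well defined, and the tie-breaking caveat when some $k_i=0$) are sound refinements the paper handles only implicitly via its standing unique-argmax assumption.
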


\begin{proof}

Theorem~\label{theorem:riskq} show that if the above condition are satisfied, then $[Z_i(\tau_i,u_i)]_{i=1}^N$ satisfy the RIGM principle for $Z_{jt}(\ve \tau, \ve u)$ with the risk metric $\psi_{\alpha}$. We will show that $\arg\max_{u} \psi_{\alpha}[Z_{jt}(\ve \tau, \ve u)] =\bar{u}$, $\bar{u} = [\bar{u_i}]_{i=1}^N$, $\bar{u_i} = \arg\max_{u_i} \psi_{\alpha}[Z_i(\tau_i, u_i)]$. 

We prove this theorem for the VaR and the Distorted Expectation risk metric (e.g., CVaR, CPW). 

The VaR risk metric uses the $\alpha-$quantile to measure risk. $\psi_{\alpha}[Z_{jt}(\ve \tau, \ve u)] = \theta(\ve \tau, \ve u, \alpha)$ and $\psi_{\alpha}[Z_i(\tau_i, u_i)] = \theta_i(\tau_i, u_i, \alpha)$

\begin{align}
\psi_{\alpha}[Z_{jt}(\ve \tau, \bar{\ve u})] &= \sum_{i=1}^n k_i \theta_i(\tau_i, \bar{u_i}, \alpha) \quad \text{(quantile definition)}\\
&= \sum_{i=1}^n k_i \psi_{\alpha}[Z_i(\tau_i, \bar{u_i})]\\
&\geq \sum_{i=1}^n k_i \psi_{\alpha}[Z_i(\tau_i, u_i)]  \quad (\bar{u_i} = \arg\max \psi_{\alpha}[Z_i(\tau_i, u_i)]) \\
&= \sum_{i=1}^n k_i \theta_i(\tau_i, u_i, \alpha)\\
&=\psi_{\alpha}[Z_{jt}(\ve \tau, \ve u)]
\end{align}

Distorted expectation metrics, such as CVaR, CPW, and Wang, are weighted expectation of return distribution under a distortion function~\cite{distortion,RiskProperties,DSAC}. The distorted expectation of a random variable $Z$ under $g$ is defined as $\psi(Z) = \int_0^1 g'(w) \theta(w) dw$, where $g(w) \in [0,1]$, $g'(w)$ is the derivative of $g(w)$. In the following, we show that if the above conditions are satisfied,  then $[Z_i(\tau_i,u_i)]_{i=1}^N$ satisfy the RIGM principle for $Z_{jt}(\ve \tau, \ve u)$ with the risk metric that is a distorted expectation metric.

\begin{align}
    \psi_{\alpha}[Z_{jt}(\ve \tau, \bar{\ve u})] &= \int_0^1 g'(w)\theta(\ve \tau, \bar{\ve u}, w) dw\\
    &= \int_0^1g'(w)\sum_{i=1}^nk_i\theta_i(\tau_i, \bar{u_i}, w) dw\\
        &= \sum_{i=1}^nk_i\int_0^1 g'(w)\theta_i(\tau_i, \bar{u_i}, w) dw\\
        &=\sum_{i=1}^nk_i\psi_{\alpha}[Z_i(\tau_i, \bar{u_i})]\\
        &\geq \sum_{i=1}^nk_i\psi_{\alpha}[Z_i(\tau_i, u_i)] \quad (\bar{u_i} = \arg\max \psi_{\alpha}[Z_i(\tau_i, u_i)])\\
        &= \sum_{i=1}^nk_i\int_0^1g'(w)\theta_i(\tau_i, u_i, w) dw\\
    &= \int_0^1g'(w)\sum_{i=1}^nk_i\theta_i(\tau_i, u_i, w) dw\\
    &= \int_0^1g'(w)\theta(\ve \tau, \ve u, w) dw\\
    &= \psi_{\alpha}[Z_{jt}(\ve \tau, \ve u)] 
\end{align}
\end{proof}

We have shown that by modeling the quantiles of the stochastic $Z_{jt}$ through weighted sum of quantiles of $[Z_i]_{i=1}^n$. $[Z_i]_{i=1}^n$ satisfied the RIGM theorem with risk metrics such as VaR, CVaR, Wang, and CPW, etc.

\subsection{RiskQ-QMIX}\label{appendix:riskq:qmix}

RiskQ suffers from representation limitations that it can model the weighted sum relationships among quantiles. RiskQ-QMIX replaces the mixer of RiskQ from a simple attention moduel to QMIX. It can model the monotonic relationship amonsg quantiles. We show that it satisfies the RIGM principle for the VaR metric. 

\begin{theorem}\label{theorem:riskq+}
	A risk-aware stochastic joint state-action return 
	\begin{align}\label{def:main}
		Z_{jt}(\ve \tau, \ve u) &= \sum_{j=1}^J p_j(\ve \tau, \ve u) \delta_{\theta(\ve \tau, \ve u, w_j)} \\
  \theta(\ve \tau, \ve u, w_j) &= Q_{mix}(\theta_1(\tau_1, u_1, w_j), ... \theta_n(\tau_n, u_n, w_j))
	\end{align} is distributional factorized by $[Z_i(\tau_i,u_i)]_{i=1}^N$ with risk metric $VaR_{\alpha}$, where $J$ is the number of Dirac Delta functions, $\delta_{\theta(\ve \tau, \ve u, w_j)}$ is a Dirac Delta function at $\theta(\ve \tau, \ve u, w_j) \in \mathbb{R}$,  $\theta(\ve \tau, \ve u, w_j)$ is a quantile function with quantile sample $w_j$, $p_j(\ve \tau, \ve u)$ is the corresponding probability for $\delta_{\theta(\ve \tau, \ve u, w_j)}$ of the estimated return, $\theta_i(\tau_i, u_i, w)$ is the $w-$quantile for $Z_i(\tau_i, u_i)$.
 \end{theorem}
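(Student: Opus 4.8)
The plan is to reduce the $VaR_{\alpha}$ risk metric to a single quantile evaluation and then invoke the monotonicity of $Q_{mix}$, exactly as in the standard QMIX/IGM argument. For $\psi_{\alpha} = VaR_{\alpha}$ the risk value of any distribution is simply its $\alpha$-quantile, so $\psi_{\alpha}[Z_{jt}(\ve \tau, \ve u)] = \theta(\ve \tau, \ve u, \alpha)$ and $\psi_{\alpha}[Z_i(\tau_i, u_i)] = \theta_i(\tau_i, u_i, \alpha)$. Substituting the factorization of the quantiles at the sample level $w_j = \alpha$ gives
\begin{align}
\psi_{\alpha}[Z_{jt}(\ve \tau, \ve u)] &= Q_{mix}\big(\theta_1(\tau_1, u_1, \alpha), \ldots, \theta_n(\tau_n, u_n, \alpha)\big) \\
&= Q_{mix}\big(\psi_{\alpha}[Z_1(\tau_1, u_1)], \ldots, \psi_{\alpha}[Z_n(\tau_n, u_n)]\big),
\end{align}
so the joint risk value is a QMIX-style monotone function of the per-agent risk values.

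Next I would carry out the monotonicity argument. Let $\bar{u_i} = \arg\max_{u_i} \psi_{\alpha}[Z_i(\tau_i, u_i)]$ and $\bar{\ve u} = [\bar{u_i}]_{i=1}^n$. Because $Q_{mix}$ is increasing in each argument, for any joint action $\ve u$ we have $\theta_i(\tau_i, \bar{u_i}, \alpha) \geq \theta_i(\tau_i, u_i, \alpha)$ for every $i$, and applying monotonicity coordinate-by-coordinate yields
\begin{align}
\psi_{\alpha}[Z_{jt}(\ve \tau, \bar{\ve u})] &= Q_{mix}\big(\theta_1(\tau_1, \bar{u_1}, \alpha), \ldots, \theta_n(\tau_n, \bar{u_n}, \alpha)\big) \\
&\geq Q_{mix}\big(\theta_1(\tau_1, u_1, \alpha), \ldots, \theta_n(\tau_n, u_n, \alpha)\big) = \psi_{\alpha}[Z_{jt}(\ve \tau, \ve u)].
\end{align}
Hence $\arg\max_{\ve u}\psi_{\alpha}[Z_{jt}(\ve \tau, \ve u)] = \bar{\ve u}$, which is precisely the RIGM condition for $VaR_{\alpha}$ (ties being broken by the smallest-index convention adopted earlier). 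This mirrors the chain of inequalities in Theorem~\ref{theorem:riskq}, with the weighted sum replaced by $Q_{mix}$ and the use of additivity replaced by monotonicity.

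The step I expect to require the most care is making the reduction in the first paragraph rigorous, namely that the $\alpha$-quantile of $Z_{jt}$ equals $Q_{mix}$ applied to the individual $\alpha$-quantiles. This holds because the factorization is defined quantile-by-quantile and $Q_{mix}$ is monotone, so it maps the $\alpha$-quantile inputs to the $\alpha$-quantile of the mixed distribution without reordering the samples; this should be stated explicitly rather than left implicit. It is also worth emphasizing \emph{why} the argument is confined to VaR and does not extend to distorted measures: a distorted metric requires evaluating $\int_0^1 g'(w)\theta(\ve \tau, \ve u, w)\,dw$, and since $Q_{mix}$ is nonlinear it does not commute with this integral over quantile levels, i.e. $\int g'(w) Q_{mix}(\theta_1(w),\ldots)\,dw \neq Q_{mix}(\int g'(w)\theta_1(w)\,dw, \ldots)$. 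VaR evaluates a single quantile level, so no such commutation is needed and monotonicity alone suffices.
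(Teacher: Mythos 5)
Your proposal is correct and follows essentially the same route as the paper's proof: reduce $VaR_{\alpha}$ to evaluation of the $\alpha$-quantile, then apply the chain $\psi_{\alpha}[Z_{jt}(\ve \tau, \bar{\ve u})] = Q_{mix}(\theta_1(\tau_1,\bar{u}_1,\alpha),\ldots,\theta_n(\tau_n,\bar{u}_n,\alpha)) \geq Q_{mix}(\theta_1(\tau_1,u_1,\alpha),\ldots,\theta_n(\tau_n,u_n,\alpha)) = \psi_{\alpha}[Z_{jt}(\ve \tau, \ve u)]$ via coordinate-wise monotonicity of $Q_{mix}$. Your added remarks --- that monotonicity of $Q_{mix}$ in $w$ is what makes $\theta(\ve \tau,\ve u,\alpha)$ genuinely the $\alpha$-quantile of the mixed distribution, and that the nonlinearity of $Q_{mix}$ blocks commuting with the distortion integral $\int_0^1 g'(w)\theta(\ve \tau,\ve u,w)\,dw$ --- are both sound and make explicit what the paper leaves implicit.
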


\begin{proof}

Theorem~\label{theorem:riskq+} show that if the above condition are satisfied, then $[Z_i(\tau_i,u_i)]_{i=1}^N$ satisfy the RIGM principle for $Z_{jt}(\ve \tau, \ve u)$ with the risk metric $VaR_{\alpha}$. We will show that $\arg\max_{u} \psi_{\alpha}[Z_{jt}(\ve \tau, \ve u)] =\bar{u}$, $\bar{u} = [\bar{u_i}]_{i=1}^N$, $\bar{u_i} = \arg\max_{u_i} \psi_{\alpha}[Z_i(\tau_i, u_i)]$. 

The VaR risk metric uses $\alpha-$quantile of a random variable to measure risk of the variable. $VaR_{\alpha}[Z_{jt}(\ve \tau, \ve u)] = \theta(\ve \tau, \ve u, \alpha)$ and $VaR_{\alpha}[Z_i(\tau_i, u_i)] = \theta_i(\tau_i, u_i, \alpha)$

\begin{align}
VaR_{\alpha}[Z_{jt}(\ve \tau, \bar{\ve u})] &= Q_{mix}(\theta_1(\tau_1, \bar{u}_1, \alpha), ... \theta_n(\tau_n, \bar{u}_n, \alpha)) \quad \text{(quantile definition)}\\
&\geq Q_{mix}(\theta_1(\tau_1, u_1, \alpha), ... \theta_n(\tau_n, u_n, \alpha))  \quad (\bar{u_i} = \arg\max VaR_{\alpha}[Z_i(\tau_i, u_i)])\label{riskq+:2} \\
&=\psi_{\alpha}[Z_{jt}(\ve \tau, \ve u)]
\end{align}

\eqref{riskq+:2} is satisfied due to the following reasons. $\bar{u_i} = \arg\max VaR_{\alpha}[Z_i(\tau_i, u_i)]$, thus $\theta_i(\tau_i, \bar{u}_i, \alpha) \geq \theta_i(\tau_i, u_i, \alpha)$. 
As $Q_{mix}(\theta_1(\tau_1, u_1, \alpha), ... \theta_n(\tau_n, u_n, \alpha))$ is monotonic increase with respect to $\theta_i(\tau_i, u_i, \alpha)$, $Q_{mix}(\theta_1(\tau_1, \bar{u}_1, \alpha), ... \theta_n(\tau_n, \bar{u}_n, \alpha)) \geq Q_{mix}(\theta_1(\tau_1, u_1, \alpha), ... \theta_n(\tau_n, u_n, \alpha))$

\end{proof}

\subsection{RiskQ-Residual} \label{appendix:riskq:residual}

RiskQ-QMIX suffers from representation limitation that it can model the monotonic relationship among quantiles only. It cannot model non-monotonic relationship among quantiles. ResZ~\cite{ResQ} decomposes a stochastic joint return distribution $Z_{jt}(\ve \tau, \ve u)$ into a main function $Z_{tot}(\ve \tau, \ve u)$ and a residual function $Z_r(\ve \tau, \ve u)$. The main function $Z_{tot}(\ve \tau \ve u)$ share the same optimal policy as $Z_{jt}(\ve \tau, \ve u)$. It is show that ResZ satisfies the DIGM principle without representation limitations. 

Inspired by ResZ, we decompose a quantile function into its main quantile function $\sum_{i=1}^n k_i \theta_i(\tau_i, u_i, w_j)$ and residual quantile function $\theta_r(\ve \tau, \ve u, w_j)$ with a mask function $m_{\alpha}(\ve \tau, \ve u)$. We show that RiskQ-Residual satisfies the RIGM principle for the VaR and distorted expectations metrics without representation limitations.

\begin{theorem}\label{theorem:riskq++}
	A stochastic joint state-action return 
	\begin{align}\label{def:riskq++}
  Z_{jt}(\ve \tau, \ve u) &= \sum_{j=1}^J p_j(\ve \tau, \ve u) \delta_{\theta(\ve \tau, \ve u, w_j)} \\
  \theta(\ve \tau, \ve u, w_j) &= \sum_{i=1}^n k_i \theta_i(\tau_i, u_i, w_j ) + m_{\alpha}(\ve \tau, \ve u) \theta_r(\ve \tau, \ve u, w_j)
	\end{align} 
 is distributional factorized by $[Z_i(\tau_i,u_i)]_{i=1}^N$ with risk metric $\psi_{\alpha}$, where $\theta_r(\ve \tau, \ve u, w) \leq 0$, the mask function $m_{\alpha}(\ve \tau, \ve u) = 0$ when $\ve u=\bar{\ve u}$, otherwise $1$, $J$ is the number of Dirac Delta functions, $\delta_{\theta(\ve \tau, \ve u, w_j)}$ is a Dirac Delta function at percentile $\theta(\ve \tau, \ve u, w_j ) \in \mathbb{R}$,  $\theta(\ve \tau, \ve u, w_j)$ is a quantile function with quantile $w_j$, $p_j(\ve \tau, \ve u)$ is the corresponding probability for $\delta_{\theta(\ve \tau, \ve u, w_j)}$ of the estimated return, $\theta_i(\tau_i, u_i, w)$ is the $w-$quantile for the stochastic utility function $Z_i(\tau_i, u_i)$ and $k_i \geq 0$.
 \end{theorem}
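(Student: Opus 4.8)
The plan is to follow the template of Theorem~\ref{theorem:riskq}, exploiting two structural facts: that $\psi_\alpha$ acts on a distribution only through a \emph{positively-weighted linear functional} of its quantile function, and that the mask $m_\alpha$ is engineered to annihilate the residual term exactly at the candidate optimum $\bar{\ve u}$. Concretely, for VaR we have $\psi_\alpha[Z]=\theta(\alpha)$, and for any distorted measure $\psi_\alpha[Z]=\int_0^1 g'(w)\theta(w)\,dw$ with $g'(w)\ge 0$ (since the distortion $g$ is non-decreasing); in both cases $\psi_\alpha$ is monotone in the quantile function and commutes with the weighted sum $\sum_i k_i\theta_i$ whenever $k_i\ge 0$. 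I would prove the claim separately for these two families, as in Theorem~\ref{theorem:riskq}, and in each case show that $\bar{\ve u}=[\arg\max_{u_i}\psi_\alpha[Z_i(\tau_i,u_i)]]_{i=1}^N$ maximizes $\psi_\alpha[Z_{jt}(\ve \tau,\ve u)]$.

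First I would evaluate the objective at $\ve u=\bar{\ve u}$. Since $m_\alpha(\ve \tau,\bar{\ve u})=0$ by definition of the mask, the quantile collapses to $\theta(\ve \tau,\bar{\ve u},w)=\sum_{i=1}^n k_i\theta_i(\tau_i,\bar{u}_i,w)$, so applying $\psi_\alpha$ and pulling the nonnegative weights through the functional gives $\psi_\alpha[Z_{jt}(\ve \tau,\bar{\ve u})]=\sum_{i=1}^n k_i\,\psi_\alpha[Z_i(\tau_i,\bar{u}_i)]$. Next, for an arbitrary $\ve u\ne\bar{\ve u}$ the mask equals $1$, so the quantile carries the extra residual term $\theta_r(\ve \tau,\ve u,w)$; applying $\psi_\alpha$ gives $\psi_\alpha[Z_{jt}(\ve \tau,\ve u)]=\sum_{i=1}^n k_i\,\psi_\alpha[Z_i(\tau_i,u_i)]+R(\ve u)$, where $R(\ve u)=\theta_r(\ve \tau,\ve u,\alpha)$ for VaR and $R(\ve u)=\int_0^1 g'(w)\theta_r(\ve \tau,\ve u,w)\,dw$ for a distorted measure. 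Because $\theta_r\le 0$ everywhere and $g'\ge 0$, we have $R(\ve u)\le 0$. Chaining these with the per-agent optimality of each $\bar{u}_i$ and $k_i\ge 0$ yields
\begin{align}
\psi_\alpha[Z_{jt}(\ve \tau,\bar{\ve u})] = \sum_{i=1}^n k_i\,\psi_\alpha[Z_i(\tau_i,\bar{u}_i)] \ge \sum_{i=1}^n k_i\,\psi_\alpha[Z_i(\tau_i,u_i)] \ge \psi_\alpha[Z_{jt}(\ve \tau,\ve u)],
\end{align}
so $\bar{\ve u}$ attains the maximum, establishing RIGM.

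The main obstacle is the sign and uniqueness of the last inequality. Monotonicity of $\psi_\alpha$ must be invoked carefully: for a distorted measure it is precisely the nonnegativity of $g'$ that converts $\theta_r\le 0$ into a nonpositive integral $R(\ve u)$, and this is the one place where the proof genuinely uses that the residual is a \emph{lower}-biased correction (mirroring the $Z_r\le 0$ requirement of ResZ, which here lets $\theta_r$ be non-monotone and thereby removes the representation limitation). To conclude that $\arg\max$ equals $\bar{\ve u}$ rather than merely that $\bar{\ve u}$ is \emph{a} maximizer, I would observe that a competing $\ve u$ can tie only when both its weighted per-agent risk matches that of $\bar{\ve u}$ and $R(\ve u)=0$; such residual ties are resolved by the stated smallest-index tie-breaking rule, so the characterization of the greedy joint action as $\bar{\ve u}$ is preserved.
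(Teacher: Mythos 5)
Your proposal is correct and follows essentially the same route as the paper's own proof: evaluate $\psi_\alpha[Z_{jt}]$ at $\bar{\ve u}$ where the mask annihilates the residual, pull the nonnegative weights $k_i$ through the VaR quantile (resp.\ the distortion integral with $g'\ge 0$), apply per-agent optimality of each $\bar u_i$, and then reinsert the nonpositive masked residual $m_\alpha\theta_r\le 0$ to bound $\psi_\alpha[Z_{jt}(\ve \tau,\ve u)]$ for arbitrary $\ve u$. Your only cosmetic departure is packaging the residual contribution as an explicit remainder $R(\ve u)\le 0$ and making the tie-breaking convention and the nonnegativity of $g'$ explicit, both of which the paper leaves implicit (the former via its global smallest-index assumption).
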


\begin{proof}
	Theorem~\ref{theorem:riskq++} shows that if the above conditions are satisfied, then $[Z_i(\tau_i,u_i)]_{i=1}^N$ satisfy the RIGM principle for $Z_{jt}(\ve \tau, \ve u)$ with the risk metric $\psi_{\alpha}$. 
We will show that $\arg\max_{\ve u} \psi_{\alpha}[Z_{jt}(\ve \tau, \ve u)] =\bar{\ve u}$, $\bar{\ve u} = [\bar{u_i}]_{i=1}^N$,  $\bar{u_i} = \arg\max_{u_i} \psi_{\alpha}[Z_i(\tau_i, u_i)]$.

For the VaR metric
\begin{align}
\psi_{\alpha}[Z_{jt}(\ve \tau, \bar{\ve u})] &= \sum_{i=1}^n k_i \theta_i(\tau_i, \bar{u_i}, \alpha) + m_{\alpha}(\ve \tau, \bar{\ve u}) \theta_r(\ve \tau, \bar{\ve u}, \alpha) \label{riskq++:VaR:49}\\
 &= \sum_{i=1}^n k_i \theta_i(\tau_i, \bar{u_i}, \alpha) \quad (m_r(\ve \tau, \bar{\ve u}, \alpha) = 0)\\
&= \sum_{i=1}^n k_i \psi_{\alpha}[Z_i(\tau_i, \bar{u_i})] \quad \text{VaR definition}\\
&\geq \sum_{i=1}^n k_i \psi_{\alpha}[Z_i(\tau_i, u_i)]  \quad (\bar{u_i} = \arg\max \psi_{\alpha}[Z_i(\tau_i, u_i)]) \\
&= \sum_{i=1}^n k_i \theta_i(\tau_i, u_i, \alpha)\\
&\geq \sum_{i=1}^n k_i \theta_i(\tau_i, u_i, \alpha) + m_r(\ve \tau, \ve u, \alpha) \theta_r(\ve \tau, \ve u, \alpha) \label{riskq++:VaR:54}\\ 
&=\psi_{\alpha}[Z_{jt}(\ve \tau, \ve u)] \label{riskq++:VaR:55}
\end{align}

\eqref{riskq++:VaR:54} because $\theta_r(\ve \tau, \ve u, w_j) \leq 0$ and $m_r(\ve \tau, \ve u, \alpha) = 1$
 
 \eqref{riskq++:VaR:49} to \eqref{riskq++:VaR:55} mean that $\bar{u} = [\bar{u}_{i}]_{i=1}^N$ maximizes $\psi_{\alpha}[Z_{jt}(\alpha, \ve \tau, \ve u)]$ for the VaR metric. Thus $[Z_i(\tau_i,u_i)]_{i=1}^N$ satisfies RIGM for $Z_{jt}(\ve \tau, \ve u)$ with for the VaR risk metric $\psi_{\alpha}$.

For distorted expectation metrics $\psi_{\alpha}$ such as Wang, CVaR, and CPW, and show that $[Z_i(\tau_i,u_i)]_{i=1}^N$ satisfy the RIGM theorem for $\psi_{\alpha}$ as follow.
\allowdisplaybreaks
\begin{align}
    \psi_{\alpha}[Z_{jt}(\ve \tau, \bar{\ve u})] &= \int_0^1 g'(w)\theta(\ve \tau, \bar{\ve u}, w) dw \label{riskq++:CVaR:1}\\
    &= \int_0^1g'(w)[\sum_{i=1}^nk_i\theta_i(\tau_i, \bar{u_i}, w) + m_r(\ve \tau, \bar{\ve u}, w) \theta_r(\ve \tau, \bar{\ve u}, w)]dw \label{riskq++:CVaR:2}\\
    &= \int_0^1g'(w)\sum_{i=1}^nk_i\theta_i(\tau_i, \bar{u_i}, w)dw \quad (m_r(\ve \tau, \bar{\ve u}, w) = 0) \label{riskq++:CVaR:3}\\
        &= \sum_{i=1}^nk_i\int_0^1 g'(w)\theta_i(\tau_i, \bar{u_i}, w) dw \label{riskq++:CVaR:4}\\
        &=\sum_{i=1}^nk_i\psi_{\alpha}[Z_i(\tau_i, \bar{u_i})] \quad (\bar{u_i} = \arg\max \psi_{\alpha}[Z_i(\tau_i, u_i)]) \label{riskq++:CVaR:5}\\
        &\geq \sum_{i=1}^nk_i\psi_{\alpha}[Z_i(\tau_i, u_i)] \label{riskq++:CVaR:6}\\
        &= \sum_{i=1}^nk_i\int_0^1g'(w)\theta_i(\tau_i, u_i, w) dw \label{riskq++:CVaR:7}\\
    &= \int_0^1g'(w)\sum_{i=1}^nk_i\theta_i(\tau_i, u_i, w) dw \label{riskq++:CVaR:8}\\
        &\geq \int_0^1g'(w)[\sum_{i=1}^nk_i\theta_i(\tau_i, u_i, w) + m_r(\ve \tau, \ve u, w) \theta_r(\ve \tau, \ve u, w)]dw \label{riskq++:CVaR:9}\\
    &= \int_0^1g'(w)\theta(\ve \tau, \ve u, w) dw \label{riskq++:CVaR:10}\\
    &= \psi_{\alpha}[Z_{jt}(\ve \tau, \ve u)] \label{riskq++:CVaR:11}
\end{align}

 \eqref{riskq++:CVaR:1} to \eqref{riskq++:CVaR:10} mean that $\bar{u} = [\bar{u}_{i}]_{i=1}^N$ maximizes $\psi_{\alpha}[Z_{jt}(\alpha, \ve \tau, \ve u)]$ for distorted metrics such as $CVaR_{\alpha}$. Thus $[Z_i(\alpha, \tau_i,u_i)]_{i=1}^N$ satisfies RIGM for $Z_{jt}(\alpha, \ve \tau, \ve u)$ with for distorted expectation metrics.

\end{proof}

\subsection{RiskQ-Residual-QMIX}\label{appendix:riskq:residual:qmix}

RiskQ-Residual-QMIX model the main quantile function as  $Q_{mix}(\theta_1(\tau_1, \bar{u}_1, \alpha), ... \theta_n(\tau_n, \bar{u}_n, \alpha))$. We show that it satisfies the RIGM principle for the VaR metric.

\begin{theorem}\label{theorem:riskq++:qmix}
	A stochastic joint state-action return 
	\begin{align}\label{def:riskq++}
		Z_{jt}(\ve \tau, \ve u) &= \sum_{j=1}^J p_j(\ve \tau, \ve u) \delta_{\theta(\ve \tau, \ve u, w_j)} \\
		\theta(\ve \tau, \ve u, w_j) &= Q_{mix}(\theta_1(\tau_1, \bar{u}_1, \alpha), ... \theta_n(\tau_n, \bar{u}_n, \alpha)) + m_{\alpha}(\ve \tau, \ve u) \theta_r(\ve \tau, \ve u, w_j)
	\end{align} 
	is distributional factorized by $[Z_i(\tau_i,u_i)]_{i=1}^N$ with risk metric $\psi_{\alpha}$, where $\theta_r(\ve \tau, \ve u, w) \leq 0$, the mask function $m_{\alpha}(\ve \tau, \ve u) = 0$ when $\ve u=\bar{\ve u}$, otherwise $1$, $J$ is the number of Dirac Delta functions, $\delta_{\theta(\ve \tau, \ve u, w_j)}$ is a Dirac Delta function at percentile $\theta(\ve \tau, \ve u, w_j ) \in \mathbb{R}$,  $\theta(\ve \tau, \ve u, w_j)$ is a quantile function with quantile $w_j$, $p_j(\ve \tau, \ve u)$ is the corresponding probability for $\delta_{\theta(\ve \tau, \ve u, w_j)}$ of the estimated return, $\theta_i(\tau_i, u_i, w)$ is the $w-$quantile for the stochastic utility function $Z_i(\tau_i, u_i)$.
\end{theorem}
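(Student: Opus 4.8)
The plan is to prove this by combining the residual-masking argument from the RiskQ-Residual proof (Theorem~\ref{theorem:riskq++}) with the $Q_{mix}$ monotonicity argument from the RiskQ-QMIX proof (Theorem~\ref{theorem:riskq+}). Since the claim is stated only for the VaR metric, I can exploit the fact that $VaR_{\alpha}$ reads off a single quantile level: $VaR_{\alpha}[Z_{jt}(\ve \tau, \ve u)] = \theta(\ve \tau, \ve u, \alpha)$ and $VaR_{\alpha}[Z_i(\tau_i, u_i)] = \theta_i(\tau_i, u_i, \alpha)$. As before, write $\bar{\ve u} = [\bar{u}_i]_{i=1}^N$ with $\bar{u}_i = \arg\max_{u_i} VaR_{\alpha}[Z_i(\tau_i, u_i)]$; the goal is to show $\bar{\ve u}$ maximizes $VaR_{\alpha}[Z_{jt}(\ve \tau, \ve u)]$.

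First I would evaluate the joint quantile at $\bar{\ve u}$. Because the mask satisfies $m_{\alpha}(\ve \tau, \bar{\ve u}) = 0$, the residual term drops out and $VaR_{\alpha}[Z_{jt}(\ve \tau, \bar{\ve u})] = Q_{mix}(\theta_1(\tau_1, \bar{u}_1, \alpha), \ldots, \theta_n(\tau_n, \bar{u}_n, \alpha))$. Next, for an arbitrary $\ve u \neq \bar{\ve u}$ the mask is $1$ and $\theta_r(\ve \tau, \ve u, \alpha) \leq 0$, so $VaR_{\alpha}[Z_{jt}(\ve \tau, \ve u)] = Q_{mix}(\theta_1(\tau_1, u_1, \alpha), \ldots, \theta_n(\tau_n, u_n, \alpha)) + m_{\alpha}(\ve \tau, \ve u)\theta_r(\ve \tau, \ve u, \alpha) \leq Q_{mix}(\theta_1(\tau_1, u_1, \alpha), \ldots, \theta_n(\tau_n, u_n, \alpha))$. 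Finally, since each $\bar{u}_i$ maximizes $VaR_{\alpha}[Z_i(\tau_i, u_i)] = \theta_i(\tau_i, u_i, \alpha)$, we have $\theta_i(\tau_i, \bar{u}_i, \alpha) \geq \theta_i(\tau_i, u_i, \alpha)$ for every $i$; invoking the monotonicity of $Q_{mix}$ in each argument (exactly the step used at~\eqref{riskq+:2}) yields $Q_{mix}(\theta_1(\tau_1, \bar{u}_1, \alpha), \ldots, \theta_n(\tau_n, \bar{u}_n, \alpha)) \geq Q_{mix}(\theta_1(\tau_1, u_1, \alpha), \ldots, \theta_n(\tau_n, u_n, \alpha))$. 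Chaining these three facts gives $VaR_{\alpha}[Z_{jt}(\ve \tau, \bar{\ve u})] \geq VaR_{\alpha}[Z_{jt}(\ve \tau, \ve u)]$ for all $\ve u$, establishing the RIGM property for VaR.

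The step I expect to carry the real weight is the monotonicity argument, since it is the only place where the per-agent greedy actions $\bar{u}_i$ get lifted to joint optimality, and it is precisely what forces the restriction to VaR. For a distorted measure $\psi_{\alpha}[Z] = \int_0^1 g'(w)\theta(w)\,dw$ the analogous chain would require $Q_{mix}(\theta_1(\tau_1, \bar{u}_1, w), \ldots) \geq Q_{mix}(\theta_1(\tau_1, u_1, w), \ldots)$ at \emph{every} quantile level $w$, which in turn needs $\theta_i(\tau_i, \bar{u}_i, w) \geq \theta_i(\tau_i, u_i, w)$ for all $w$, i.e. pointwise stochastic dominance. But $\bar{u}_i$ is only guaranteed to dominate at the single level $w = \alpha$, so the argument breaks. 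In the linear RiskQ-Residual case this obstruction is absent because $\int_0^1 g'(w)\sum_i k_i \theta_i(\tau_i,u_i,w)\,dw = \sum_i k_i \int_0^1 g'(w)\theta_i(\tau_i,u_i,w)\,dw$ lets one decompose the joint distorted expectation into a nonnegative combination of the per-agent distorted expectations; the nonlinearity of $Q_{mix}$ destroys that decomposition, which is exactly why this theorem claims only the VaR metric.
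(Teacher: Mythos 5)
Your proof is correct and follows essentially the same route as the paper's: evaluate $VaR_{\alpha}$ at $\bar{\ve u}$ where the mask vanishes, apply the monotonicity of $Q_{mix}$ to lift the per-agent inequalities $\theta_i(\tau_i,\bar{u}_i,\alpha) \geq \theta_i(\tau_i,u_i,\alpha)$, and absorb the non-positive masked residual on the other side. Your closing remark explaining why the nonlinearity of $Q_{mix}$ blocks the distorted-expectation extension (the interchange of $Q_{mix}$ with $\int_0^1 g'(w)\,\cdot\,dw$ fails, and single-level dominance at $w=\alpha$ does not give pointwise dominance) is a worthwhile observation the paper leaves implicit.
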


For the VaR metric
\begin{align}
	\psi_{\alpha}[Z_{jt}(\ve \tau, \bar{\ve u})] &= Q_{mix}(\theta_1(\tau_1, \bar{u}_1, \alpha), ... \theta_n(\tau_n, \bar{u}_n, \alpha)) + m_{\alpha}(\ve \tau, \bar{\ve u}) \theta_r(\ve \tau, \bar{\ve u}, \alpha) \quad \text{VaR definition}\label{riskq++:VaR:1}\\
	&= Q_{mix}(\theta_1(\tau_1, \bar{u}_1, \alpha), ... \theta_n(\tau_n, \bar{u}_n, \alpha))) \quad (m_r(\ve \tau, \bar{\ve u}, \alpha) = 0)\\
	&\geq  Q_{mix}(\theta_1(\tau_1, u_1, \alpha), ... \theta_n(\tau_n, u_n, \alpha))) \quad (\bar{u_i} = \arg\max \psi_{\alpha}[Z_i(\tau_i, u_i)]) \\
	&\geq  Q_{mix}(\theta_1(\tau_1, u_1, \alpha), ... \theta_n(\tau_n, u_n, \alpha))) + m_r(\ve \tau, \ve u, \alpha) \theta_r(\ve \tau, \ve u, \alpha) \label{riskq++:VaR:6}\\ 
	&=\psi_{\alpha}[Z_{jt}(\ve \tau, \ve u)] \quad \text{VaR definition} \label{riskq++:VaR:7}
\end{align}

\eqref{riskq++:VaR:6} because $\theta_r(\ve \tau, \ve u, w_j) \leq 0$ and $m_r(\ve \tau, \ve u, \alpha) = 1$

\subsection{Algorithm and Neural Networks}

\subsubsection{Neural Networks}
We provide the detailed riskq network framework in Figure~\ref{fig:riskq_network}. It contains (a): the agent network, (b): the overall framework of RiskQ, and (c): the hybrid network diagram of RiskQ.

\begin{figure}[htbp]
	\centering
	\includegraphics[width=0.8\textwidth]{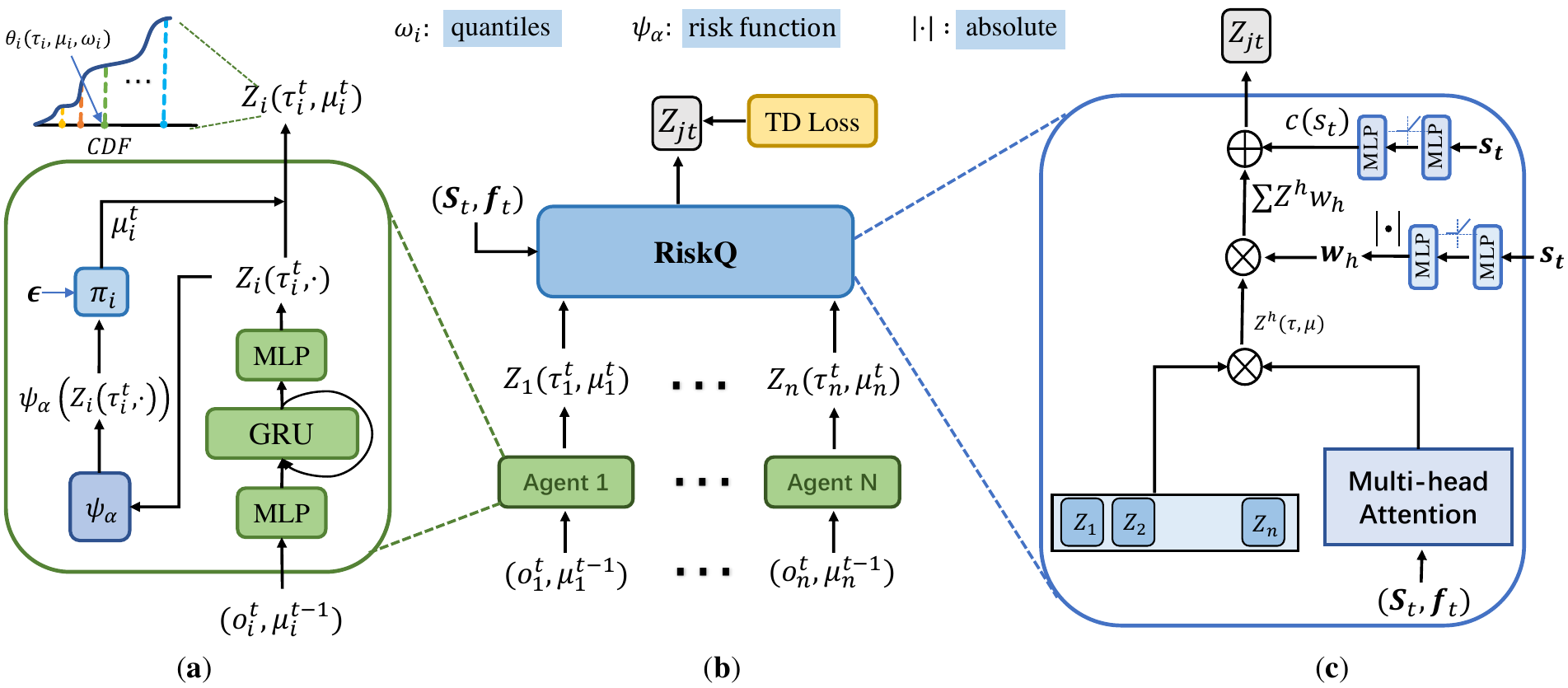}
	\caption{RiskQ framework}
	\label{fig:riskq_network}
\end{figure}

As shown in Figure~\ref{fig:riskq_network}, the agent network takes the observation-action sequence $(o_i^t, u_i^{t-1})$ from agent $i$ at time step $t$, and channels it through a multi-layer perception architecture, specifically structured as MLP-GRU-MLP, to generate the distribution $Z_i$. Then, we can obtain the $\psi(Z_i)$ through the risk function, and the action $u_i^t$ is selected through the $\epsilon-greedy$ strategy, finally obtaining the distribution $Z_i(\tau_i^t, u_i^t)$. Each agent produces the distribution $Z_i(\tau_i^t, u_i^t)$ through the agent network, and then generates $z_{jt}$ through the RiskQ Mixer. The RiskQ Mixer combines the quantile sample (cumulative probability) $w$ of the distribution $Z_i$, with the multi-layer attention mechanism and the multi-layer perception network, and finally obtains the quantile value $\theta(\boldsymbol{\tau},\boldsymbol{\mu},\omega)$ of $Z_{jt}(\boldsymbol{\tau},\boldsymbol{\mu})$, which is represented as the weighted sum of the quantile values of stochastic utility $\sum_n^i=k_i \theta_i(\tau,\mu_i,\omega)$.

\subsubsection{Algorithm}

The RiskQ algorithm is described in Algorithm~\ref{alg:riskq}.

\begin{algorithm}
	\begin{algorithmic}[1]
		\REQUIRE {risk parameter: $\psi_{\alpha}$;}
		\REQUIRE Initialize parameters $\theta$ of the network of agent, risk operator and RiskQ neural networks;\\
		\FOR{$e \in \{1, \dots, \text{$m$ episodes}\}$}
		\STATE Start a new episode;\\
		\WHILE {\text{episode\_is\_not\_end}}
		\FOR{agent $i$}
		\STATE Estimate the agent's utility $Z^{t}_i (\tau^{t}_{i}, u^{t}_{i})$;
		\STATE Calculate Risk-sensitive value $\psi_{\alpha}[Z(\tau^{t}_{i}, u^{t}_{i})]$ ; \label{alg:endCal}
		\STATE Get the action $\bar{u}^{t}_{i}=\arg\max_u \psi_{\alpha}[Z(\tau^{t}_{i}, u)]$;
		\ENDFOR
		
		\STATE Execute $\ve \bar{u}^{t}$, obtain global reward $r^t$ and the next state $s'$
		\STATE update replay buffer $\mathcal{D}$;
		\IF{it is time to update}
		\STATE Sample a batch $\mathcal{D}^{\prime}$ from replay buffer $\mathcal{D}$;
		\STATE For each sample in $\mathcal{D}^{\prime}$, calculate $Z_{jt}$, $\psi_{\alpha}[Z_{jt}]$.
		\STATE Update $\theta$ by minimizing the Huber quantile loss;
		\ENDIF
		\ENDWHILE
		\ENDFOR
	\end{algorithmic}
	\caption{The RiskQ Algorithm}\label{alg:riskq}
\end{algorithm}

\section{Evaluation}\label{appendix:exp}

We study the performance of RiskQ on risk-sensitive games (Multi-agent cliff and Car following games), the StarCraft II Multi-Agent Challenge benchmark (SMAC)~\cite{SMAC}. RiskQ can obtain promising performance for risk-sensitive and risk-neutral scenarios. Ablation studies reveal the importance of adhering to the RIGM principle for achieving good performance. Additionally, we have examined the impact of functional representations, distribution utilities, risk metrics and risk levels.

In this secion, we seek to answer the following questions. 
\begin{enumerate}
    \item What are the benefits brought by RiskQ?
    \item What will happen if we do not satisfy the RIGM principle?
    \item What's the impact of representation limitations for RiskQ? 
    \item What are alternative designs for RiskQ?
\end{enumerate}

We answer these questions through. \begin{enumerate}
    \item Evaluating RiskQ with risk-sensitive environments (Sec.~\ref{macn}~\ref{macf}~\ref{smac}), and evaluating RiskQ with risk-neutral environment (Sec.~\ref{smac}). We find that RiskQ can obtain promising results for all these environments. And we show that RiskQ allow user specify different risk-sensitive object to optimize a different goal (Figure~\ref{fig:result:smac3:appendix}).
    \item We evaluate several cases where some agents act according to another risk-metric in Figure~\ref{fig:result:ablation:why}. We show that it is necessary to for agent act consistently according to the same risk-metric.
    \item We evaluate different variants of RiskQ, which has better representation power in Sec~\ref{sec:exp:representation}. We show the the representation limitations of RiskQ does not impact its performance significantly. 
    \item We evaluate different implementation of the mixer functions and utility functions. 
\end{enumerate}

\subsection{Experimental Setup}

We select three categories of MARL value factorization methods for comparison: (i) Expected value factorization methods: QMIX~\cite{QMIX}, QTran~\cite{QTRAN}, QPlex~\cite{qplex}, CW QMIX~\cite{wqmix}, ResQ~\cite{ResQ}; (ii) Risk-neutral stochastic value factorization methods: DMIX~\cite{dmix} and ResZ~\cite{ResQ}; (iii) Risk-sensitive stochastic value factorization methods: RMIX~\cite{rmix} and DRIMA~\cite{drima}. For robustness, each experiment is conducted at least 5 times with different random seeds. In general, the configuration of RiskQ follows the setup of Weighted QMIX and ResQ. By default, the risk metric used in RiskQ is Wang$_{0.75}$, indicating a risk-averse preference.

The work used for comparison is listed as follows.

\begin{table}[htbp]
\centering
\caption{Baseline algorithms}
\begin{tabular}{c>{\centering\arraybackslash}m{9.5cm}}
    \toprule
    Algorithms & Brief Description\\
    \midrule
    QMIX\tablefootnote{\url{https://github.com/oxwhirl/pymarl}} ~\cite{QMIX} &  Facilitates a monotonic combination of individual agent utilities.\\
 QPlex\tablefootnote{\url{https://github.com/wjh720/QPLEX}}~\cite{qplex} & Learns a mixer of advantage functions and state value functions.\\
    CW QMIX\tablefootnote{\url{https://github.com/oxwhirl/wqmix}}~\cite{wqmix} & A centrally-weighted version of QMIX\\
    ResQ\tablefootnote{\url{https://github.com/xmu-rl-3dv/ResQ}}\,/\,ResZ~\cite{ResQ} & Converts the joint value function/distribution into  main function plus residual function. \\
    DMIX\tablefootnote{\url{https://github.com/j3soon/dfac}}~\cite{dmix} &  Integrates distributional RL with QMIX\\
    RMIX\tablefootnote{\url{https://github.com/yetanotherpolicy/rmix}}~\cite{rmix} & Incorporates CVaR-optimized policies into QMIX, using present and past observations as supplementary inputs for each RMIX agent.\\
    DRIMA\tablefootnote{\url{https://github.com/osilab-kaist/}}~\cite{drima} & Separates cooperation risk from environmental risk, employing IQN as each agent's utility.\\
    \bottomrule
\end{tabular}
\end{table}

We implement each algorithm based on their open-source repositories to carry out performance analyses, with hyperparameters consistent with those in PyMARL. RiskQ is also developed within the PyMARL framework, following the setup of WQMIX and ResQ. For DRIMA, the default configuration in standard scenarios is cooperative risk-seeking and environmental risk-neutral.  For RiskQ, unless otherwise specified, the following default configuration is adopted: $Wang_{0.75}$ is used as the risk measurement. QR-DQN is used to model per-agent's stochastic utility, and the quantile number is set to 32. The RMSProp optimizer is employed with a learning rate of 0.001. Batch size and buffer size are set to 32 and 5000, respectively. RiskQ uses TD-lambda learning with $\lambda=0.6$. The $\epsilon$ used in $\epsilon$-greedy annealed from 1 to 0.05 within 100K time steps. For robustness, each experiment is conducted at least 5 times with different random seeds. Experiments are carried out on a clusters consists of multiple NVIDIA GeForce RTX 3090 GPUs.

\subsection{Multi-Agent Cliff Navigation} \label{macn}

As depicted in Figure~\ref{fig:cliff}, in Multi-Agent Cliff Navigation (MACN) which was introduced in~\cite{rmix}, 
two agents must navigate in a grid-like world to reach the goal without falling into cliff.
The agent receives a -1 reward at each time step. If any agent reaches the goal individually, a -0.5 reward is given to the agents. They will receive 00 reward if they reach the goal together. An episode is finished once the goal is reached by two agents together or any agent falls into cliff (reward -100). We depicted the test return of each learning algorithm in two MACN scenarios: $4 \times 11$ grid map and $4 \times 15$ grid map. 
\begin{wrapfigure}{r}{0.5\textwidth}
   \includegraphics[width=0.5\columnwidth]{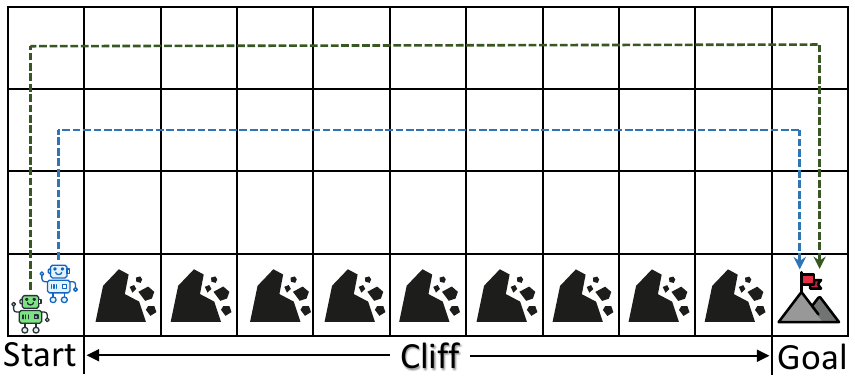}
   \caption{Multi-Agent cliff Navigation}
   \label{fig:cliff}
\end{wrapfigure}
Here, we provide some additional baseline algorithms for performance comparison. 
In this environment, we employed the default configuration of RiskQ. For DRIMA, We considered two configurations in detail, namely DRIMA\_sn and DRIMA\_sa. DRIMA\_sn represents a configuration that adopts a cooperative risk-seeking and environmental risk-neutral policy, while DRIMA\_sa implies a configuration utilizing a cooperative risk-seeking and environmental risk-neutral policy. Cooperative risk-seeking preference is adopted due to the collaborative nature of the MACN environment, as described in DRIMA, where cooperative risk-seeking can promote cooperation among agents. However, MACN is an environment with risks, thereby we hold preference for being neutral and averse towards environmental risk, respectively.
As illustrated in Figure~\ref{fig:result:cliff} (a) and (b), RiskQ achieves the optimal performance in both scenarios, outperforming risk-sensitive methods: RMIX, DRIMA\_sa and DRIMA\_sn. This indicates that learning policies which satisfy the RIGM principle could lead to promising results.

\begin{figure}[htbp]
	
	\centering
	\begin{minipage}[b]{\columnwidth} 
		\centering
		\includegraphics[width=0.9\columnwidth]{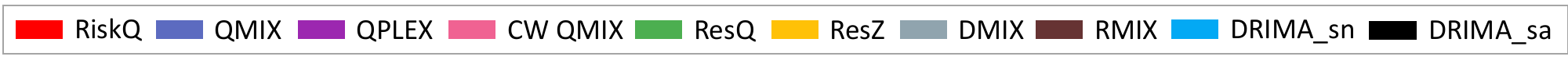}
	\end{minipage}
	
	\begin{minipage}[b]{\columnwidth} 
		\centering
		\includegraphics[width=0.45\columnwidth]{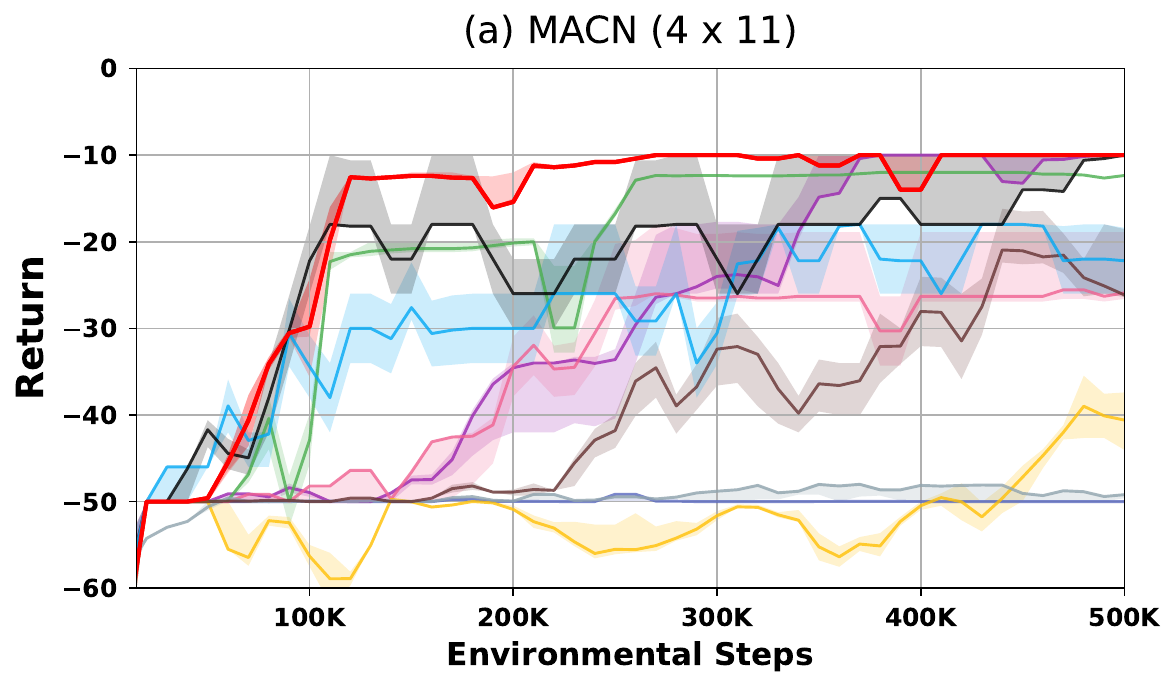} 		
		\includegraphics[width=0.45\columnwidth]{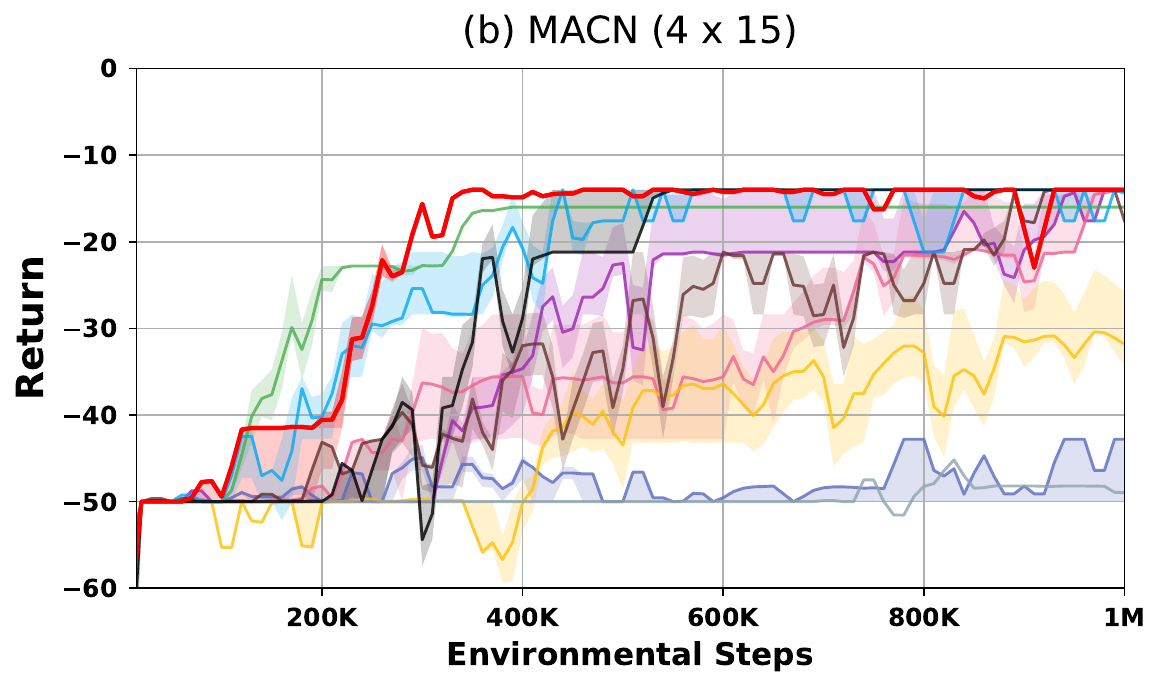} 	
	\end{minipage}
		\caption{The return of two MACN scenarios with different map sizes.}
	\label{fig:result:cliff}
\end{figure}

\subsection{Multi-Agent Car Following game}\label{macf}

\begin{wrapfigure}{l}{0.58\textwidth}
   \includegraphics[width=0.58\columnwidth]{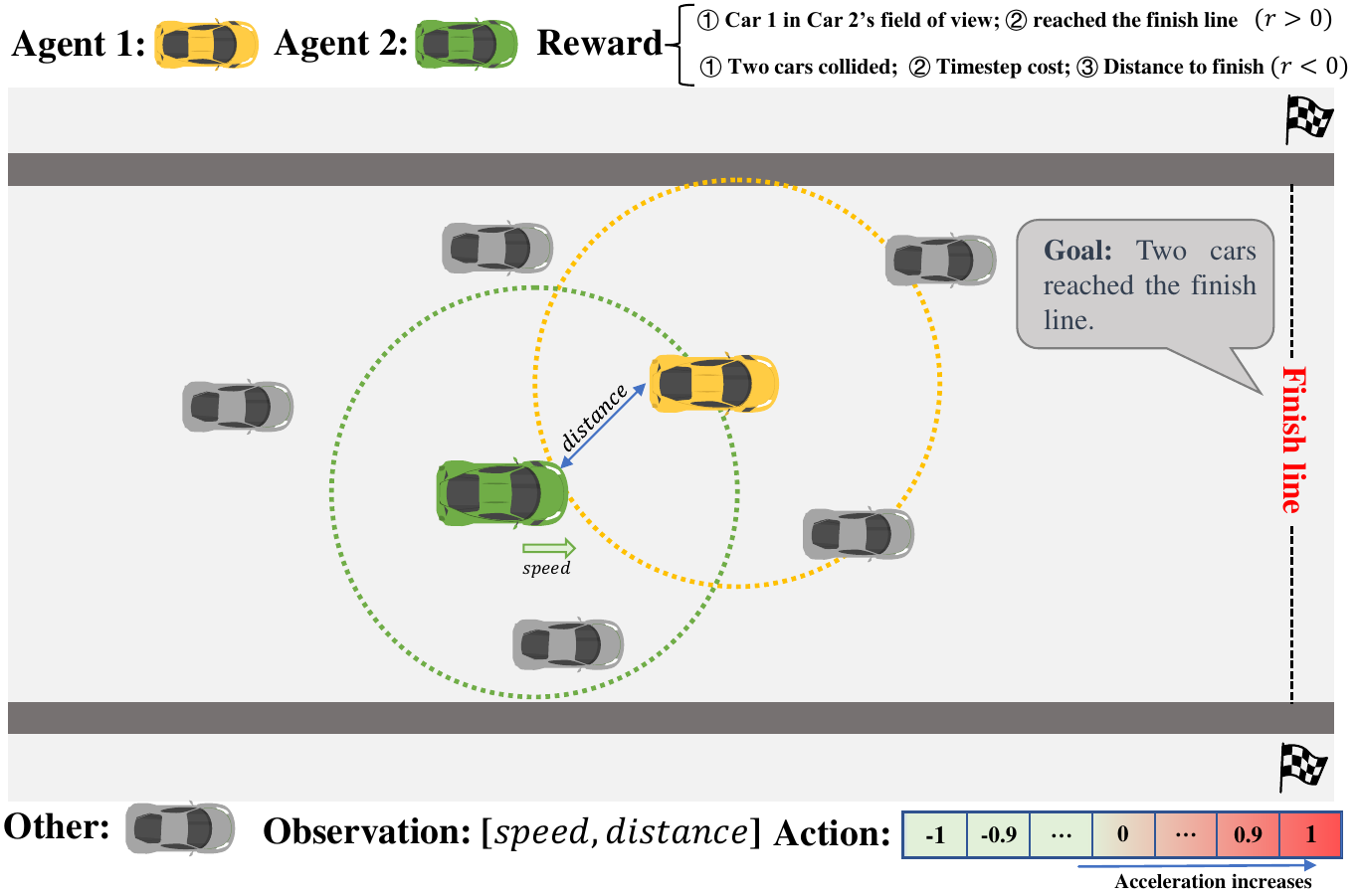}
   \caption{Multi-Agent Car Following Game}
   \label{fig:car:show}
\end{wrapfigure}
We design Multi-Agent Car Following (MACF) Game, which is adapted from the single agent risk-sensitive environment~\cite{ORAAC}. As depicted in Figure~\ref{fig:car:show}, in MACF, there are two agents, each controlling one car, with the task of one car following the other to reach a goal. Each car can observe the current position and speed of other cars within its observation range. 
The agent has a fixed action space which determines its acceleration. At each time step, agent will receive a negative reward. And the cars will \emph{crash} with some probability if their speed exceed a speed-threshold, and a negative reward is given to the agents. To adapt the game to cooperative MARL, agents move within each other's observation will receive a positive reward. Once the agents reach the goal together, a big reward is given to them and the episode is terminated.

In MACF, rewards stem from the following sources:
\begin{enumerate}
    \item Individual rewards when a single car reaches its destination.
    \item Collaborative rewards when two cars arrive at their destination together.
    \item Agents seeing each other within theirs observation range.
    \item Negative reward at each time step.
    \item Rewards based on the distance between cars to the destination.
    \item Crash cost.
\end{enumerate}
The configuration of these rewards impacts the environment's preference for different policies. For the MACF environment, we developed two different scenarios for evaluation: a pessimistic scenario and an optimistic scenario. The source code of the MACF environment can be obtained from the supplementary materials.
\begin{figure}[htbp]
    \centering
    \begin{minipage}[b]{0.46\columnwidth}
        \centering
        \includegraphics[width=\columnwidth]{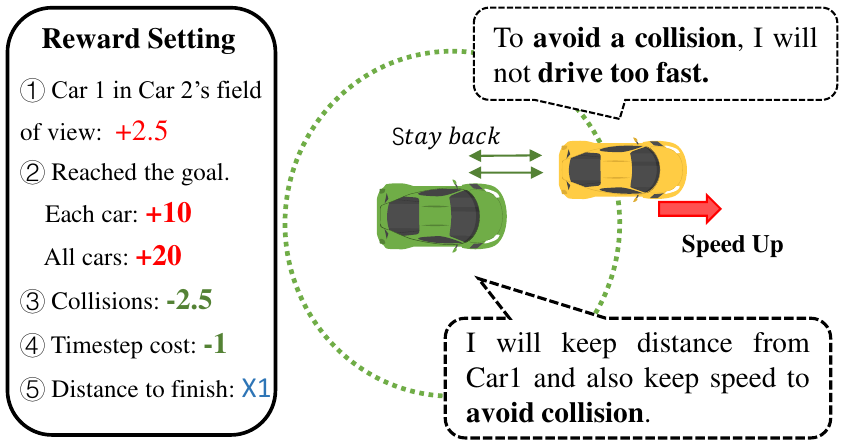}
        \caption{pessimistic scenario}
        \label{fig:car:pes}
    \end{minipage}
    \begin{minipage}[b]{0.05\columnwidth}
        \centering
        \includegraphics[height=0.19\textheight]{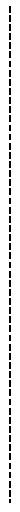}
    \end{minipage}
    \begin{minipage}[b]{0.46\columnwidth}
    \centering
    \includegraphics[width=\columnwidth]{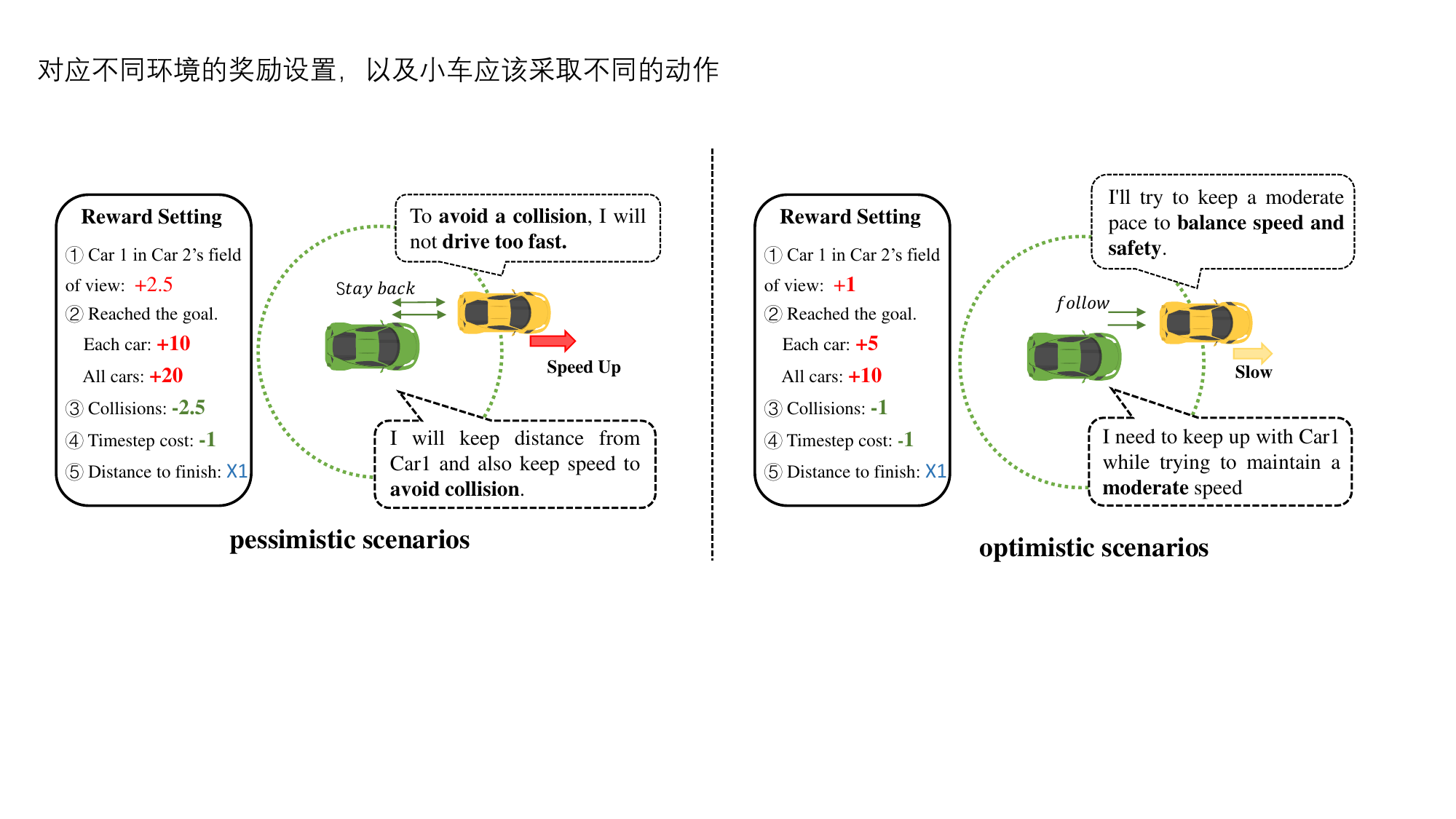}
    \caption{optimistic scenario}
    \label{fig:car:opt}
    \end{minipage}
    \label{fig:car}
\end{figure}

In the pessimistic scenario, the reward settings, as shown in Figure~\ref{fig:car:pes}, are adjusted to have a crash probability of 1.0 after exceeding the speed limit. In such an environment, agents should learn policies that can reach the destination as quickly as possible, while ensuring safety. RiskQ in this scenario adopts $CVaR_{0.2}$ as the risk measurement, which represents a risk-averse preference. To ensure fairness, we adjusted the environmental risk of DRIMA to be averse. We have implemented risk-averse mode for DRIMA, then we conducted experiments of DRIMA with both risk-seeking and risk-neutral cooperative risk. From the results shown in Figure~\ref{fig:result:car} (a) and (b), it is evident that RiskQ outperforms other approaches in terms of both return and crash. 

In the optimistic scenario, the reward settings, as depicted in Figure~\ref{fig:car:opt}, are defined such that the probability of a crash is $0.4\times the\_amout\_of\_overspeed$. The settings in this environment requires cars to achieve a balance between speed and safety. RiskQ in this scenario adopts $CVaR_{1.0}$ as the risk measurement, which represents a risk-neutral risk preference. To ensure fairness, we adjusted the environmental risk of DRIMA to be neutral. We conducted experiments with both risk-seeking and risk-neutral cooperative risk. From the results shown in Figure~\ref{fig:result:car} (c) and (d), it is evident that RiskQ outperforms other approaches in terms of both return and crash.

\begin{figure}[htbp]
	
	\centering
	\begin{minipage}[b]{\columnwidth} 
		\centering
		\includegraphics[width=0.8\columnwidth]{pics/smac/SMAC_legend.pdf}
	\end{minipage}
	
	\begin{minipage}[b]{\columnwidth} 
		\centering
		\includegraphics[width=0.48\columnwidth]{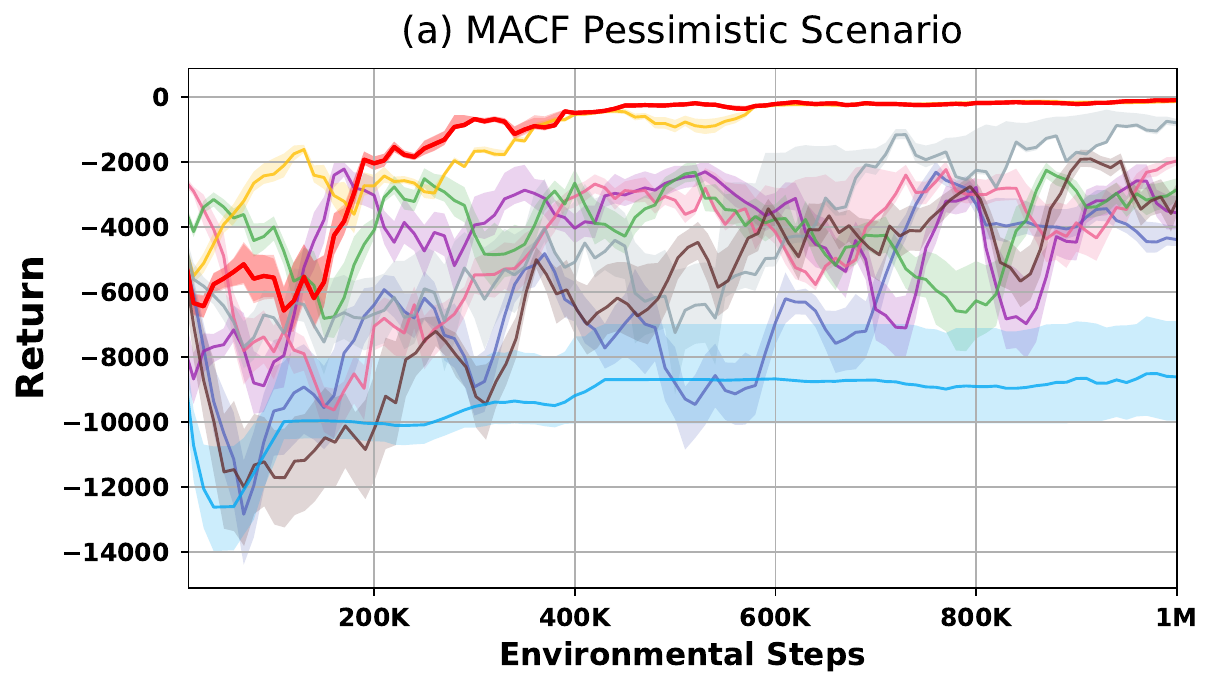} 		
		\includegraphics[width=0.46\columnwidth]{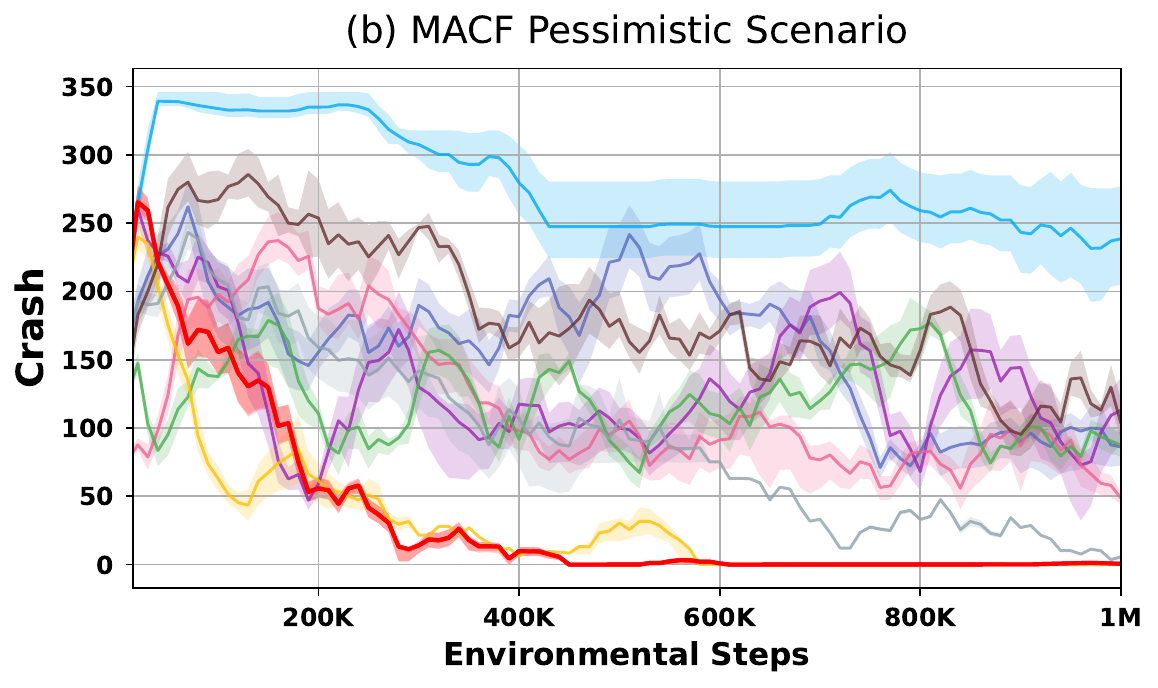} 	
	\end{minipage}

    \begin{minipage}[b]{\columnwidth} 
		\centering
		\includegraphics[width=0.48\columnwidth]{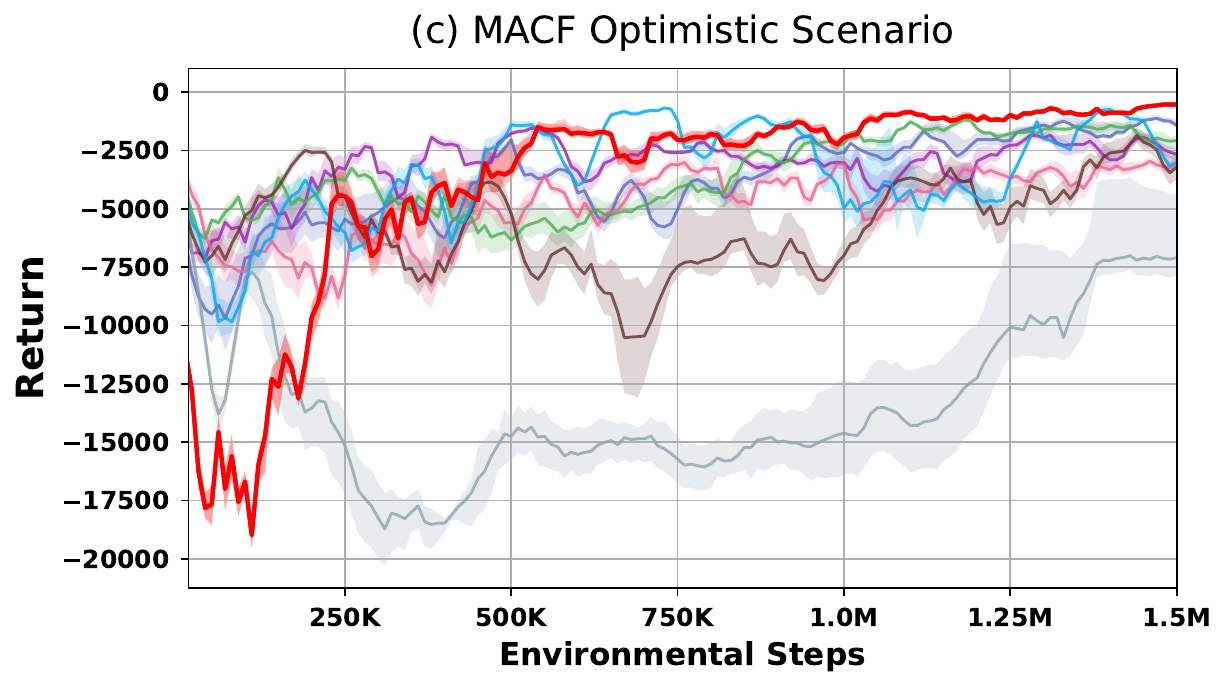} 		
		\includegraphics[width=0.46\columnwidth]{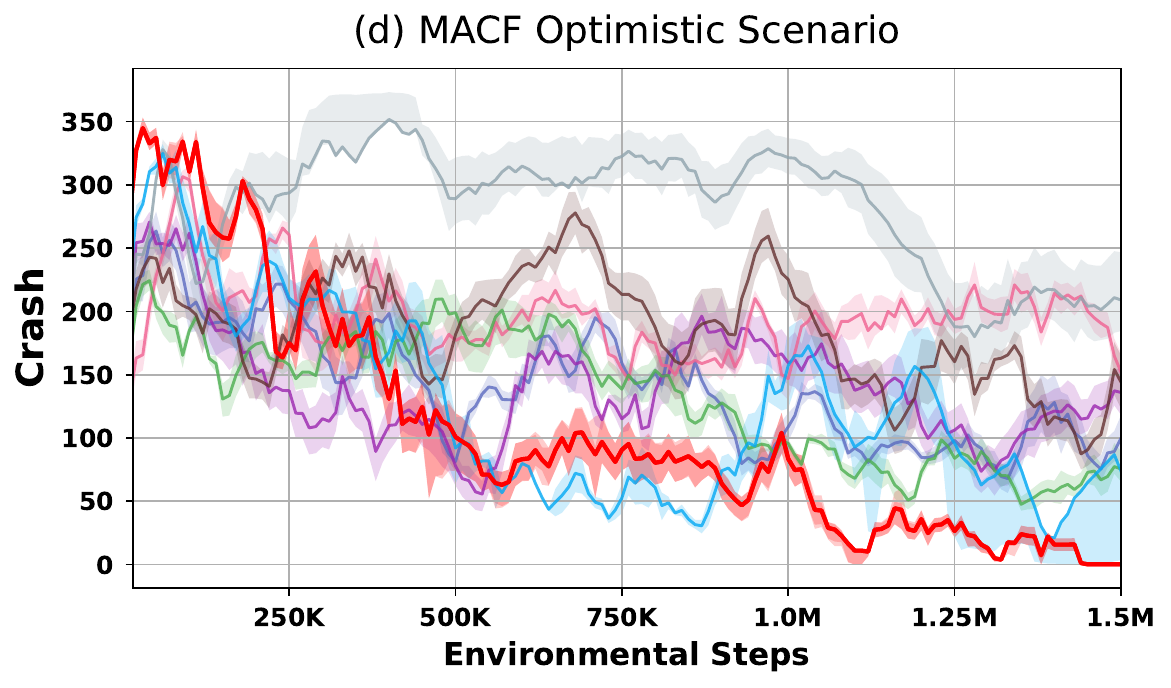} 	
	\end{minipage}
 
		\caption{The return and the average number of crashes of two MACF scenarios. (a) Return of the pessimistic scenario. (b) The average number of crashes of the pessimistic scenario. (c) Return of the optimistic scenario. (d) The average number of crashes of the optimistic scenario.}
	\label{fig:result:car}
\end{figure}

\subsection{StarCraft II}\label{smac}

We examine the performance of RiskQ and several other algorithms using the StarCraft II Multi-Agent Challenge (SMAC), a widely recognized benchmark in the field of MARL. The SMAC environment consists of two competing teams of agents engaged in combat scenarios. One team is controlled by the carefully handcrafted built-in game artificial intelligence, the other team comprises agents guided by decentralized policies learned through MARL algorithms. Consistent with many prior works~\cite{ResQ,rmix}, we set the AI level for controlling the enemy team in SMAC to 7 (very difficult). Each agent possesses a circular observation range and can engage in close-range combat with nearby enemies. At each time step, the agents make decisions to either move or perform actions related to attacking or healing. The rewards are affected by the damage inflicted on enemy units, with an additional reward given for eliminating all enemy units. The highest achievable reward is normalized to 20. We use the default reward schema provided by the SMAC benchmark to maintain consistency across experiments.

 Following the evaluation protocol of DRIMA for risk-averse SMAC, we first study the performance of RiskQ in an explorative, a dilemmatic, and a random setting for the 3s\_vs\_5z, MMM2, and 2c\_vs\_64zg scenario. This setting is described as follows.
 \begin{enumerate}
     \item Explorative:   In the explorative setting, agents behave heavily exploratory during training, thus they must consider the risk brought on by heavily exploration of other agents. In this setting, the annealing time is change to 500K following the setting of QMIX~\cite{QMIX}.
     \item Dilemmatic:   In the dilemmatic setting, agents should consider risk to prevent the learning of locally optimal policies. In this setting, agent could receive negative rewards which is based on damage dealted to our agents. 
     \item 
     Random: In this setting, one agent performs random actions 50\% of the time.
\end{enumerate}
As depicted in Figure~\ref{fig:result:smac2:appendix}, RiskQ obtains the best performance for the explorative, dilematic, and random 3s\_vs\_5z. It obtains the best performance for the dilemmatic MMM2. For the explorative MMM2 scenario, although RiskQ learn slowly in the beginning, it obtain the best performance in the end.
Combining previous results from the MACN and the MACF environments, we can conclude that RiskQ can yield promising results in environments that require risk-sensitive cooperation.

\begin{figure}[!t]
	\centering
	\begin{minipage}[b]{\columnwidth} 
		\centering
		\includegraphics[width=0.32\columnwidth]{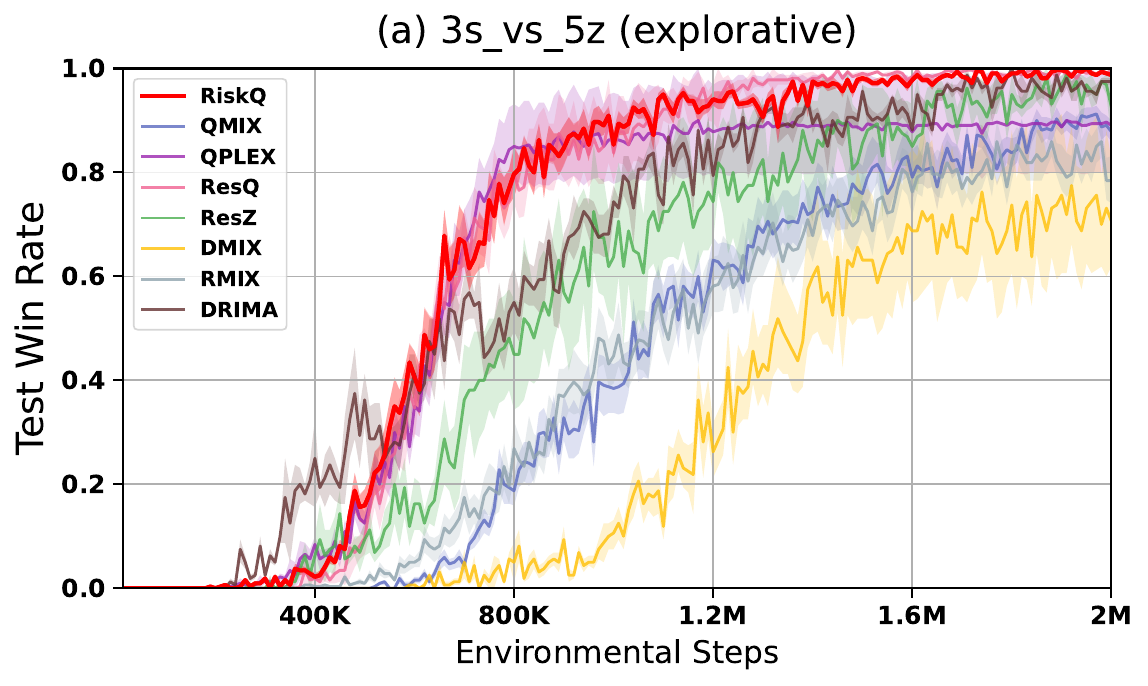}
        \includegraphics[width=0.32\columnwidth]{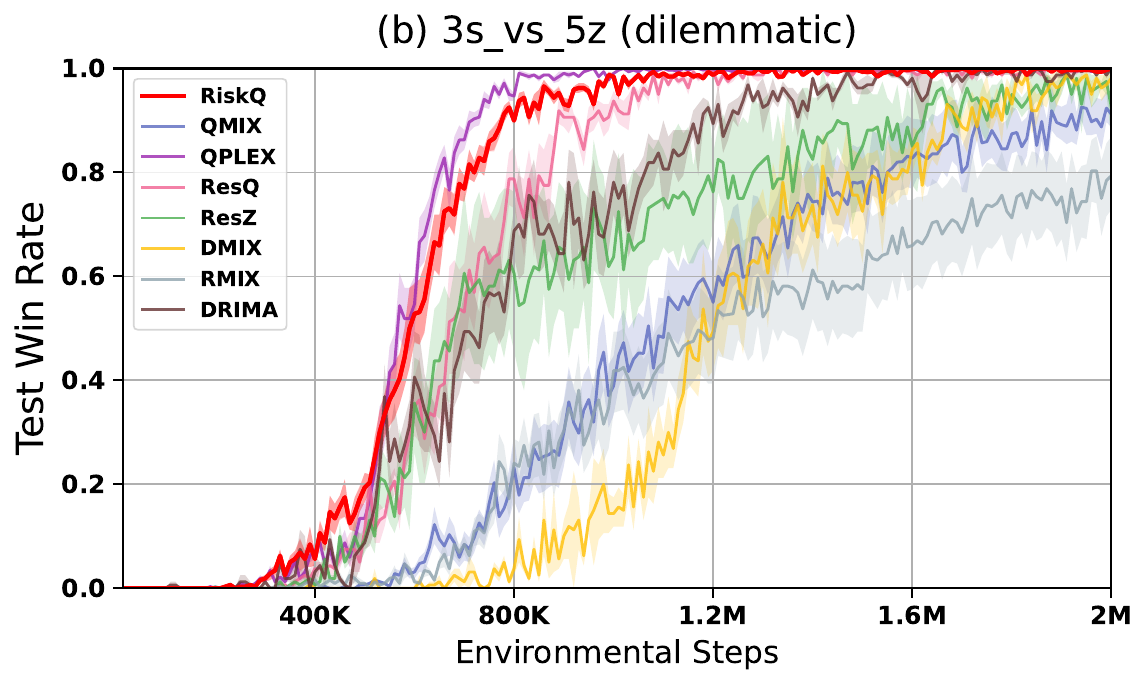}
        \includegraphics[width=0.32\columnwidth]{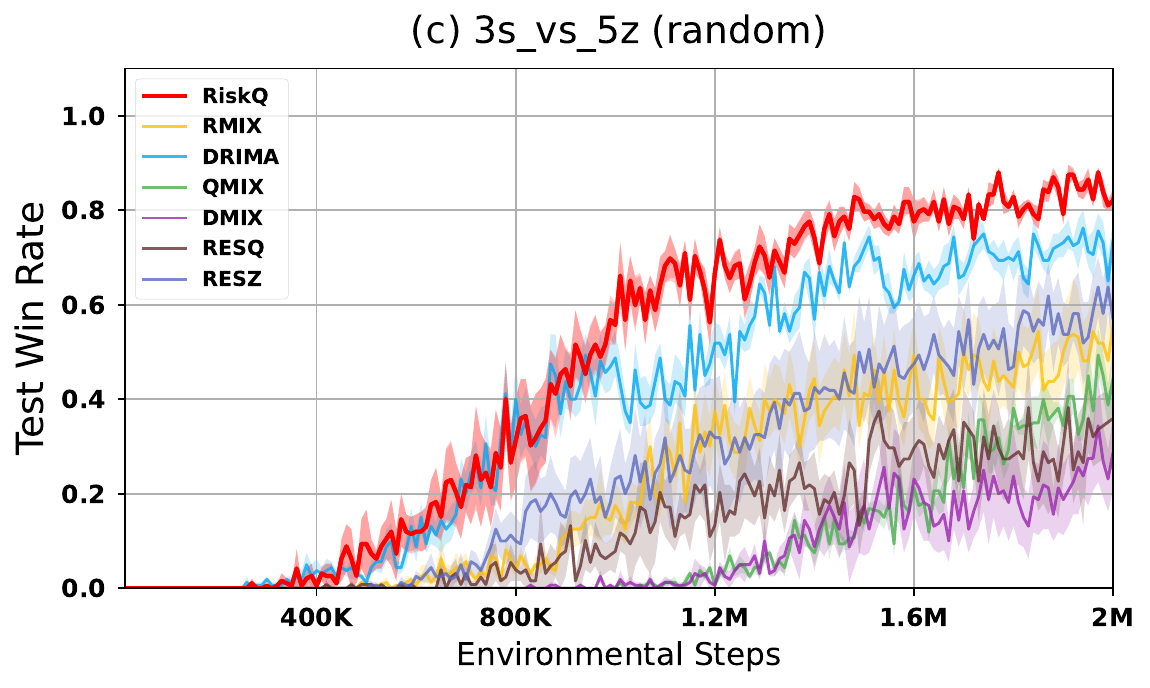}
	\end{minipage}
 	\begin{minipage}[b]{\columnwidth} 
		\centering
            \includegraphics[width=0.32\columnwidth]{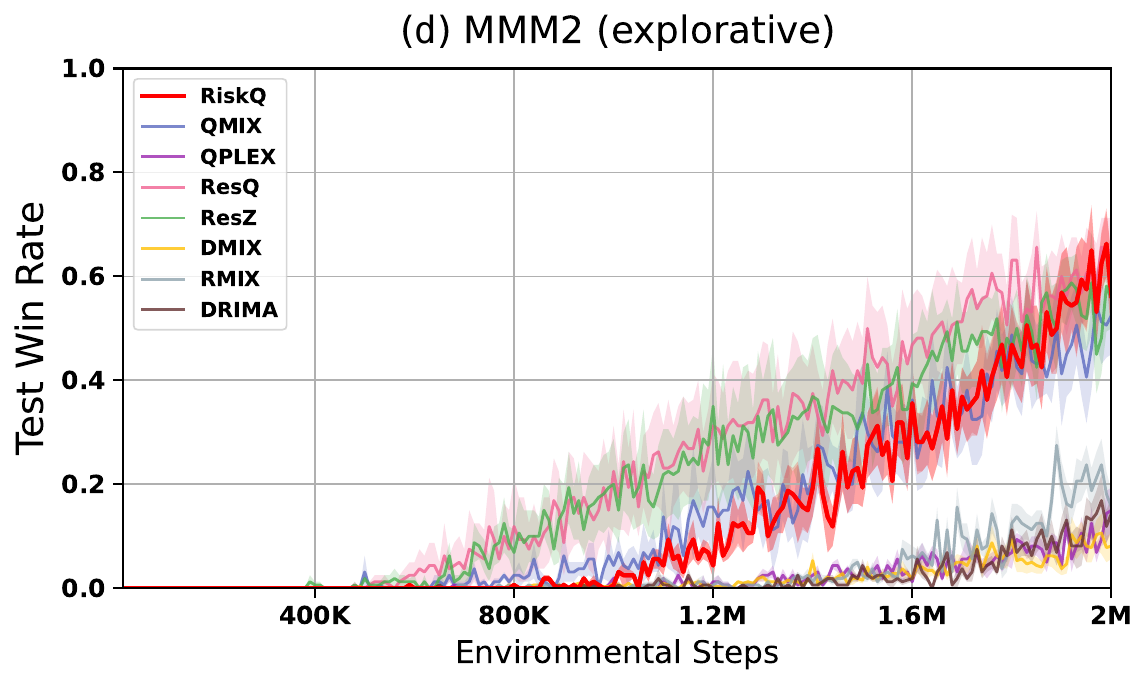}
		\includegraphics[width=0.32\columnwidth]{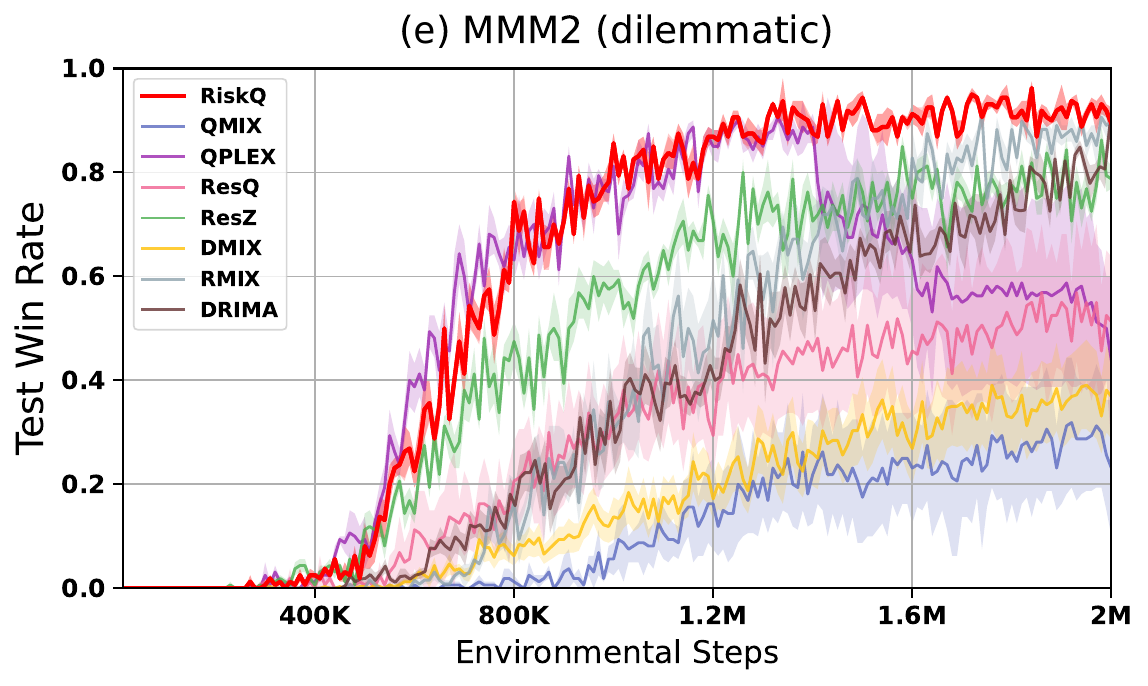}
            \includegraphics[width=0.32\columnwidth]{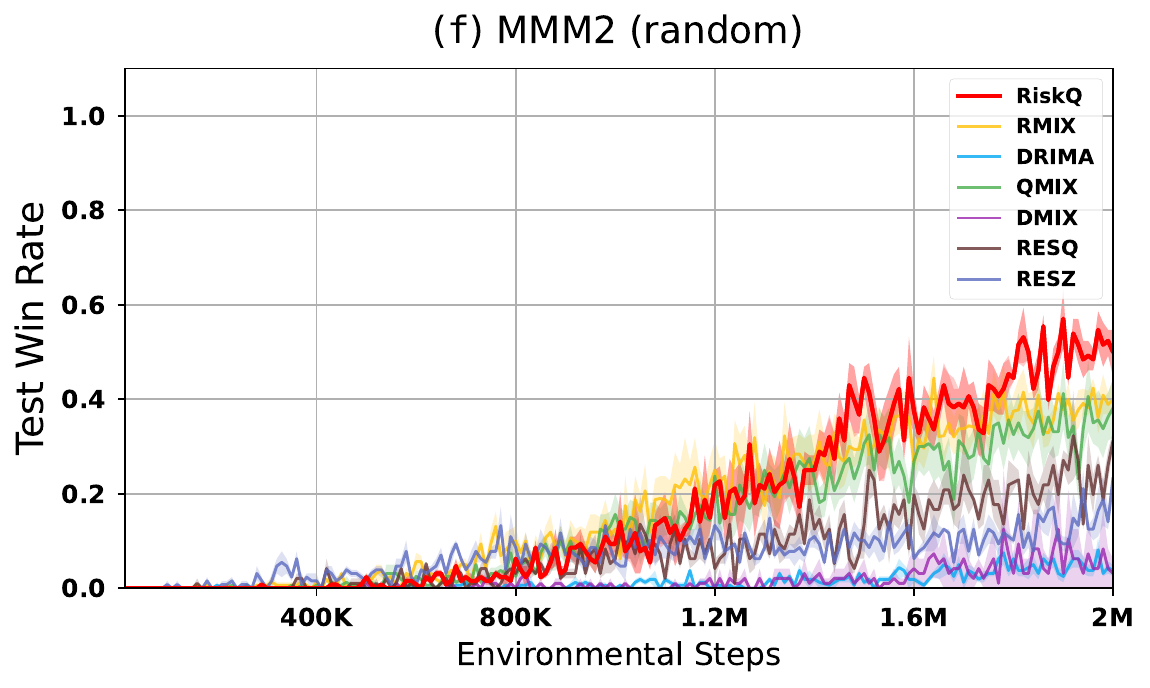}
	\end{minipage}

	\caption{Performance Comparison in Explorative, Dilemmatic, and Random SMAC Scenarios.}
	\label{fig:result:smac2:appendix}\vspace*{-0.3cm}
\end{figure}

\begin{figure}[htbp]
	\centering
	
	\begin{minipage}[b]{\columnwidth} 
		\centering
		\includegraphics[width=0.85\columnwidth]{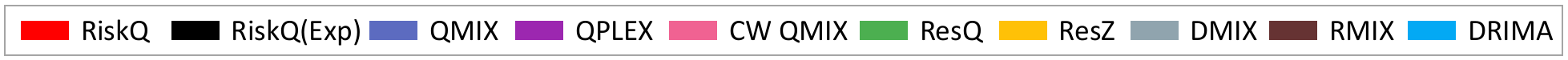}
	\end{minipage}
	
	\begin{minipage}[b]{\columnwidth} 
		\centering
		\includegraphics[width=0.32\columnwidth]{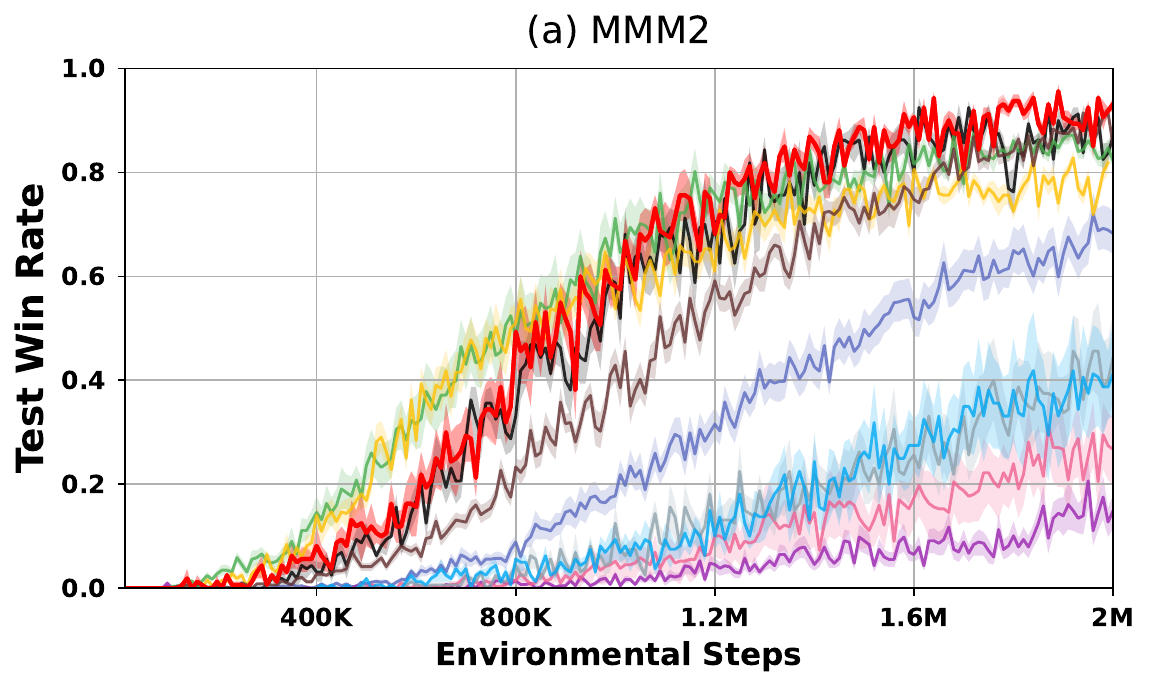} 
		\includegraphics[width=0.32\columnwidth]{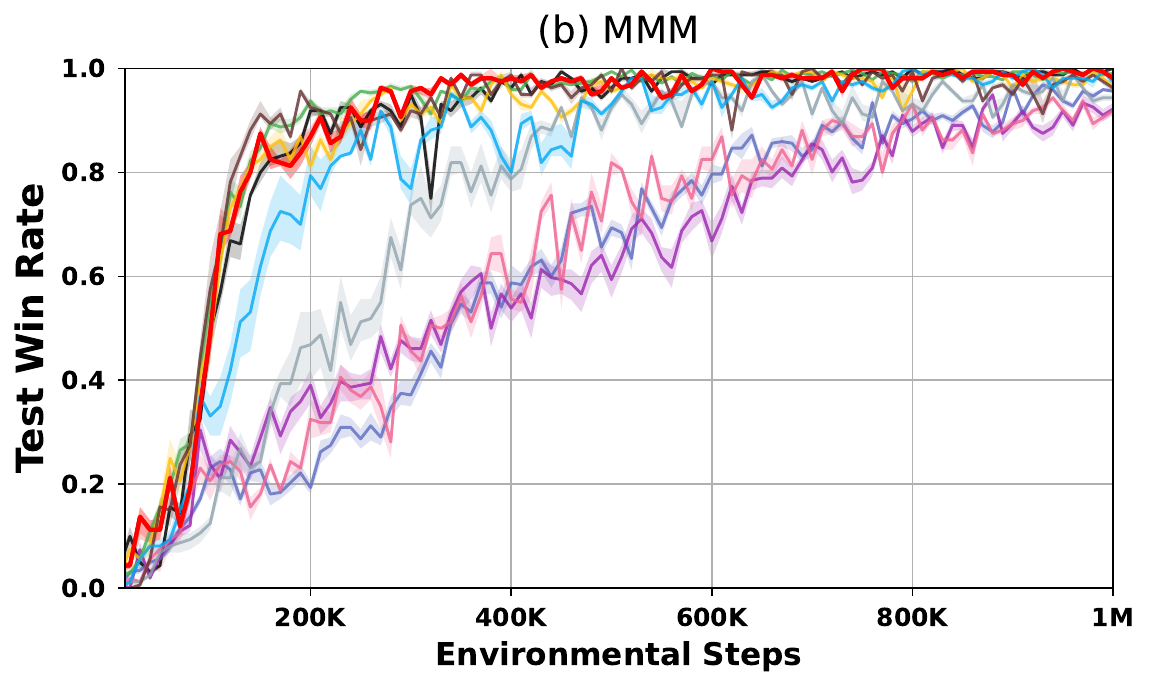}
            \includegraphics[width=0.32\columnwidth]{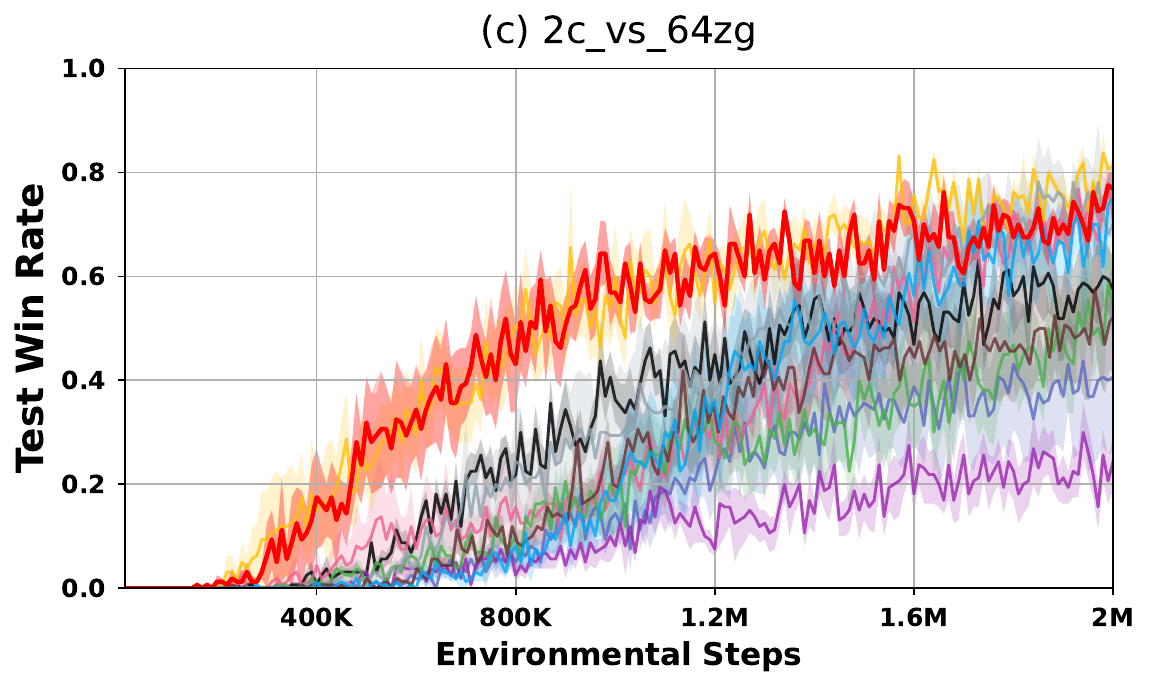}
	\end{minipage}

        \begin{minipage}[b]{\columnwidth} 
		\centering
		\includegraphics[width=0.32\columnwidth]{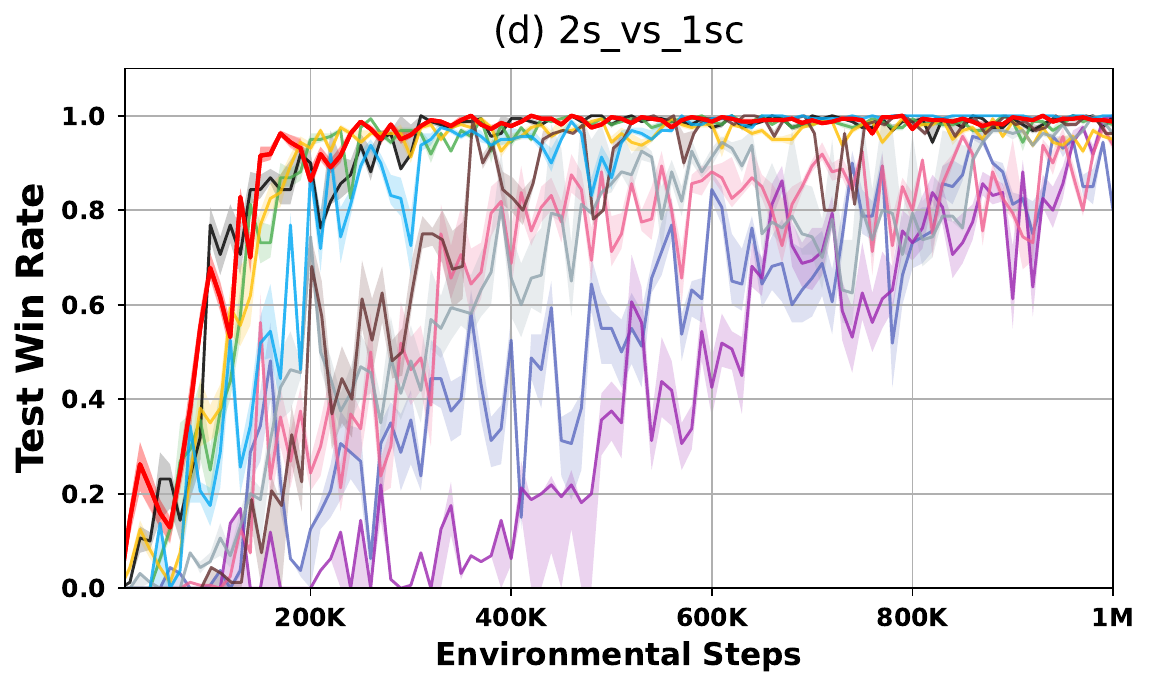} 
		\includegraphics[width=0.32\columnwidth]{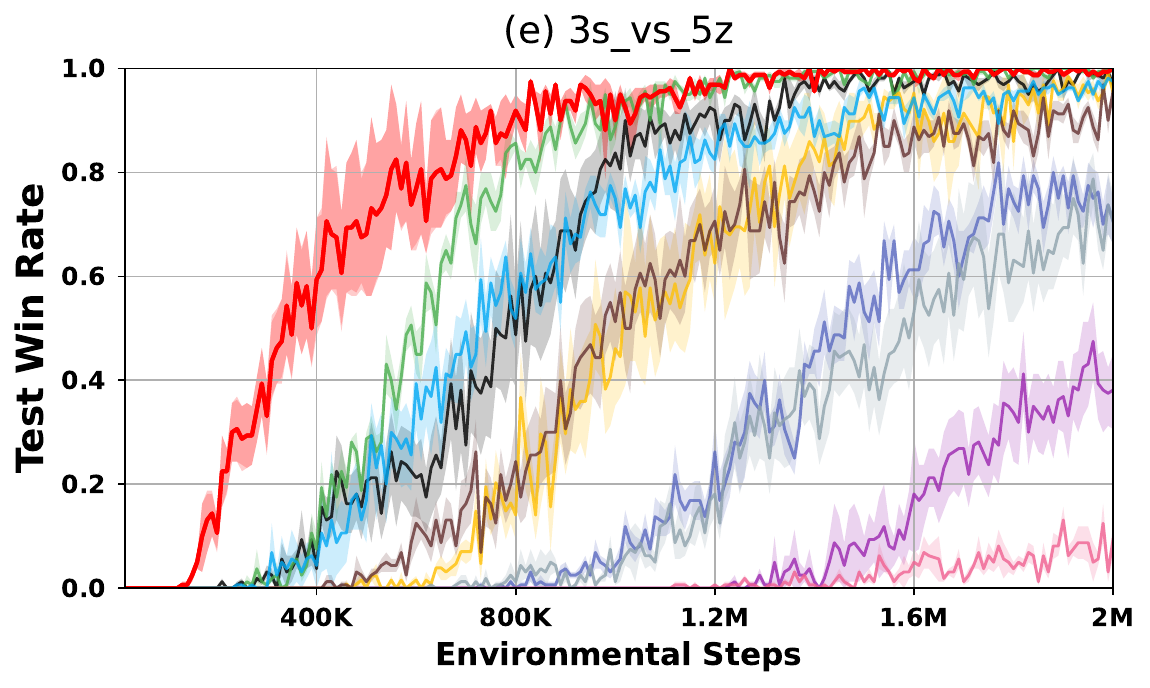}
            \includegraphics[width=0.32\columnwidth]{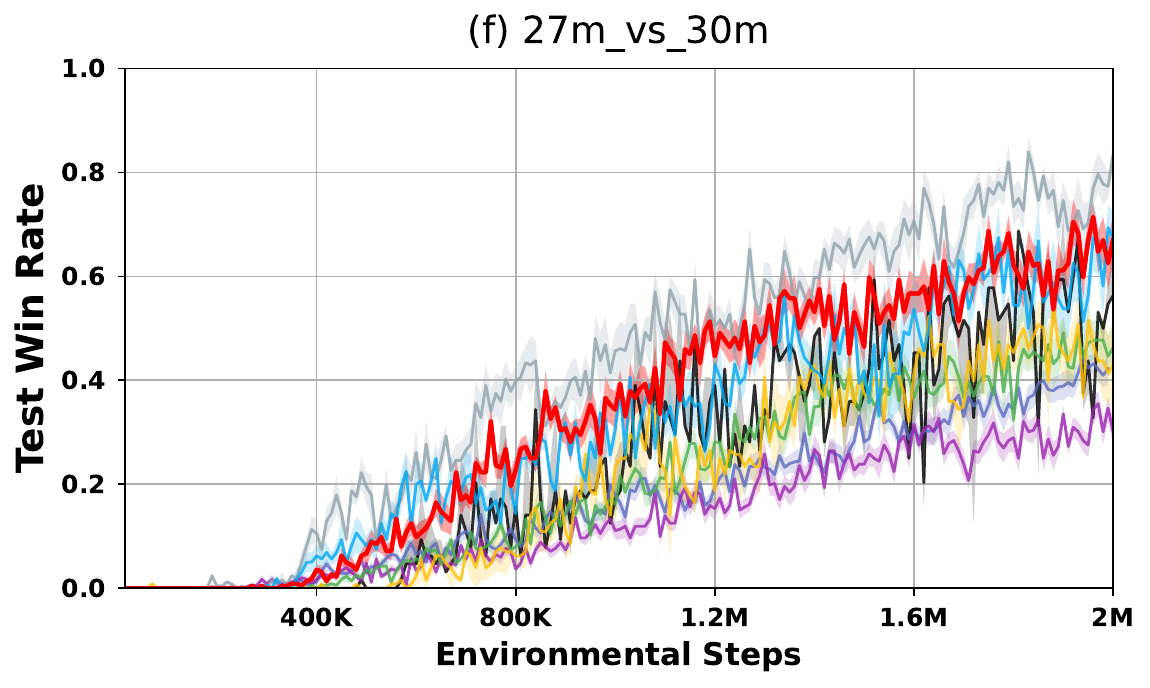}
	\end{minipage}
	
	\caption{The test win rate of different algorithms (including RiskQ(Exp) which acts based on the expectation of the value distribution) in SMAC scenarios.}
	\label{fig:result:smac3:appendix}\vspace*{-0.3cm}
\end{figure}

In the standard SMAC scenarios, we compared the performance of RiskQ with various baseline algorithms. The StarCraft II version used in this study is SC2.4.6, which is consistent with QMIX~\cite{QMIX}, WQMIX~\cite{wqmix}, and ResQ~\cite{ResQ}. We presented results of six SMAC scenarios (MMM2, MMM, 2c\_vs\_64zg, 2s\_vs\_1sc, 3s\_vs\_5z, 27m\_vs\_30m) in Figure~\ref{fig:result:smac3:appendix}, revealing that not only can RiskQ achieve state-of-the-art performance in these scenarios, but RiskQ(Expectation) also exhibits decent performance. \emph{The gap between RiskQ and RiskQ(Expectation) underscores the importance of considering risk}. Here we provide additional results for three additional scenarios in Figure~\ref{fig:result:smac:appendix}. As depicted in Figure~\ref{fig:result:smac:appendix}, RiskQ achieves competitive performance in the 1c3s5z, 8m\_vs\_9m, and bane\_vs\_bane scenarios. 

\begin{figure}[!t]
	\centering
	
	\begin{minipage}[b]{\columnwidth} 
		\centering
		\includegraphics[width=0.8\columnwidth]{pics/smac/SMAC_legend.pdf}
	\end{minipage}
	
	\begin{minipage}[b]{\columnwidth} 
		\centering
		\includegraphics[width=0.32\columnwidth]{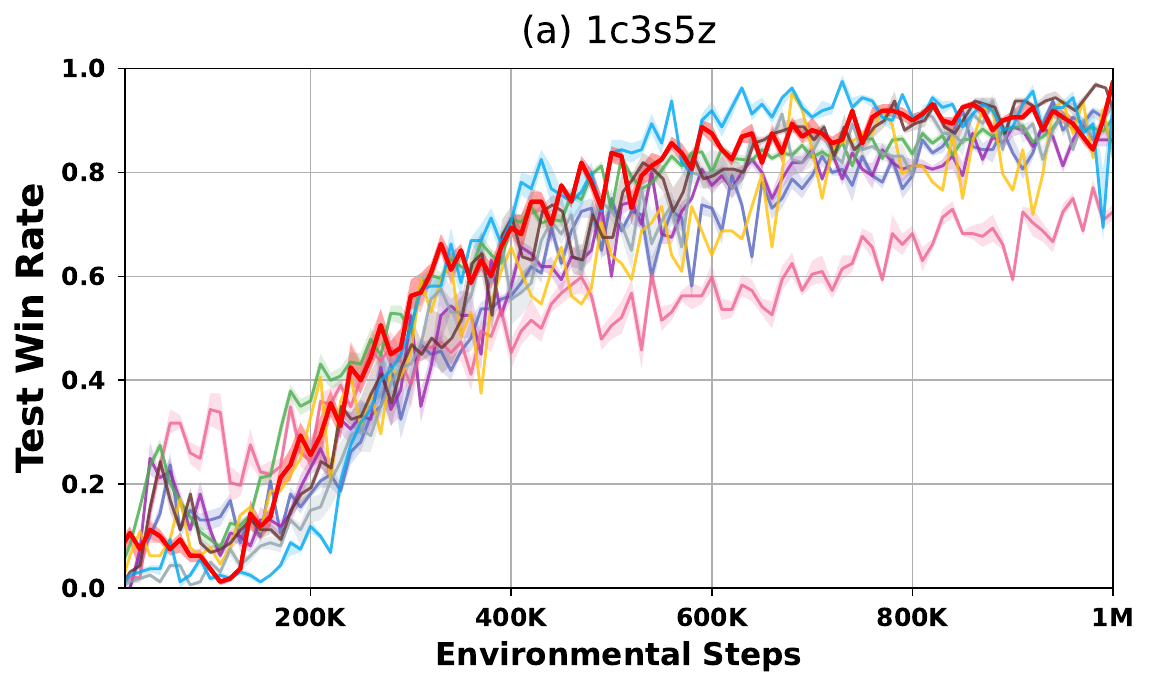} 
		\includegraphics[width=0.32\columnwidth]{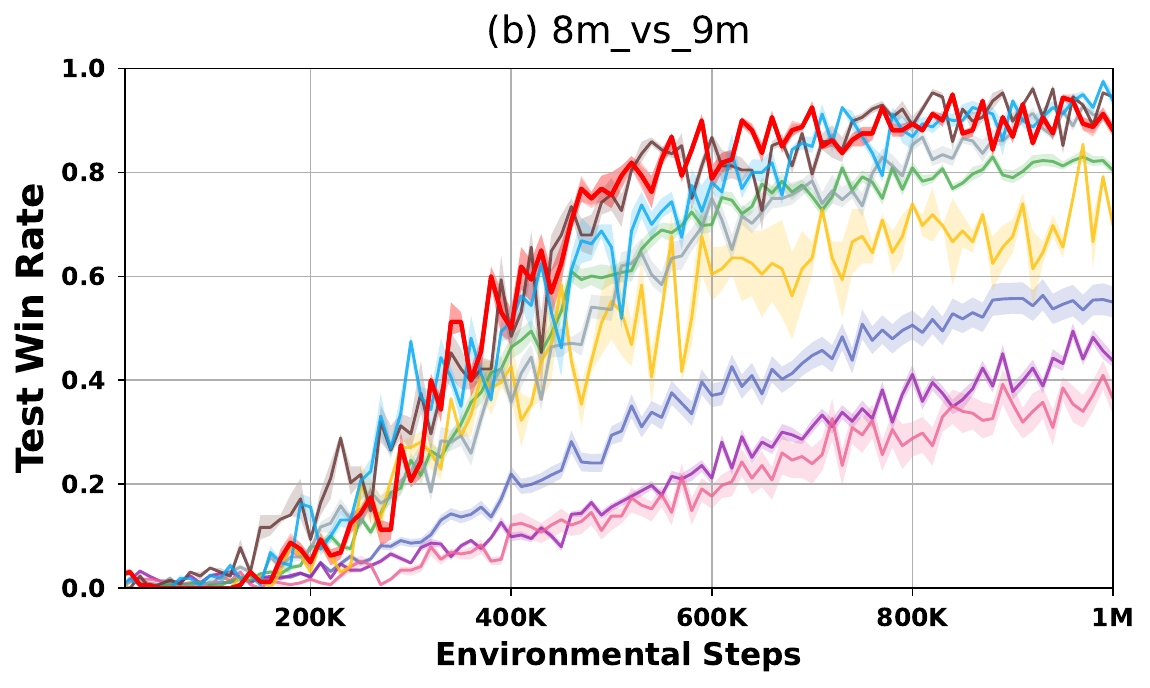}
		\includegraphics[width=0.32\columnwidth]{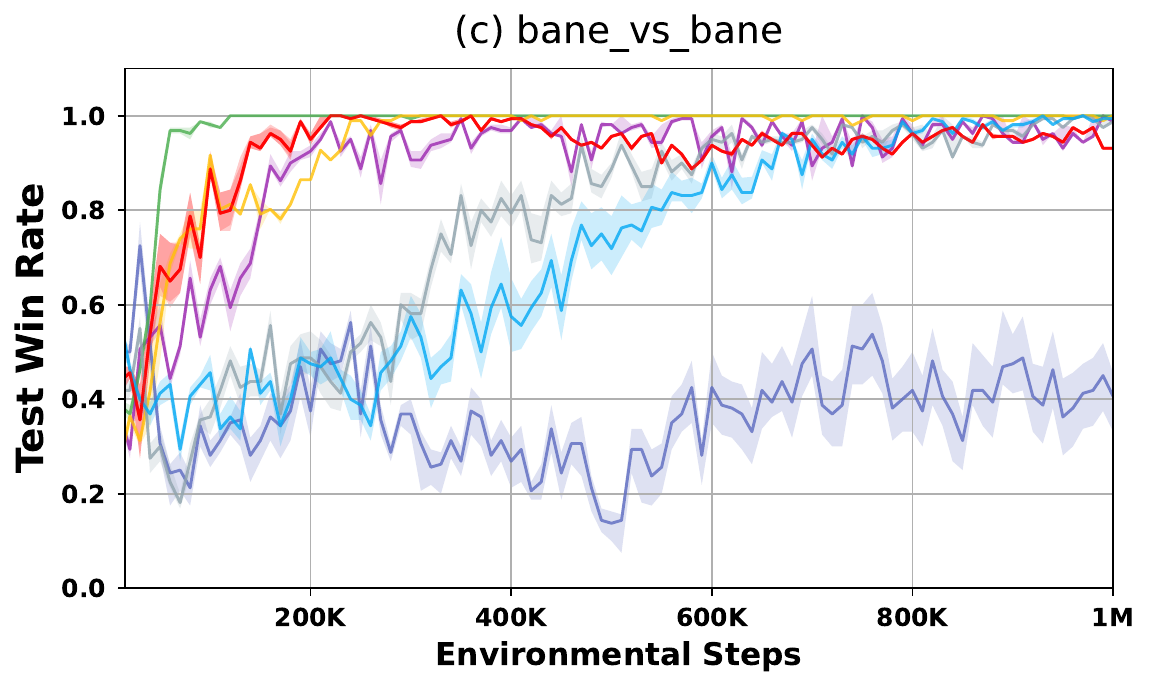}
	\end{minipage}
	
	\caption{The test win rate of different algorithms for the SMAC benchmark.}
	\label{fig:result:smac:appendix}\vspace*{-0.3cm}
\end{figure}

Additionally, we compared the performance between RiskQ and DRIMA that configured with cooperative risk-seeking and environmental risk-seeking settings. The experimental results are illustrated in Figure~\ref{fig:result:smac2:appendix2} (left). The result reveals that RiskQ(Expectation) can achieve performance on par with DRIMA and DRIMA-SS. The consideration of risk in RiskQ allows for further performance enhancement, thereby emphasizing the superiority of RiskQ's performance and the importance of risk consideration.

Moreover, since the optimization metrics of RiskQ are not expectations, it is not fair to compare only the expectation metrics with other baselines. We compare the Wang 0.75 metric for return distribution and win rate distribution in two standard SMAC scenarios. As IQN~\cite{iqn} does not guarantee to converge to the true distribution, we want to know whether the algorithm is optimizing the risk metric correctly. The experimental results in Figure~\ref{fig:result:wang:appendix} indicate that the risk-sensitive objective optimized by RiskQ is learning gradually with time.

\begin{figure}[htbp]
	\centering
	
	\begin{minipage}[b]{\columnwidth} 
		\centering
		\includegraphics[width=0.40\columnwidth]{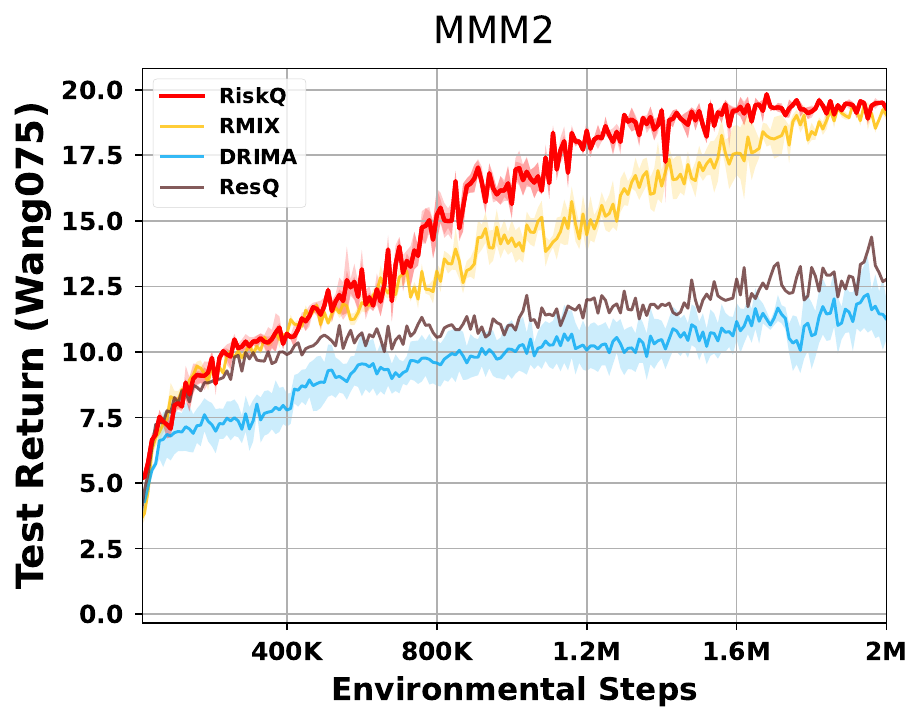} 
		\includegraphics[width=0.40\columnwidth]{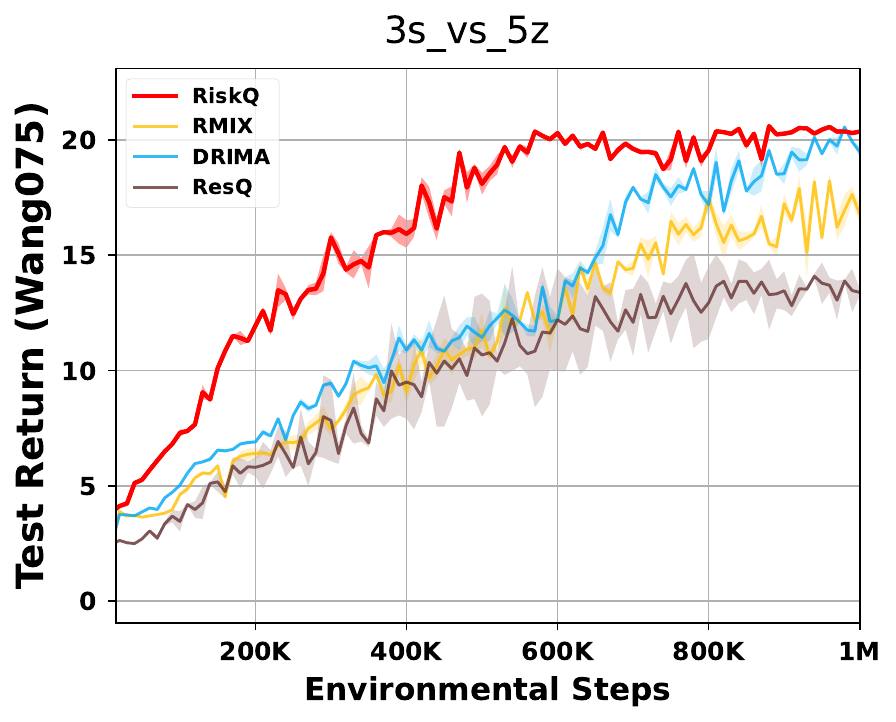}
	\end{minipage}

        \begin{minipage}[b]{\columnwidth} 
		\centering
		\includegraphics[width=0.40\columnwidth]{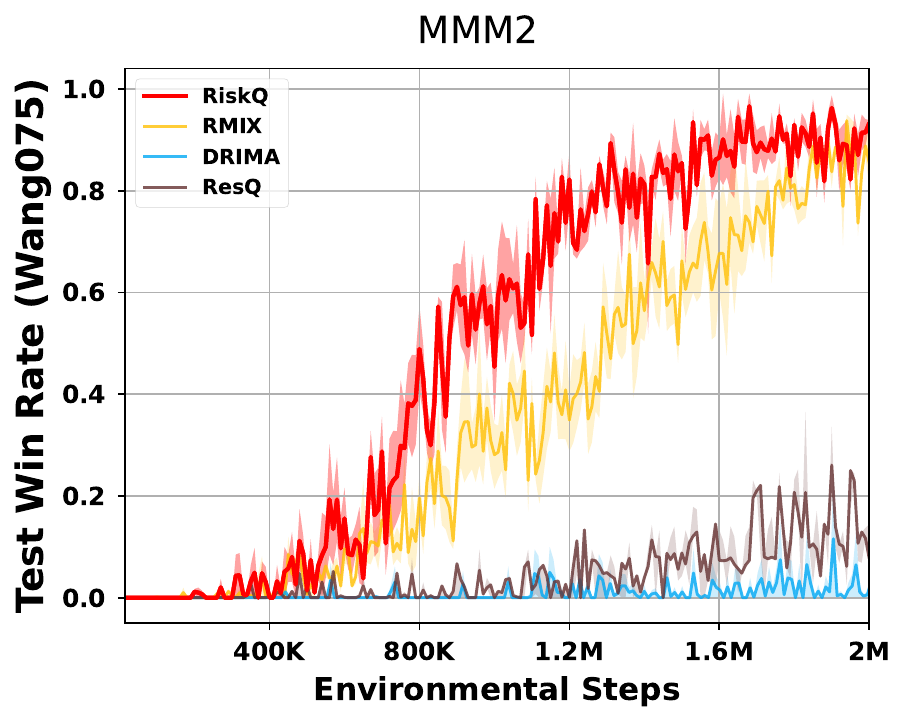} 
		\includegraphics[width=0.40\columnwidth]{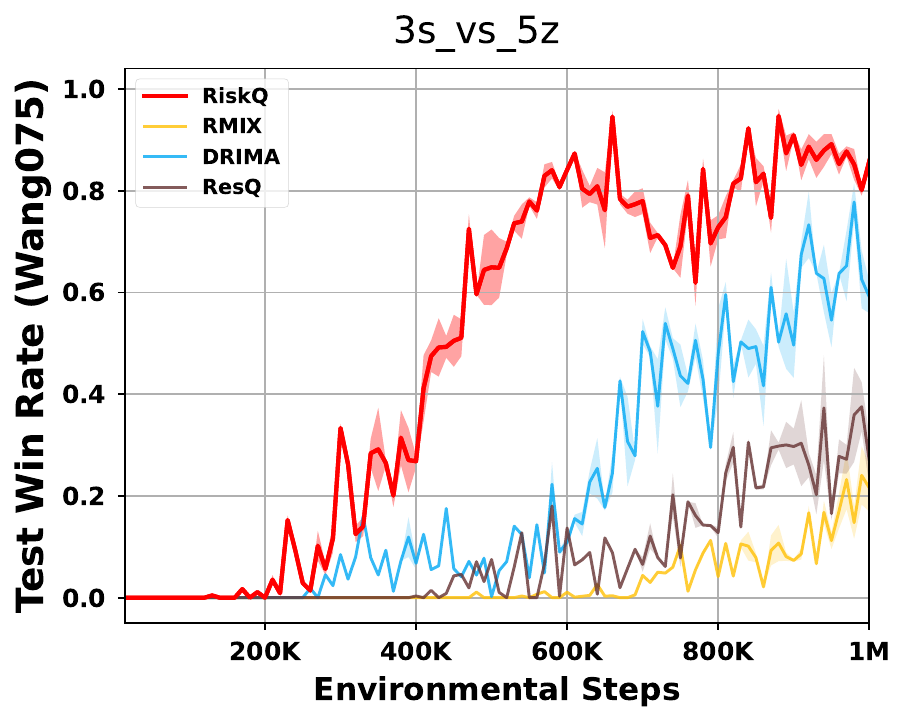}
	\end{minipage}
	
	\caption{The Wang 0.75 metric for return distribution and win rate distribution in standard SMAC: MMM2 (Left) and 3s\_vs\_5z (Right).}
	\label{fig:result:wang:appendix}\vspace*{-0.3cm}
\end{figure}

All the SMAC experiments were conducted using the SC2.4.6 version of StarCraft II. However, as noted in WQMIX, \emph{"performance is not comparable across versions"}. Therefore, we conducted experiments using the SC2.4.10 version of StarCraft II to compare the performance of RiskQ with ResQ, ResZ, and fine-tuned QMIX~\cite{pymarl2} in the MMM2 scenario. As depicted in Figure~\ref{fig:result:smac2:appendix2} (right), RiskQ maintains strong performance even in the SC2.4.10 version. This suggest that RiskQ can obtain promising results in SC2.4.10 as well.

\begin{figure}[!t]
	\centering
	
	\begin{minipage}[b]{\columnwidth} 
		\centering
		\includegraphics[width=0.49\columnwidth]{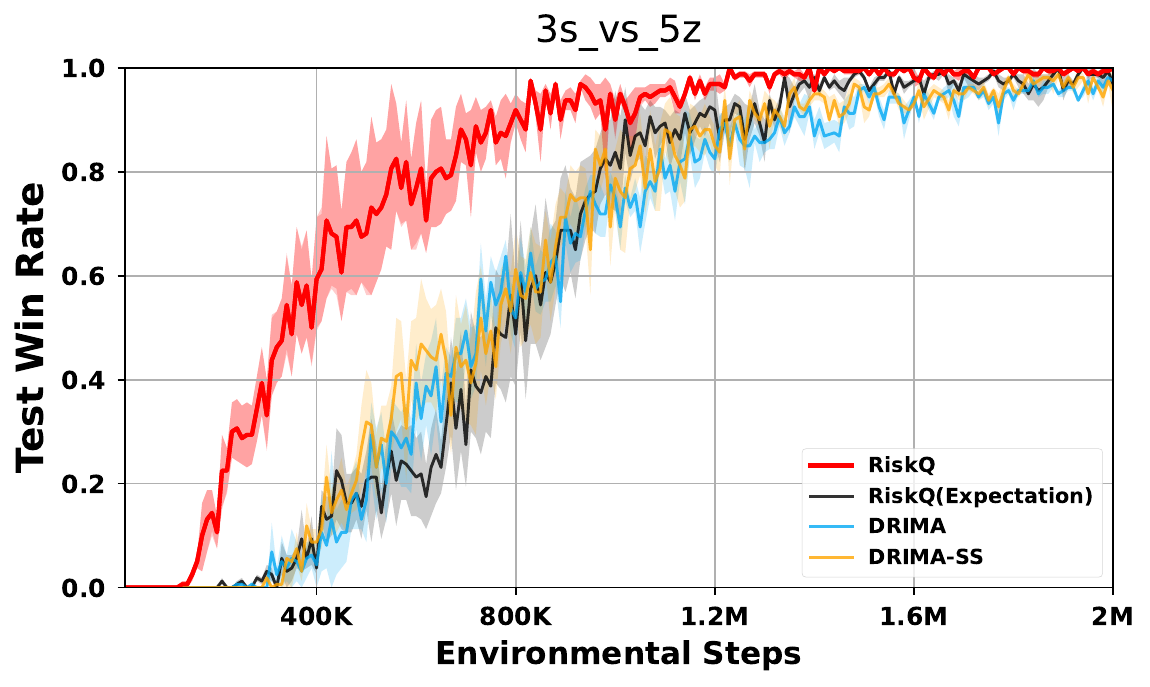} 
  	\includegraphics[width=0.49\columnwidth]{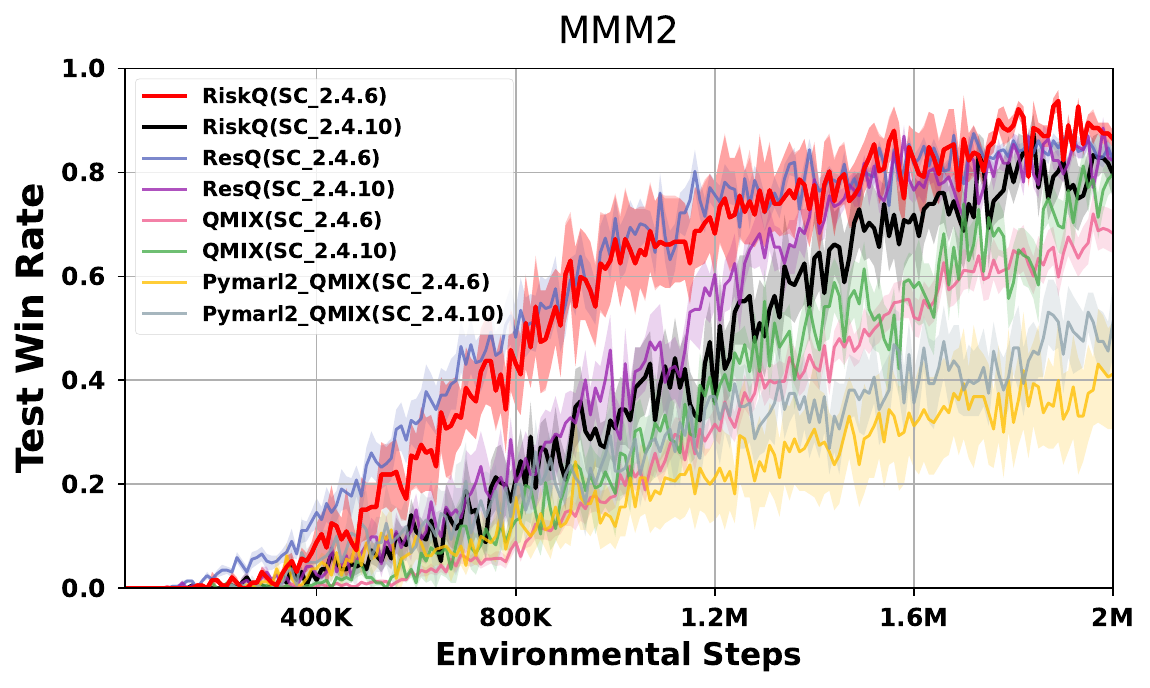} 

	\end{minipage}
 
	\caption{A performance comparison between RiskQ and DRIMA with different configurations under the 3s\_vs\_5z scenario, and the performance of algorithms with different StarCraft II versions.}
	\label{fig:result:smac2:appendix2}\vspace*{-0.3cm}
\end{figure}

\subsection{SMACv2}

SMACv2~\cite{smacv2} is a new version of SMAC with improved stochasticity, which can better demonstrate RiskQ's ability to handle uncertainty. We verified the performance of RiskQ against other baselines in different SMACv2 scenarios, and the results are shown in Figure~\ref{fig:result:smacv2:appendix}.

\begin{figure}[htbp]
	\centering
	
	\begin{minipage}[b]{\columnwidth} 
		\centering
		\includegraphics[width=0.40\columnwidth]{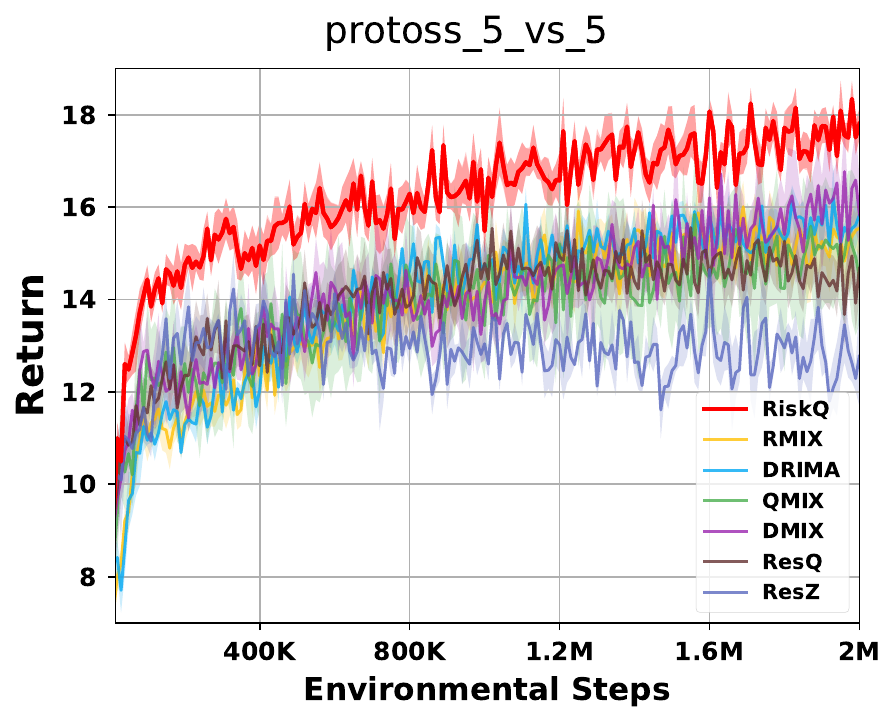} 
		\includegraphics[width=0.40\columnwidth]{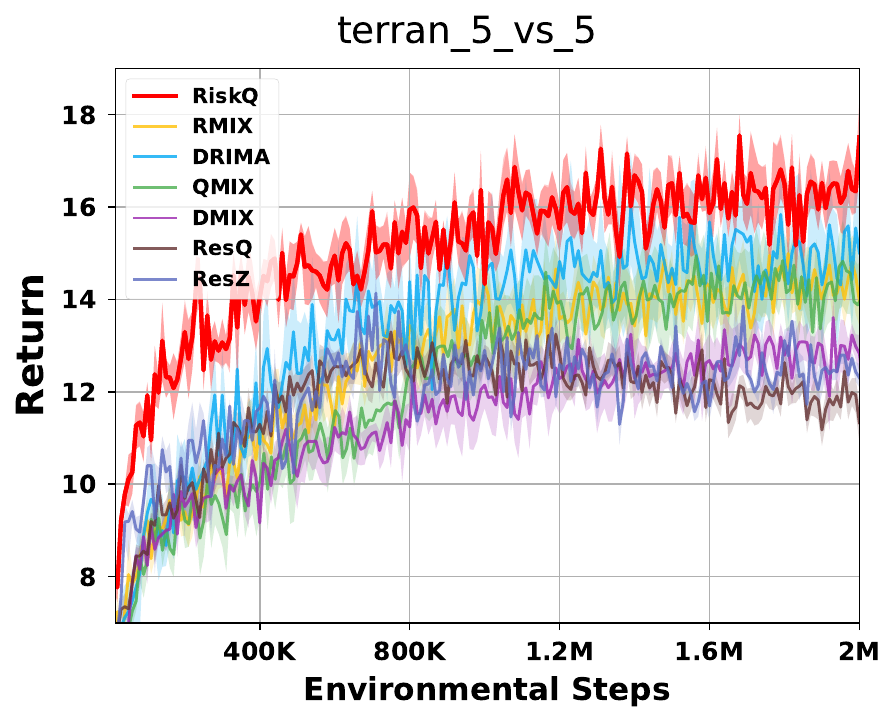}
	\end{minipage}

        \begin{minipage}[b]{\columnwidth} 
		\centering
		\includegraphics[width=0.40\columnwidth]{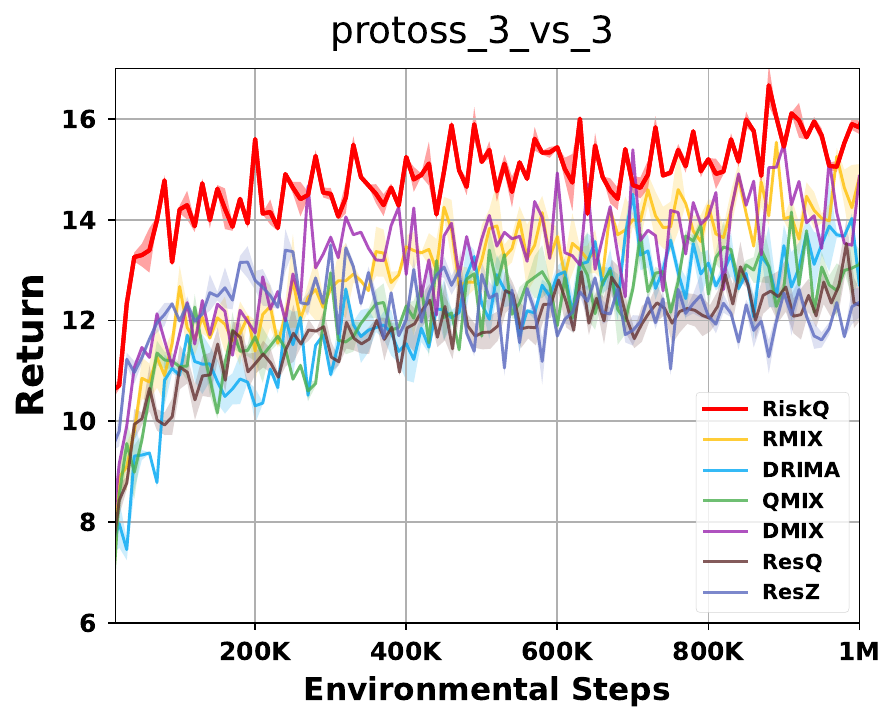} 
		\includegraphics[width=0.40\columnwidth]{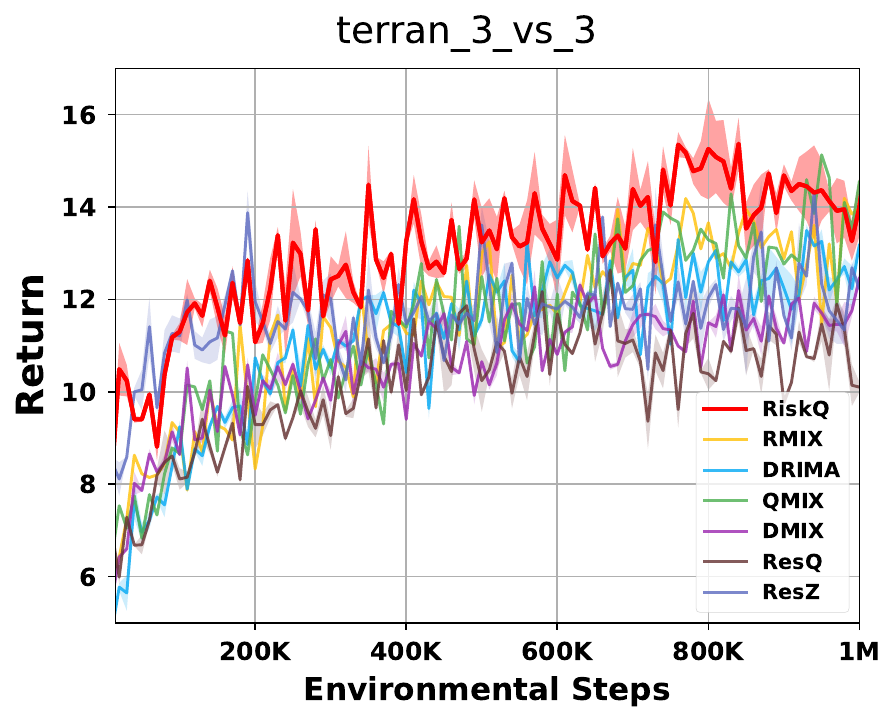}
	\end{minipage}
	
	\caption{The test return mean of different algorithms in different SMACv2 scenarios.}
	\label{fig:result:smacv2:appendix}\vspace*{-0.3cm}
\end{figure}

\subsection{Ablation Study}

\subsubsection{Impact of not satisfying the RIGM principle}

To investigate the reasons behind RiskQ’s promising results, we analyze different designs of RiskQ on the SMAC, MACN, and MACF scenarios.  First, we study the necessity of satisfying the RIGM principle by making about 50\% of RiskQ agents follow different risk measures. In Figure~\ref{fig:result:ablation:why}, w/o RIGM VaR$_1$, w/o RIGM CVaR$_{1}$ and w/o RIGM CVaR$_{0.2}$ indicate that about 50\% of agents act according to the VaR$_1$ (risk-seeking), CVaR$_1$ (risk-neutral) and the CVaR$_{0.2}$ (risk-averse) metrics, respectively. These risk measures are not the risk measure $\psi_{\alpha}$ = Wang$_{0.75}$ used by other agents. As depicted in Figure~\ref{fig:result:ablation:why} (a-c) and (e-f), RiskQ performs poorly in all the three cases, highlighting the importance of satisfying the RIGM principle that agents act according to the same risk measure.  The performance of these three cases is comparable to that of RiskQ in the 2s\_vs\_1sc scenario. We speculate that this may be due to the fact that 2s\_vs\_1sc is a simple environment with fewer agents, and therefore, the differences in performance are not apparent.
\begin{figure}[htbp]
	\centering
	\begin{minipage}[b]{\columnwidth} 
		\centering
		\includegraphics[width=0.32\columnwidth]{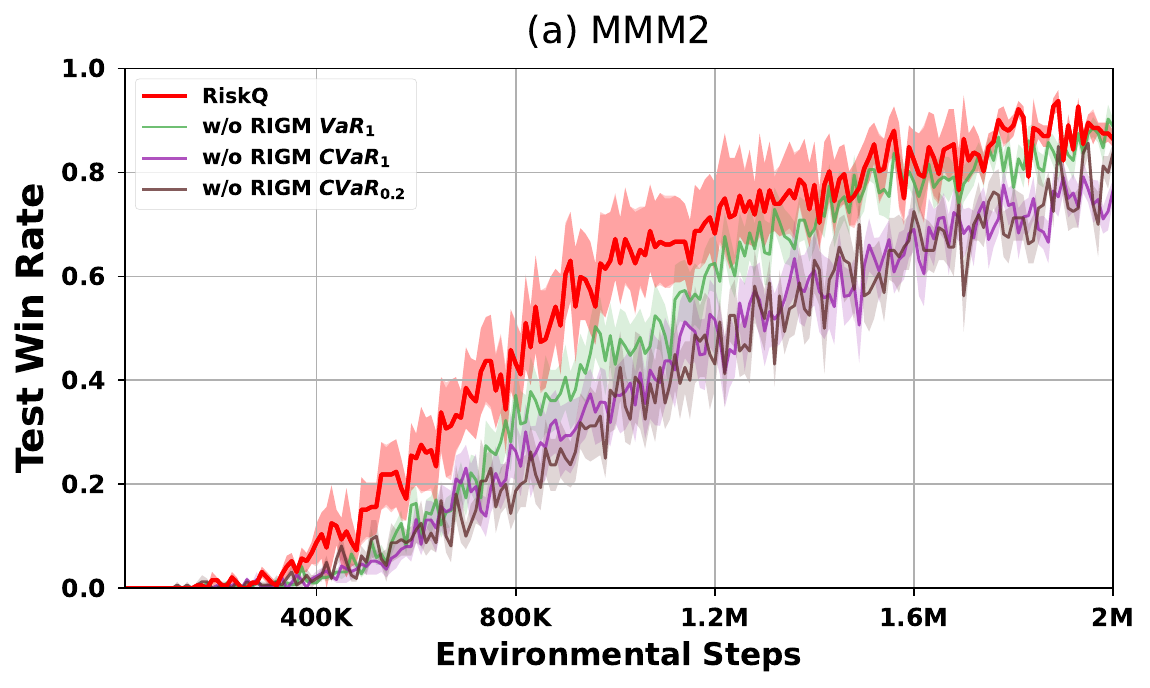} 
		\includegraphics[width=0.32\columnwidth]{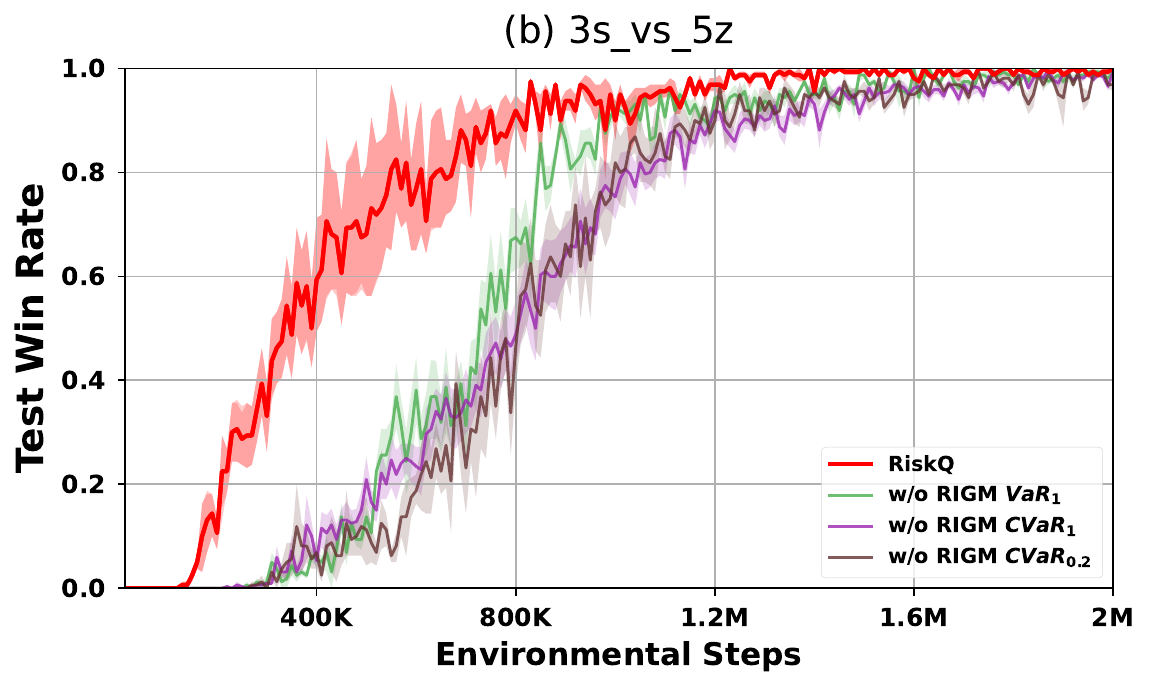}
        \includegraphics[width=0.32\columnwidth]{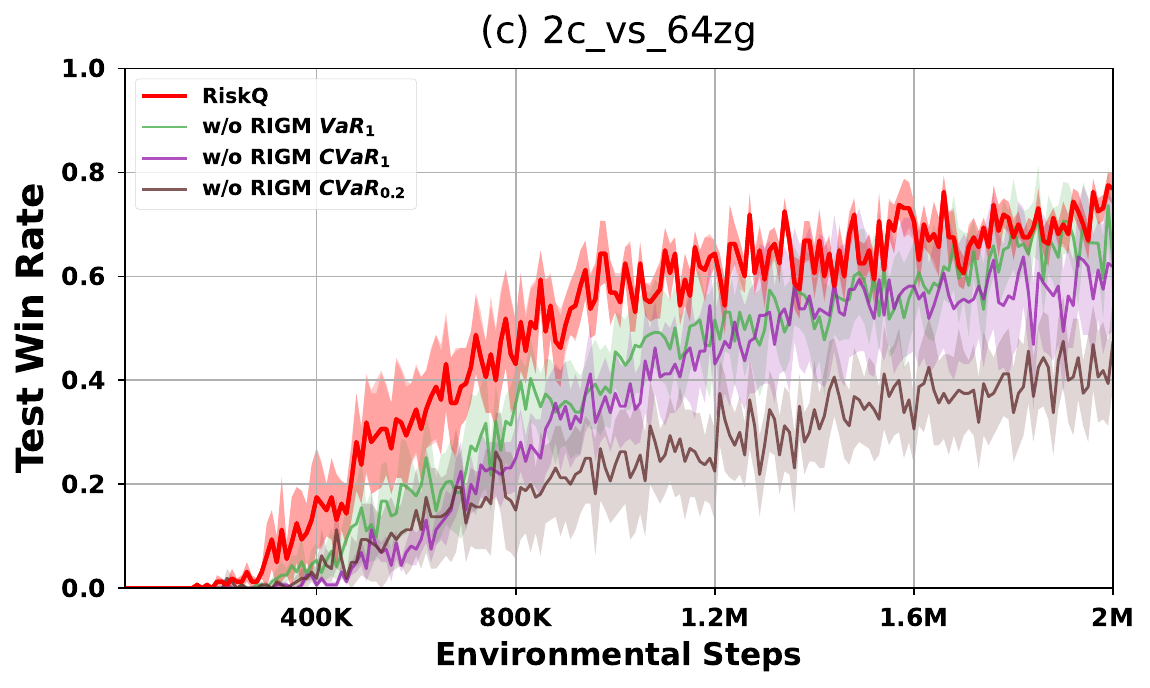} 
	\end{minipage}

    \begin{minipage}[b]{\columnwidth} 
		\centering
		\includegraphics[width=0.32\columnwidth]{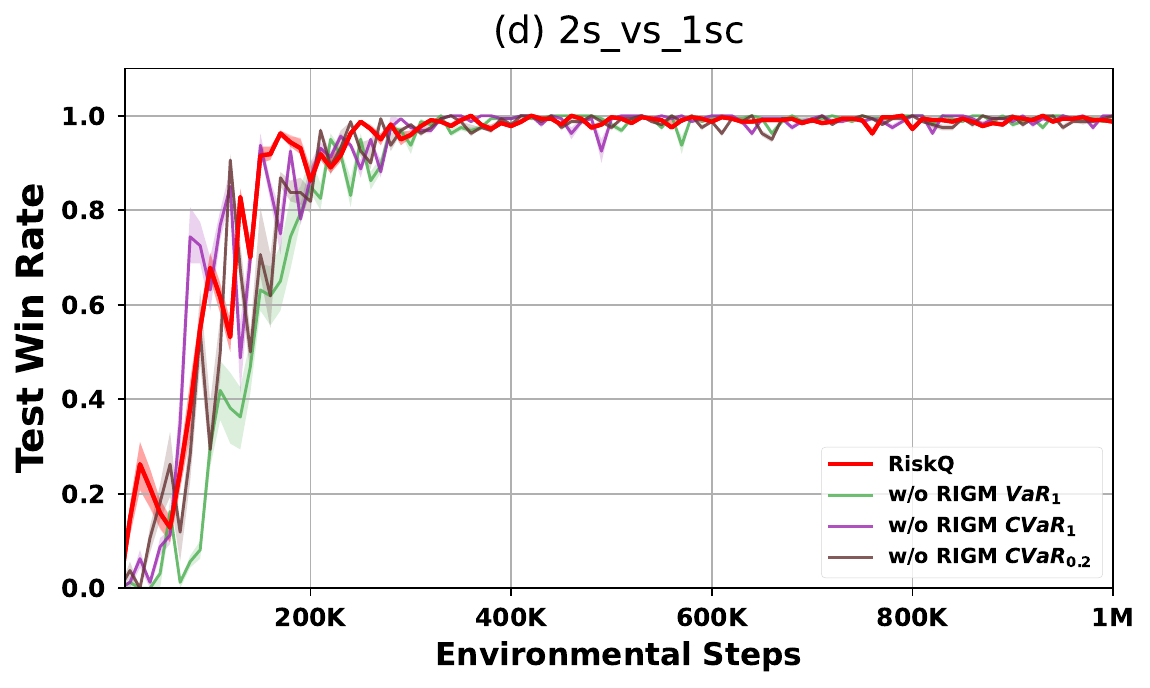}
		\includegraphics[width=0.32\columnwidth]{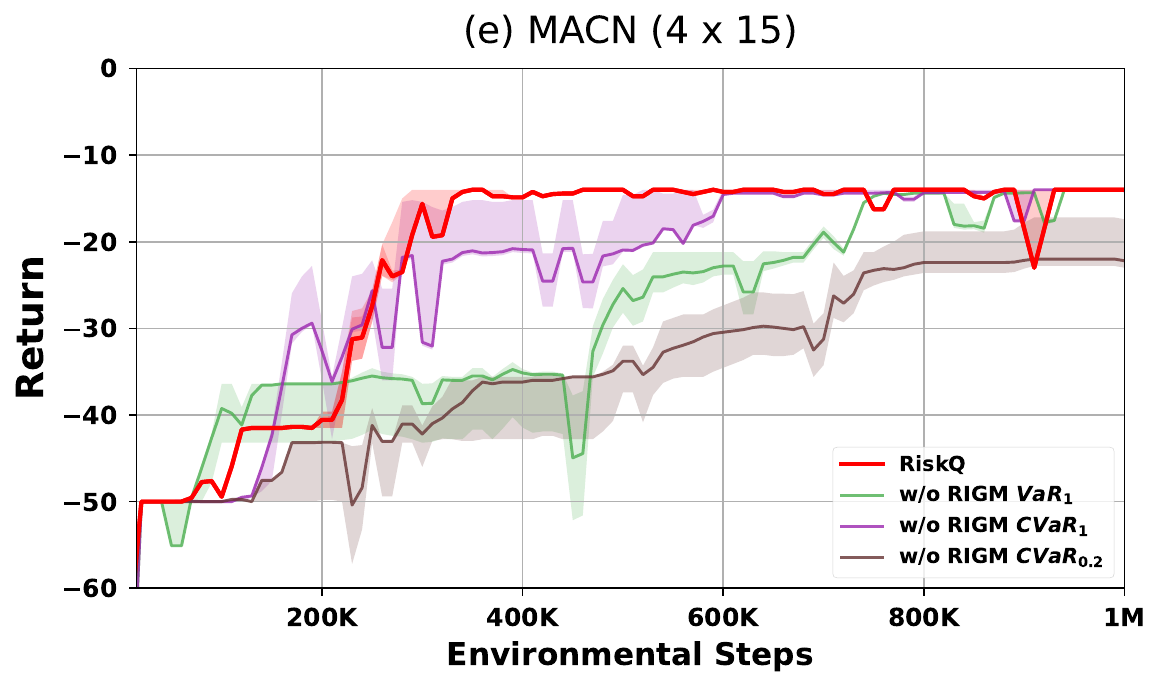}
        \includegraphics[width=0.32\columnwidth]{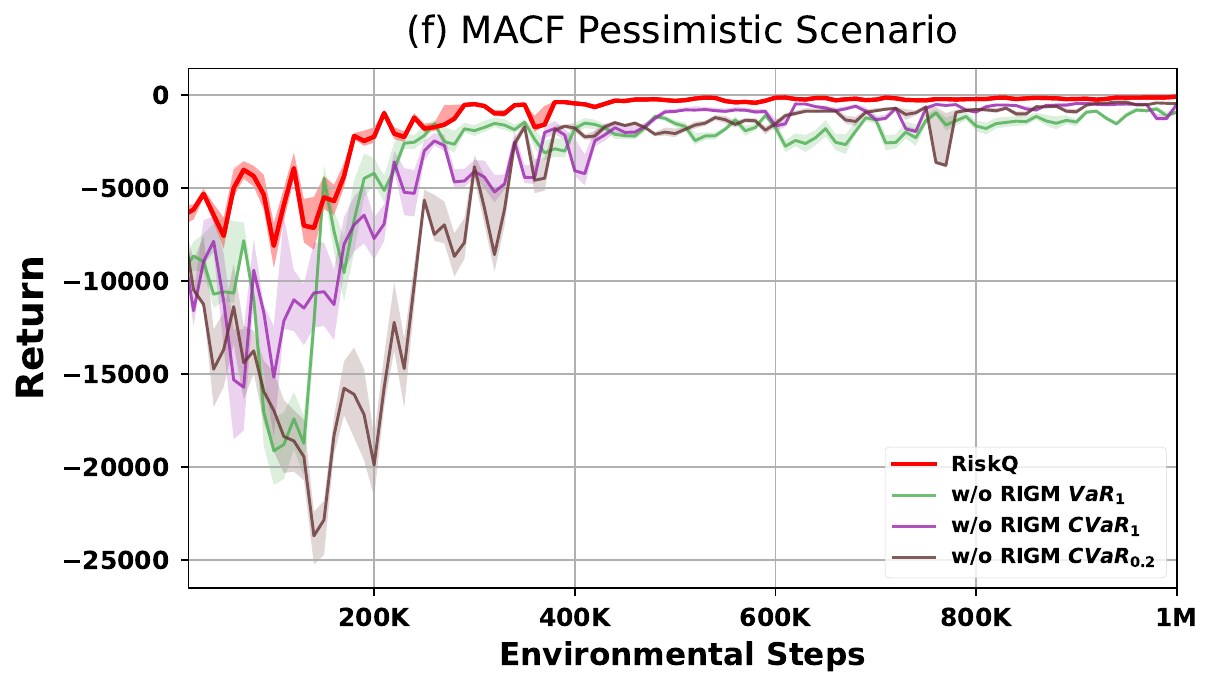}
	\end{minipage}
	
	\caption{Ablation Study on why the RIGM principle matters.}
	\label{fig:result:ablation:why}\vspace*{-0.3cm}
\end{figure}

As shown in Fig~\ref{fig:result:ablation:why}, not satisfying RIGM could lead to significant performance drop. We will further explain the usefulness of RIGM further through an example and corresponding empirical results.

In cooperative MARL, the optimal action of an agent depends on the actions executed by others. As an example for MARL combat scenarios, depicted in Fig~\ref{fig:result:ablation:illustrate}, if an agent has doubts about its teammates and leans towards a pessimistic outlook, it may evade rather than confront the enemies. However, with RIGM, all the agents can embrace a risk-seeking strategy. This promotes the agent to adopt a more optimistic perspective on its teammates, which in turn promotes enhanced collaboration. Hence, the incorporation of RIGM is crucial. The DIGM principle can not be used for such scenarios which require risk-sensitive policies.

To empirically study the example, we consider the following two non-RIGM cases on the 3s\_vs\_5z map of the SMAC benchmark.

\begin{enumerate}
     \item Each agent acts according to different risk-metrics for their actions. The joint optimal action is derived from each agent's individual risk-based optimal action, representing a non-RIGM approach. Specifically, we assign different risk-metrics to each agent: VaR 1.0, CVaR 1.0, and Wang 0.75, corresponding to risk-seeking, risk-neutral and risk-averse policies, respectively. 
     \item In the second case, each agent uses the same risk metric as before. Different the first case, when determining the approximated optimal joint action, we uniformly apply Wang 0.75 as the risk metric. 
\end{enumerate}

The empirical results are shown in Fig~\ref{fig:result:ablation:illustrate}, 'Case 1' and 'Case 2' correspond to the first and the second cases, respectively. The results indicate that not satisfying RIGM could lead to performance drop.

\begin{figure}[htbp]
	\centering
	\begin{minipage}[b]{\columnwidth} 
		\centering
		\includegraphics[width=0.40\columnwidth]{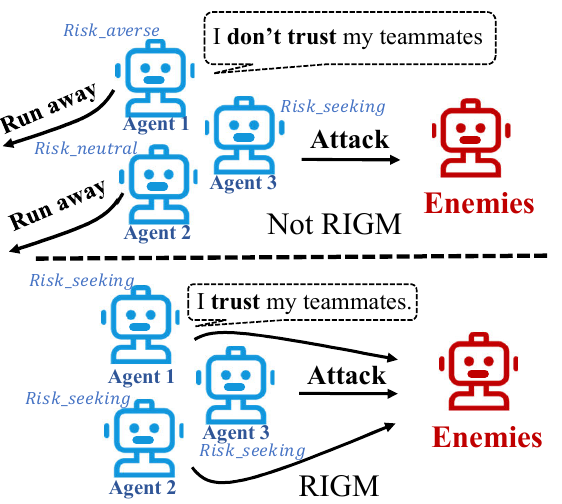}
		\includegraphics[width=0.40\columnwidth]{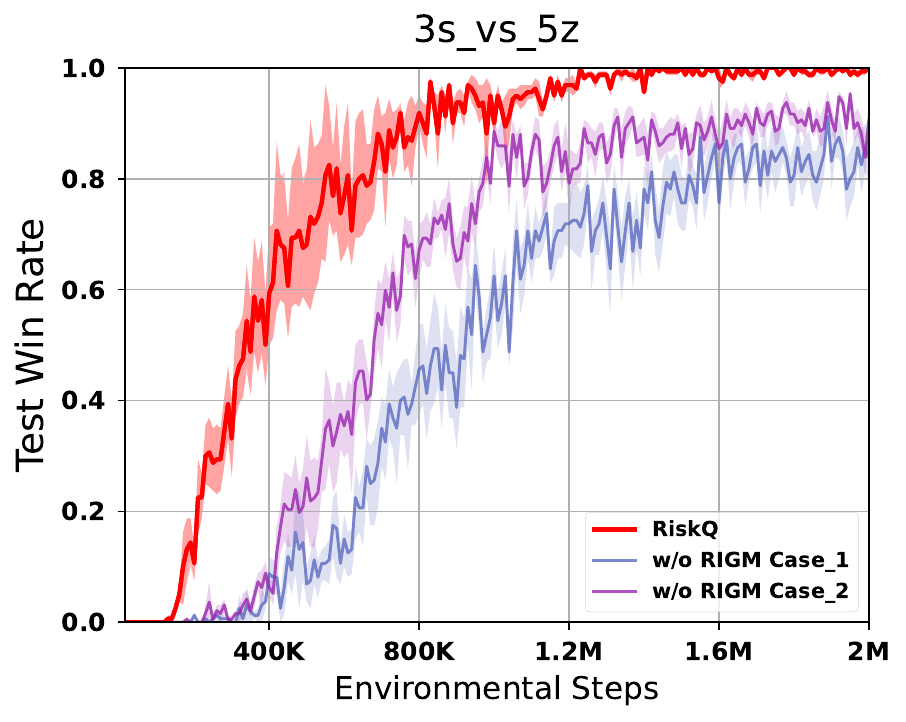}
	\end{minipage}
	
	\caption{Without RIGM, agents might choose to run away (the upper part of Fig(Left)). With RIGM, agents act in a consistent way. (the lower part of Fig(Left))). Fig(Right): RiskQ performs significantly better than two cases that do not satisfy the RIGM principle.}
	\label{fig:result:ablation:illustrate}\vspace*{-0.3cm}
\end{figure}

\subsubsection{Impact of representation limitations}\label{sec:exp:representation}

By modeling the percentile of joint return distribution $\theta(\omega)$ as the weighted sum of the percentiles $\theta_i(\omega)$ of each agent's distribution, RiskQ suffers from representation limitations. To study whether the representation limitations impact the performance of RiskQ, we design three variants of RiskQ: RiskQ-QMIX, RiskQ-Residual, and RiskQ-Residual-QMIX. 
\begin{enumerate}
\item RiskQ-QMIX (Sec.~\ref{appendix:riskq:qmix}):  RiskQ-QMIX models the monotonic relations among $\theta_i(\omega)$ and $\theta(\omega)$ in the manner of QMIX~\cite{QMIX}. We study its performance with respect to two risk metrics: VaR$_{0.2}$ and VaR$_{0.6}$.
\item RiskQ-Residual (Sec.~\ref{appendix:riskq:residual}): RiskQ-Residual models the joint value distribution without representation limitations by using residual functions~\cite{ResQ}. We study its performance for the risk metric Wang$_{0.75}$. 
\item RiskQ-Residual-QMIX(Sec.~\ref{appendix:riskq:residual:qmix}):  RiskQ-Residual-QMIX combines RiskQ-Residual and QMIX. We study its performance with respect to two risk metrics: VaR$_{0.2}$ and VaR$_{0.6}$
\end{enumerate}
 RiskQ-QMIX and RiskQ-Residual-QMIX satisfy the RIGM principle for the VaR metric, and RiskQ-Residual satisfies the RIGM for the VaR and DRM metrics. As it is shown in Figure~\ref{fig:result:ablation:representation}, albeit these variants have better representation ability, their performance is unsatisfied for the MMM2, 3s\_vs\_5z, 2s\_vs\_1sc, and 2c\_vs\_64zg. This suggest that the representation limitation does not significantly impact the performance of RiskQ.
\begin{figure}[htbp]
	\centering
	\begin{minipage}[b]{\columnwidth} 
		\centering
		\includegraphics[width=0.46\columnwidth]{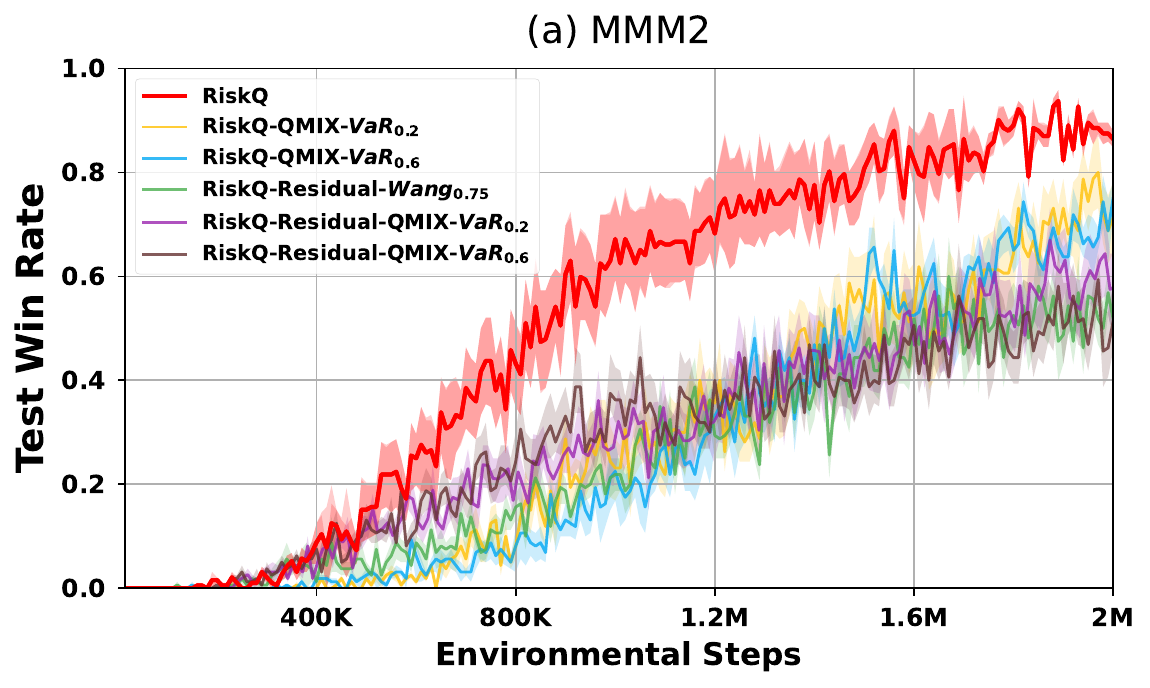} 
		\includegraphics[width=0.46\columnwidth]{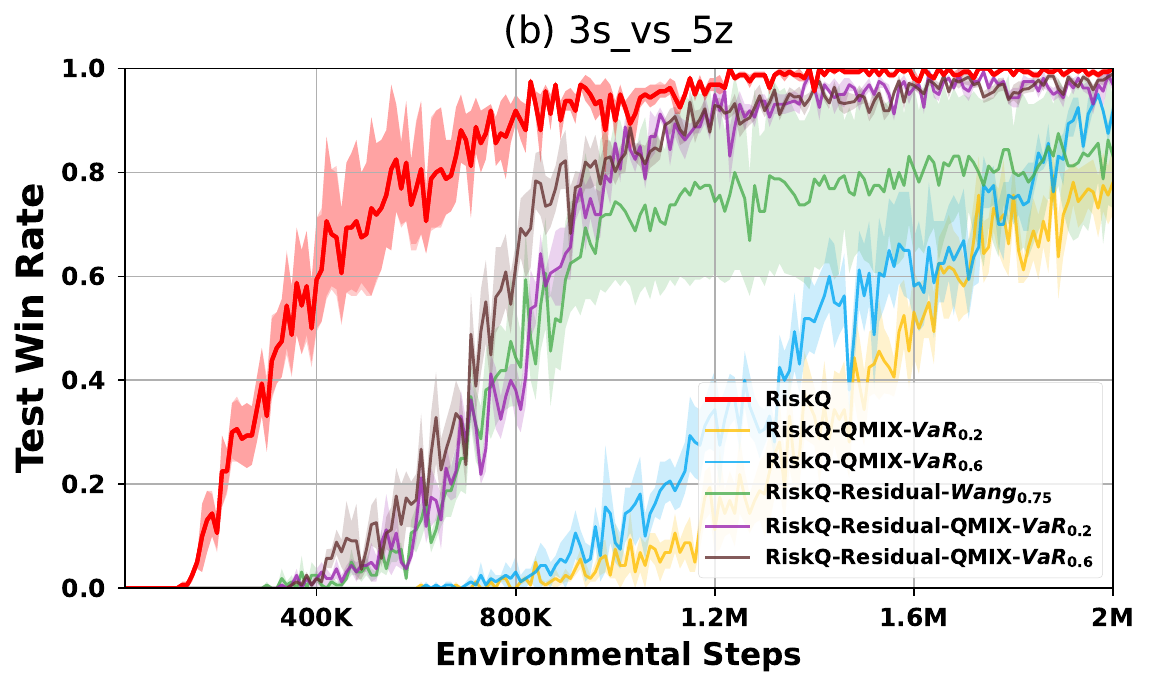}
	\end{minipage}

    \begin{minipage}[b]{\columnwidth} 
		\centering
		\includegraphics[width=0.46\columnwidth]{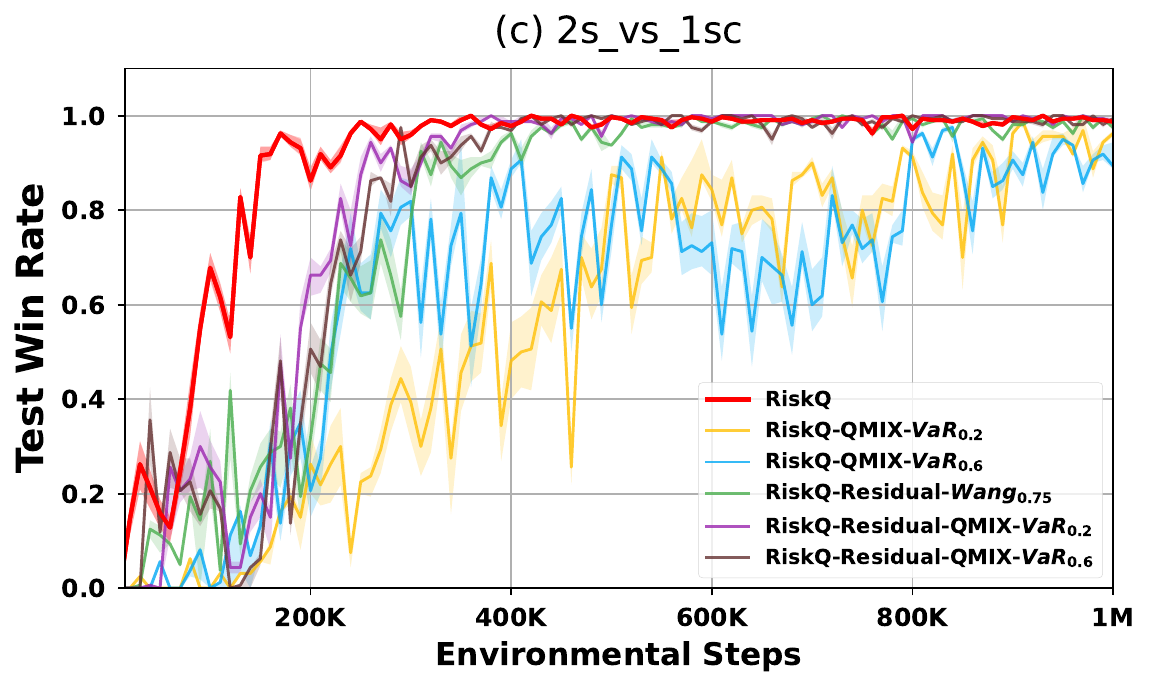}
		\includegraphics[width=0.46\columnwidth]{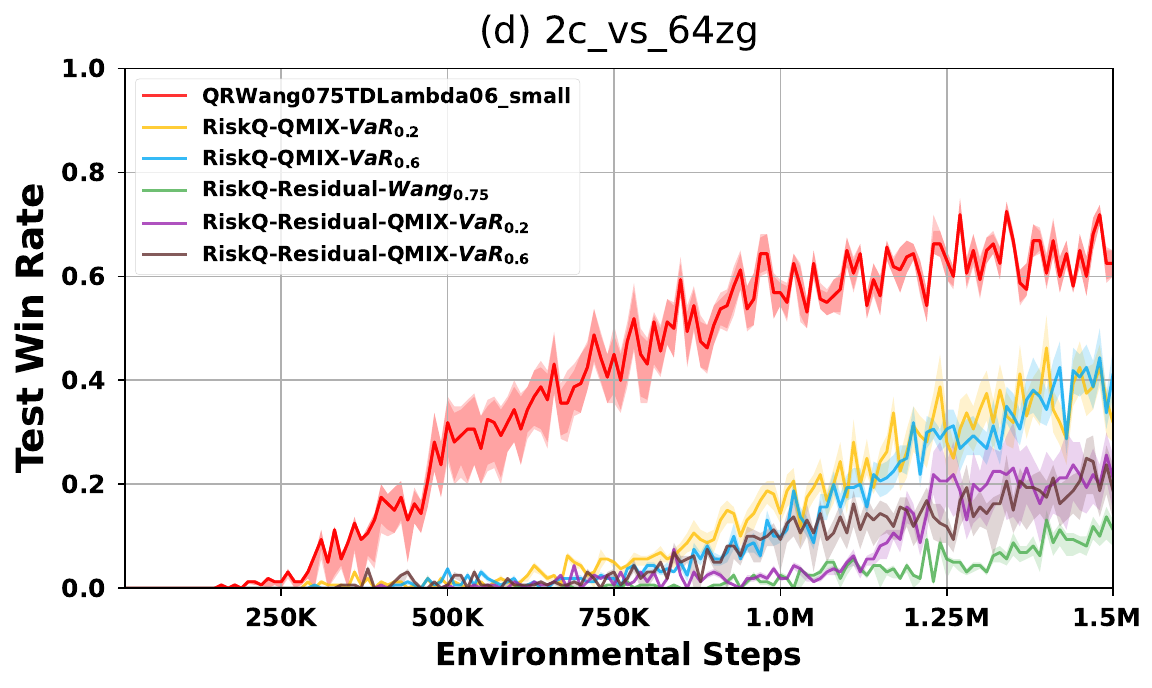}
	\end{minipage}
	
	\caption{Study on whether the representation limitations impact the performance of RiskQ.}
	\label{fig:result:ablation:representation}\vspace*{-0.3cm}
\end{figure}

\subsubsection{Impact of different designs of RiskQ mixers}

Moreover, in Figure~\ref{fig:result:ablation:mix}, we analyze different designs of the RiskQ mixer through two variants: RiskQ-Sum and RiskQ-Qi. 

\begin{enumerate}
\item RiskQ-Sum: RiskQ-Sum models the percentile $\theta(\ve \tau, \ve u, \omega)$ as the sum of the percentiles of per-agent's utilities $\sum_{i=1}^n \theta_i(\tau_i, u_i, \omega)$.

\item RiskQ-Qi:  RiskQ-Qi represents each percentile $\theta(\ve \tau, \ve u, \omega)$ as the expectation of state-action function.
\end{enumerate}

As shown in Figure~\ref{fig:result:ablation:mix}, both two variants perform inferior to RiskQ in most cases.

\begin{figure}[htbp]
	\centering
	\begin{minipage}[b]{\columnwidth} 
		\centering
		\includegraphics[width=0.32\columnwidth]{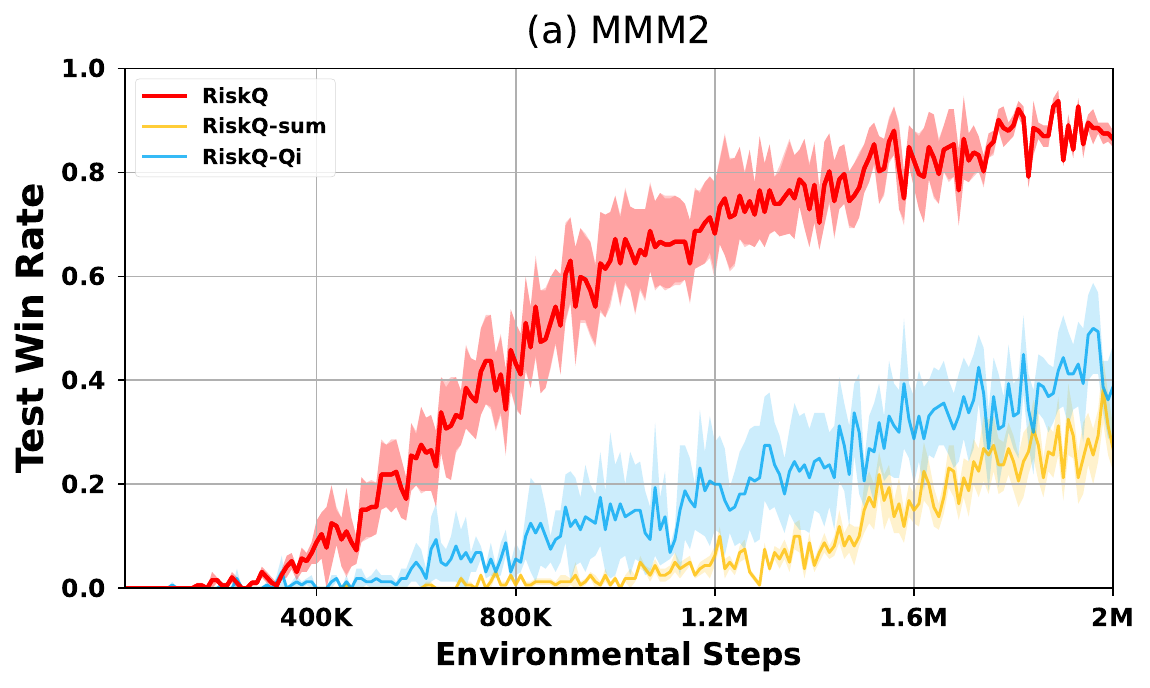} 
		\includegraphics[width=0.32\columnwidth]{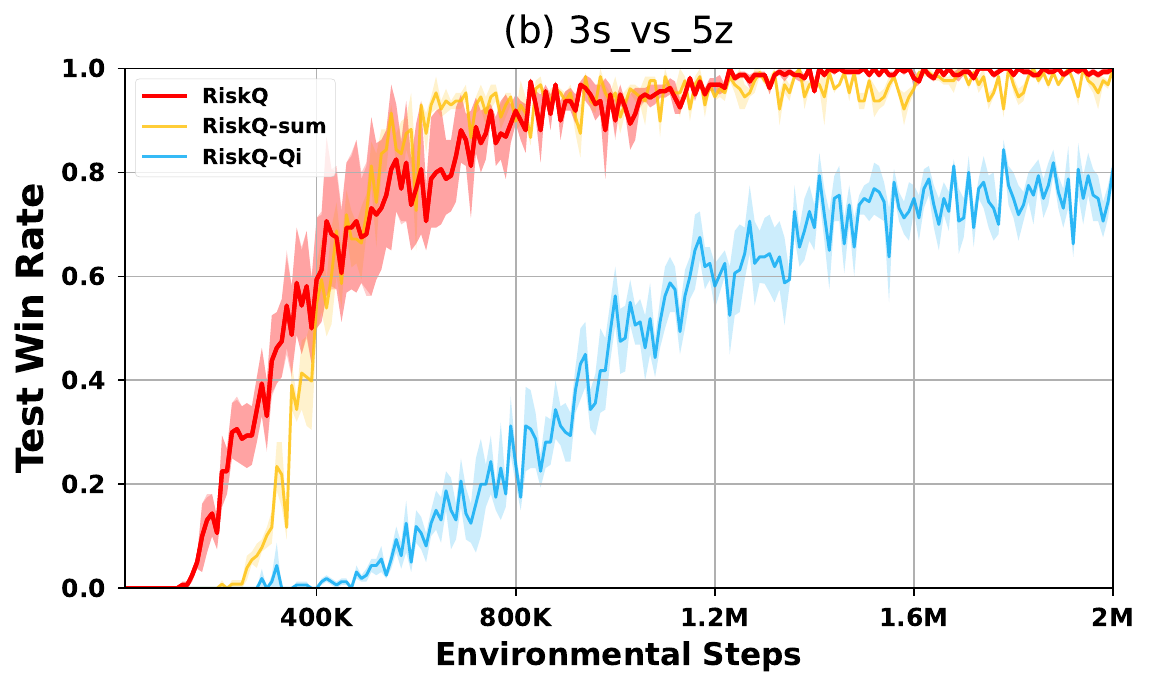}
        \includegraphics[width=0.32\columnwidth]{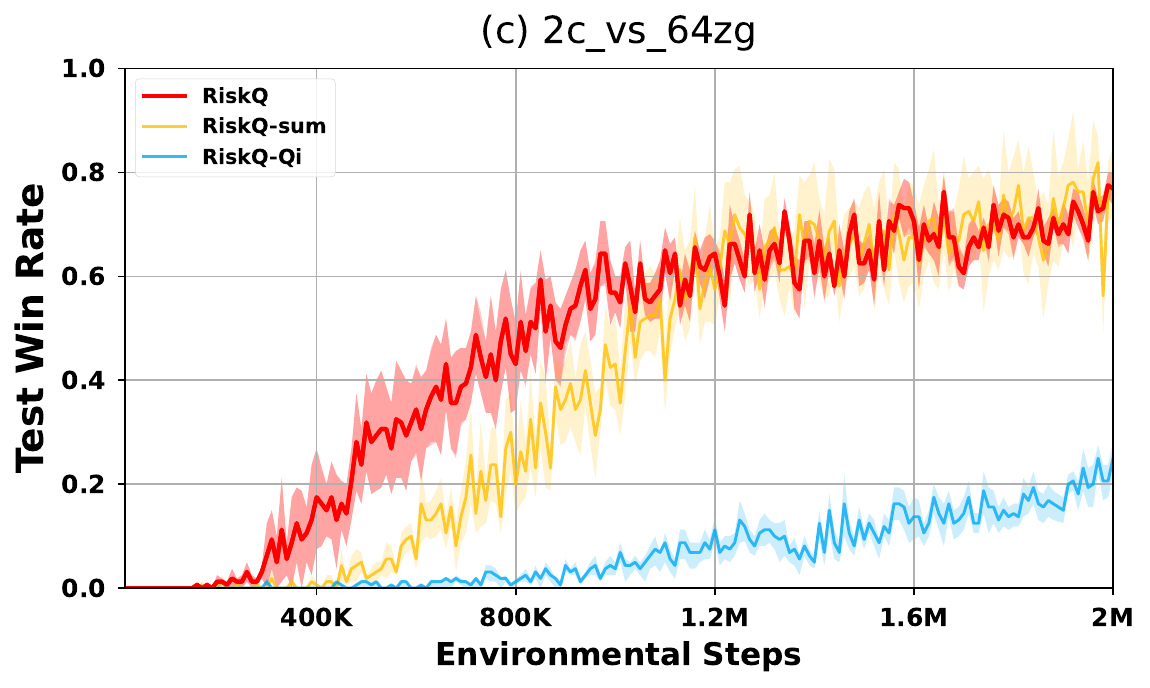} 
	\end{minipage}

    \begin{minipage}[b]{\columnwidth} 
		\centering
		\includegraphics[width=0.32\columnwidth]{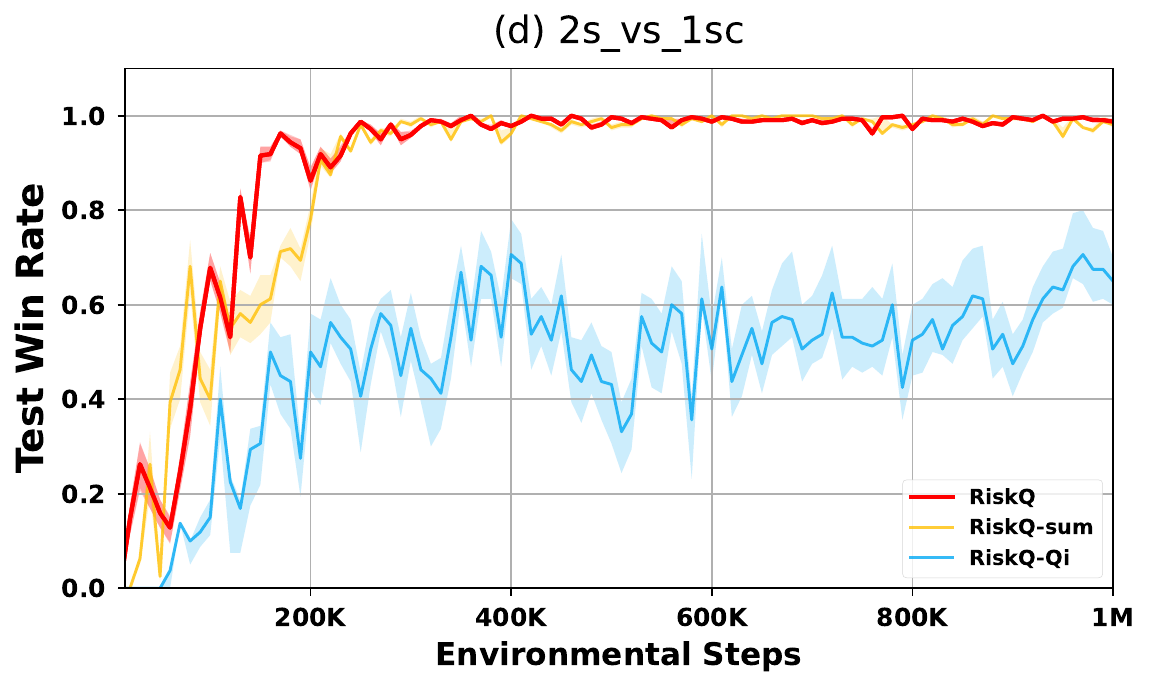}
		\includegraphics[width=0.32\columnwidth]{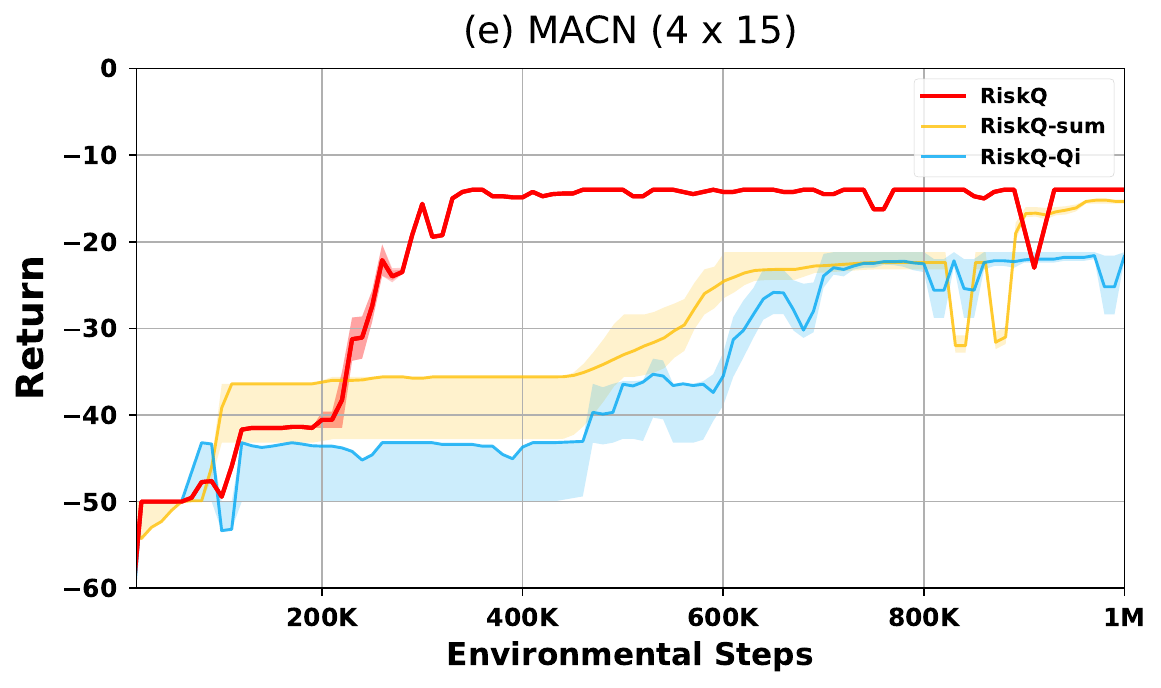}
        \includegraphics[width=0.32\columnwidth]{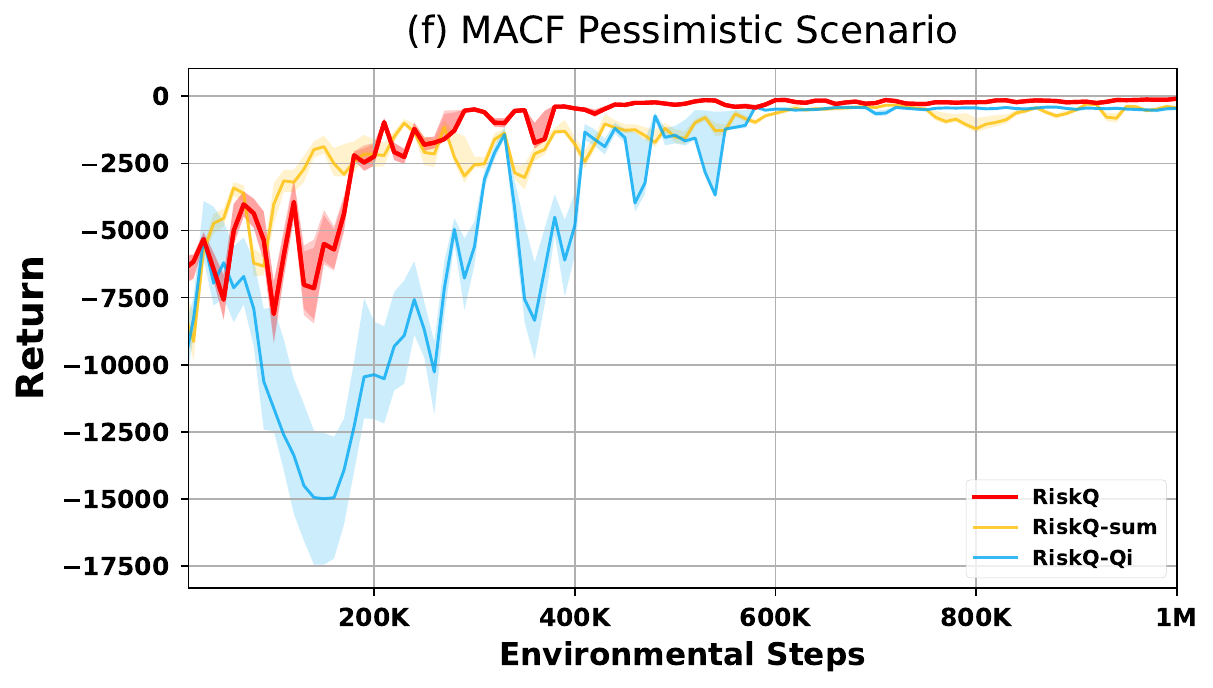}
	\end{minipage}
	
	\caption{Analysis of different designs of RiskQ mixer.}
	\label{fig:result:ablation:mix}\vspace*{-0.3cm}
\end{figure}

\subsubsection{Impact of different utility functions}
IQN~\cite{iqn} and QR-DQN~\cite{QRDQN} suffer from the crossing quantile issues that a low quantile $VaR_{0.2}$ could be larger than a high quantile $VaR_{0.5}$ due to its limitation. To address this limitations in MARL, we explore the following variants. 
\begin{enumerate}
    \item QR (sorted): This method models the distribution by fixing the quantiles in the style of QR-DQN~\cite{QRDQN}, and then sorts the quantiles to avoid the quantile-crossing issues.
    \item QR (unsorted): This method models the distribution by fixing the quantiles in the style of QR-DQN~\cite{QRDQN}, but does not sort the quantiles.
    \item IQN (sorted): This method models the distribution by sampling the quantiles in the way of IQN~\cite{iqn}, and sorts the quantiles to ensure a monotonic increasing order of the quantiles.
    \item IQN (unsorted): This method models the distribution by sampling the quantiles in the way of IQN~\cite{iqn}, but does not sort the quantiles.
    \item QR (non-crossing): This method adopts the non-crossing quantile regression to model the distribution as proposed in ~\cite{NCQR}.
\end{enumerate}

As illustrated in Figure~\ref{fig:result:ablation:utility}, we explored several different implementations of distribution utility. QR (sorted) has promising results. So we implement the utility function of RiskQ using QR (sorted).

\begin{figure}[!t]
	\centering
	\begin{minipage}[b]{\columnwidth} 
		\centering
		\includegraphics[width=0.46\columnwidth]{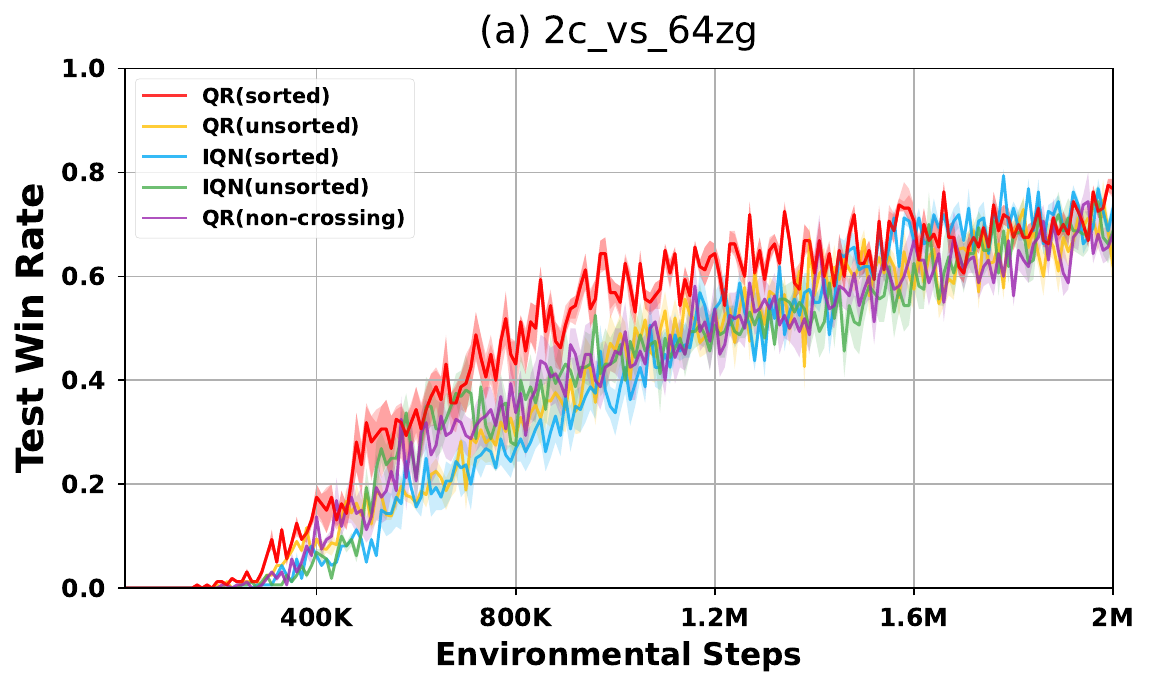} 
		\includegraphics[width=0.46\columnwidth]{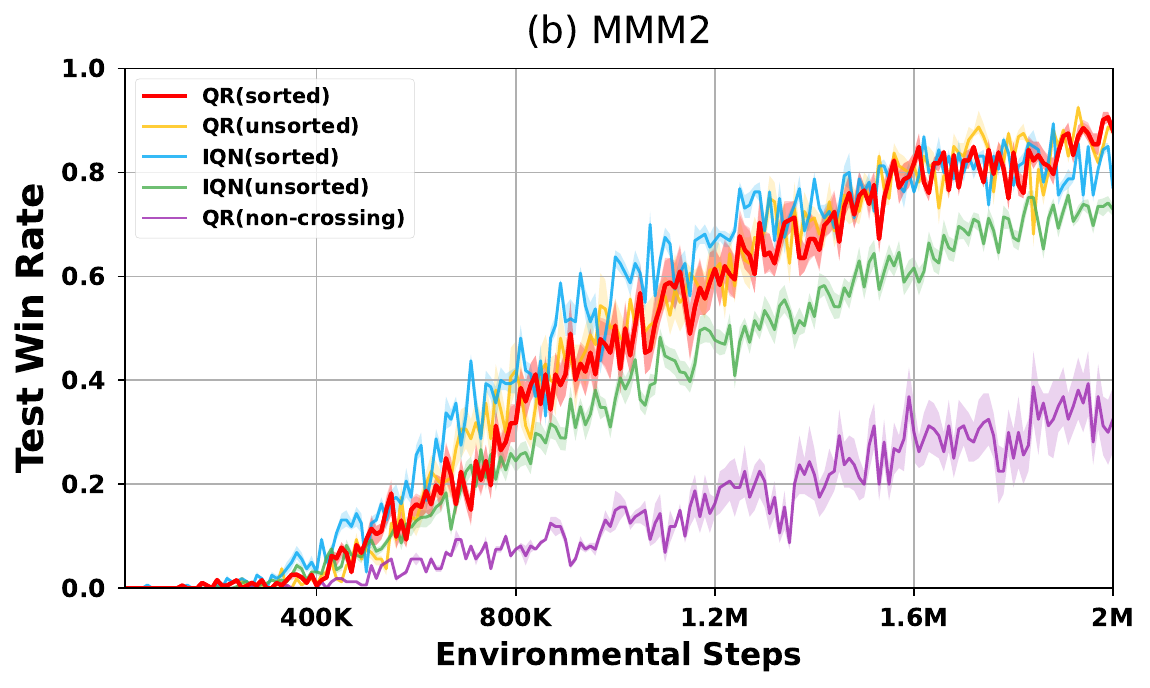}
	\end{minipage}
	
	\caption{Analysis of different implementations of distribution utility.}
	\label{fig:result:ablation:utility}\vspace*{-0.3cm}
\end{figure}

\section{Discussion}
\subsection{Is risk-sensitive RL related to safe RL?}

Risk management within RL can be roughly divided into two categories: safe and constrained RL, and distributional risk-sensitive RL. Safe and constraint RL formulates the risk as some kind of constraints within the policy optimization problem. Safe RL is often modeled as a Constrained Markov Decision Process (CMDP)\cite{constrain_mdp}, in which the target is to maximize the agent reward while making agents satisfy safety constraints. For instance, \cite{cpo} proposes a Lagrangian method that provides a theoretical bound on cost function while optimizing the policy; \cite{lyapunov} employs the Lyapunov approach to systematically convert dynamic programming and RL algorithms into their safe versions.

However, in complicated real-world applications, the form of risk may either be too complex or the constraints are hard to be explicitly formulated, safe-RL algorithms can be challenging to learn. In that case, distributional RL provides a way to utilize risk measures upon the return distributions for risk-sensitive learning. IQN\cite{iqn} considers the risk as the distortion risk measures upon sampling distributions to realize risk-sensitive policies. \cite{risk_averse_offline} learns a full return distribution for its critic and optimizes the actor's policy according to a risk-related metric (such as $\operatorname{CVaR}$) for offline RL.

\subsection{Which behaviors in standard SMAC scenarios can potentially involve risk?}

Risk refers to the uncertainty in MARL. It is not refer to events that will incur negative outcomes. In the SMAC scenarios, when an agent performs an action, there are three possible sources of risk: observation uncertainty, environmental uncertainty, and agent policy uncertainty. In standard SMAC, as each agent can only access partial observation, their policies contain more randomness than policies with access to the whole state. Environmental uncertainty is caused by the stochasticity of state transitions and the reward function. Agent policy uncertainty comes from the learning process of agents and their exploratory behaviors (e.g., epsilon-greedy). In the standard SMAC, we find that employing a risk-averse policy (Wang 0.75) can lead to better performance. This could be attributed to the reason that the risk-averse policy incorporates a strategy to stay alive longer; this is correlated with the game's objective of eliminating all opponents while staying alive. 

In a standard game environment, it is uncommon that using risk-sensitive policies can lead to better performance than risk-neutral policies. For example, IQN\cite{iqn} finds that risk-averse policies can obtain better results than risk-neutral counterparts in Atari game environments (e.g., ASTERIX and ASSAULT). Furthermore, they also find that risk-seeking policies can obtain better performance in 3 out of 6 standard games.

\subsection{Can RiskQ make the learned CDF function converge to the real distribution?}\label{appendix:convergence}

RiskQ uses implicit quantile network (IQN)~\cite{iqn} as the basis to learn value distribution, so it shares the same converge property as IQN. 

As stated in Chapter 7.5 and 7.6 of ~\cite{dis_book} and Theorem 4.3 of ~\cite{statistics}, the greedy distributional Bellman update operator of IQN is not a contraction mapping. This is an inherent drawback of the distributional RL. However, the approximation error of some risk metrics can be made arbitrary small by increasing the number of statistics (Theorem 4.7 of ~\cite{statistics}) for the policy evaluation case. 

As we know, Lim and Malik ~\cite{LimM22} proposes a modification to IQN with a new distributional Bellman operator recently. They show that the optimal CVaR policy corresponds to a fixed point of the new operator. However, it is unclear whether it converges in general settings. We have combined RiskQ with ~\cite{LimM22} by replacing IQN with it. As shown in Figure ~\ref{fig:result:ablation:converge}, the new method, LimAndMalki, performs poorly. This suggests that remedying the non-contraction mapping issues may not be important enough for performance improvement as existing risk-sensitive RL methods (e.g., IQN) are already working well.

\begin{figure}[htbp]
	\centering
	\begin{minipage}[b]{\columnwidth} 
		\centering
		\includegraphics[width=0.40\columnwidth]{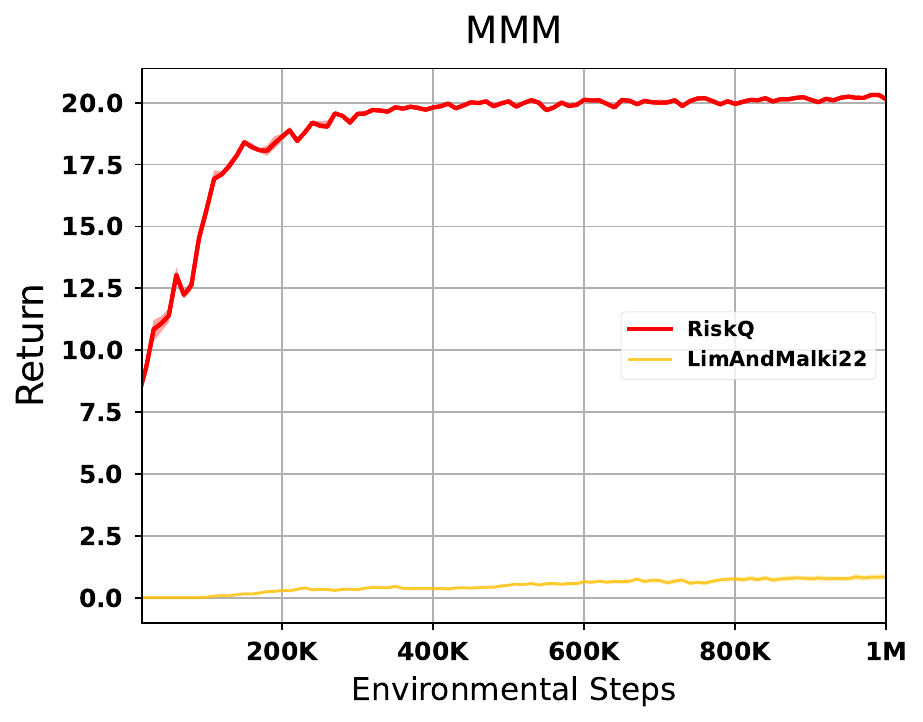}
		\includegraphics[width=0.40\columnwidth]{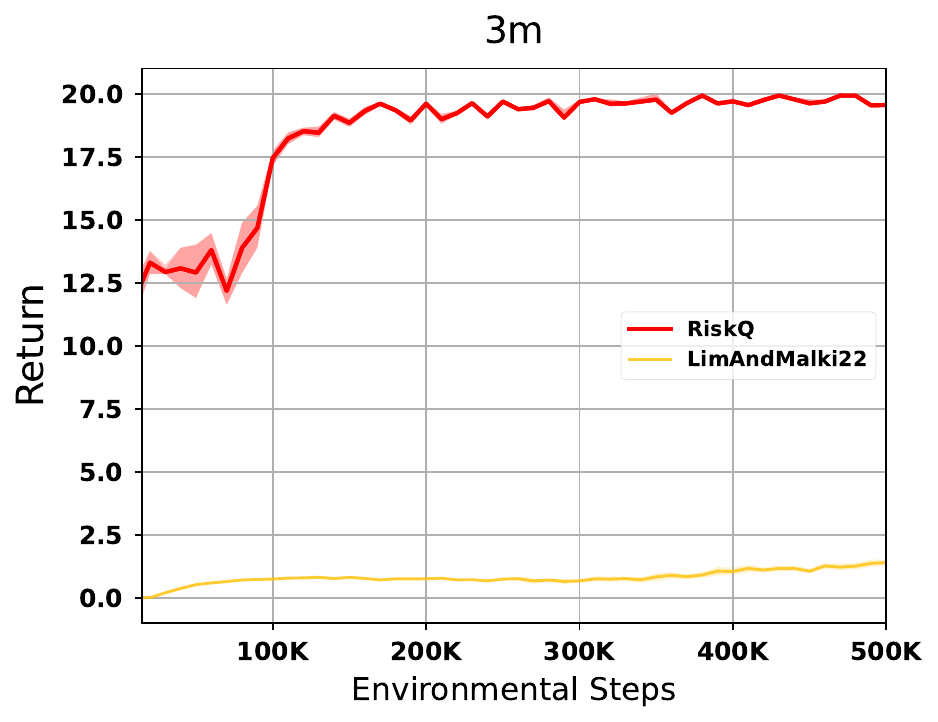}
	\end{minipage}
	
	\caption{Replacing the IQN (Dabney et al. ICML 2018) used by RiskQ with a risk-sensitive algorithm (Lim and Malik NeurIPS 22) whose optimal CVaR policy corresponds to a fixed point of it.}
	\label{fig:result:ablation:converge}\vspace*{-0.3cm}
\end{figure}

\subsection{In what way can risk distortion actually be used for exploration-exploitation trade-off?}\label{appendix:exploration}

Likelihood Quantile Networks (LQN)~\cite{lqn} assigns various learning rates for samples (from agent exploration or suboptimality) based on likelihoods, which measures return distribution differences. For exploration, LQN gradually anneals the parameters of its risk-based action selection strategy with time.

We have combined RiskQ with the LQN riks-parameters annealing procedure, and call it as RiskQ+LQN. As is depicted in Fig.~\ref{fig:result:ablation:exploration}, RiskQ+LQN performs slightly weaker than RiskQ. This suggests that besides considering the uncertainty for exploration, coordinated exploration may be needed as well. It also provides a new direction for future work.

\begin{figure}[htbp]
	\centering
	\begin{minipage}[b]{\columnwidth} 
		\centering
		\includegraphics[width=0.40\columnwidth]{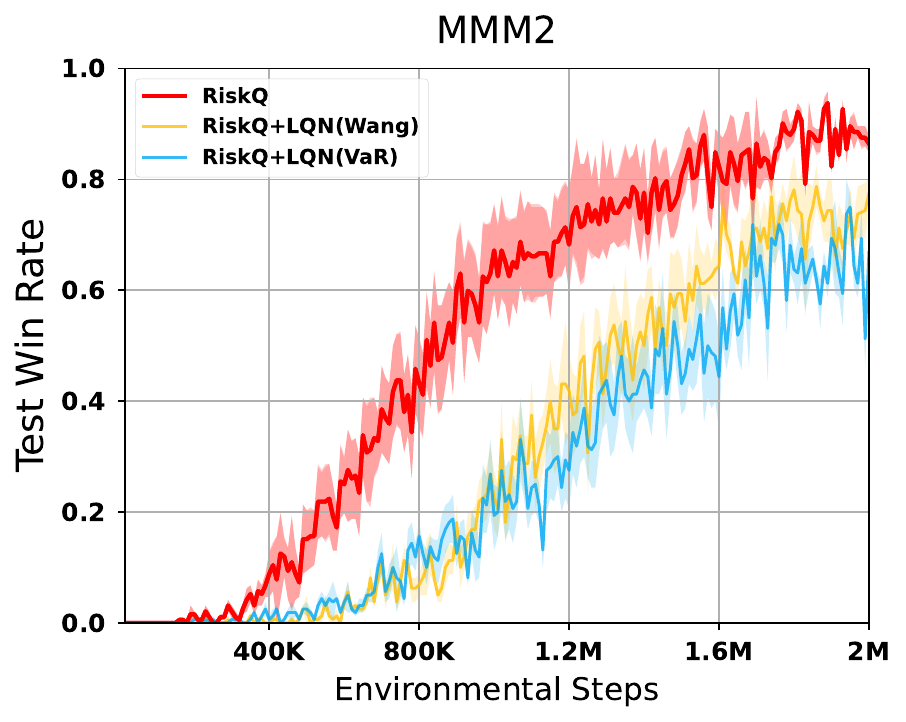}
		\includegraphics[width=0.40\columnwidth]{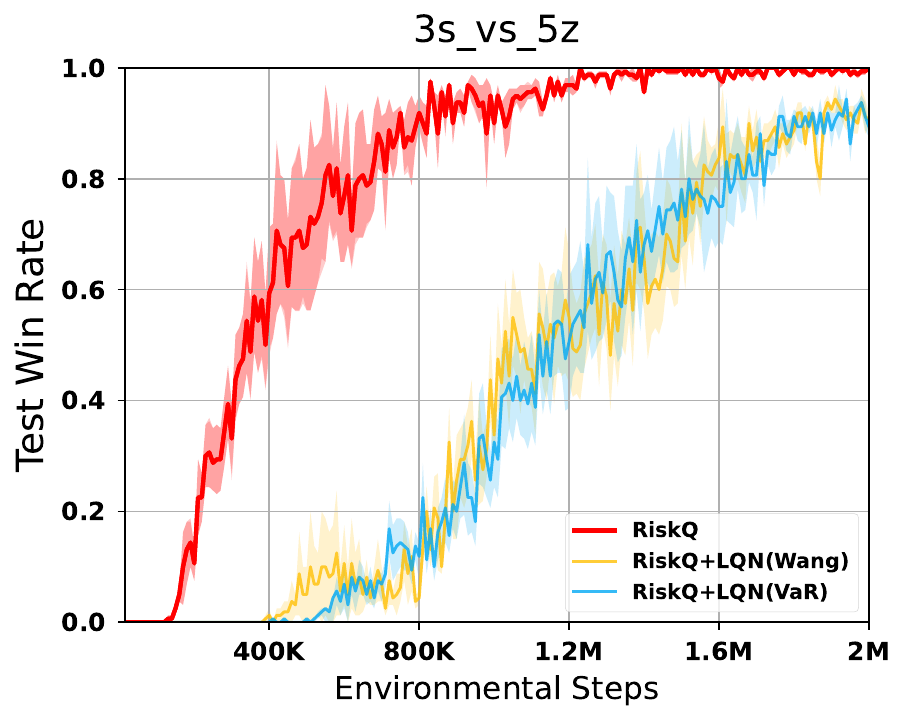}
	\end{minipage}
	
	\caption{Combining RiskQ with the risk-parameter annealing procedure of likelihood quantile network (Lyu et al. 2019): MMM2 (Left) and 3s\_vs\_5z (Right). RiskQ+LQN(Wang) uses the Wang metric and anneals the risk parameter from 0.75 to -0.75. RiskQ+LQN(VaR) uses the VaR metric, and the risk parameter gradually anneals from 0.8 to 0.1.}
	\label{fig:result:ablation:exploration}\vspace*{-0.3cm}
\end{figure}

\end{document}